\DeclarePairedDelimiterX{\inp}[2]{\langle}{\rangle}{#1, #2}
\newcommand{\expect}{\mathbf{E} \expectarg}
\DeclarePairedDelimiterX{\expectarg}[1]{[}{]}{%
  \ifnum\currentgrouptype=16 \else\begingroup\fi
  \activatebar#1
  \ifnum\currentgrouptype=16 \else\endgroup\fi
}
\newcommand{\innermid}{\nonscript\;\delimsize\vert\nonscript\;}
\newcommand{\activatebar}{%
  \begingroup\lccode`\~=`\|
  \lowercase{\endgroup\let~}\innermid 
  \mathcode`|=\string"8000
}
\newcommand{\EE}{\mathbf{E}}
\newcommand{\ones}{\mathbf{1}}
\newcommand{\RR}{\mathbb{R}}
\newcommand{\NN}{\mathbb{N}}
\newcommand{\dom}{{\rm dom}\,}
\newcommand{\bcdeg}{$\texttt{BCDEG}$}
\newcommand{\bcdegls}{$\texttt{BCDEG-LS}$}
\newcommand{\bcpr}{$\texttt{BCPR}$}
\newcommand{\abcpr}{$\texttt{A-BCPR}$}
\newcommand{\bcprls}{$\texttt{BCPR-LS}$}
\newcommand{\pgls}{$\texttt{PGLS}$}
\newcommand{\prls}{$\texttt{PRLS}$}
\newcommand{\pr}{$\texttt{PR}$}
\newcommand{\DKL}{D_{\rm KL}}
\DeclareMathOperator*{\argmin}{arg\,min}
\newtheorem{theorem}{Theorem}
\newtheorem{lemma}{Lemma}
\newtheorem{corollary}{Corollary}
\title{Fast and Interpretable Dynamics for Fisher Markets \\ via Block-Coordinate Updates}
\author {
    Tianlong Nan,
    Yuan Gao,
    Christian Kroer
}
\begin{document}
\maketitle

\begin{abstract}
    We consider the problem of large-scale Fisher market equilibrium computation through scalable first-order optimization methods. 
    It is well-known that market equilibria can be captured using structured convex programs such as the Eisenberg-Gale and Shmyrev convex programs.
    Highly performant deterministic full-gradient first-order methods have been developed for these programs. 
    In this paper, we develop new block-coordinate first-order methods for computing Fisher market equilibria, and show that these methods have interpretations as t\^atonnement-style or proportional response-style dynamics where either buyers or items show up one at a time.
    We reformulate these convex programs and solve them using proximal block coordinate descent methods, a class of methods that update only a small number of coordinates of the decision variable in each iteration.
    Leveraging recent advances in the convergence analysis of these methods and structures of the equilibrium-capturing convex programs, we establish fast convergence rates of these methods.
\end{abstract}
 
\section{Introduction} 
\label{sec:intro}

In a market equilibrium (ME) a set of items is allocated to a set of buyers via a set of prices for the items and an allocation of items to buyers such that each buyer spends their budget optimally, and all items are fully allocated. 
Due to its rich structural properties and strong fairness and efficiency guarantees,
ME has long been used to develop fair division and online resource allocation mechanisms~\citep{gao2021infinite,aziz2014cake,barman2018finding,arnsperger1994envy}.
Market model and corresponding equilibrium computation algorithms have been central research topics in market design and related areas in economics, computer science and operations research with practical impacts~\citep{scarf1967computation,kantorovich1975mathematics,othman2010finding,daskalakis2009complexity,cole2017convex,kroer2019computing}. 
More specifically, many works in market design rely on the assumption that ME can be computed efficiently for large-scale market instances. For example, the well-known fair division mechanism without money \emph{competitive equilibrium from equal incomes} (CEEI) requires computing a ME of a Fisher market under uniform buyer budgets \citep{varian1974equity}.
Recent work has also established close connections between market equilibria and important solution concepts in the context of large-scale Internet markets, such as \emph{pacing equilibria} in repeated auctions \citep{conitzer2018multiplicative,conitzer2019pacing, kroer2022market}. 
Motivated by the classical and emerging applications described above, we are interested in developing efficient equilibrium computation algorithms for large-scale market instances. 
In general, computing a ME is a hard problem~\citep{chen2009spending, vazirani2011market, othman2016complexity}. 
However, for the case of Fisher markets and certain classes of utility functions, efficient algorithms are known~\citep{devanur2008market,zhang2011proportional,gao2020first}, often based on solving a specific convex program---whose solutions are ME and vice versa---using an optimization algorithm. In this paper, we focus on two well-known such convex programs, namely, the Eisenberg-Gale (EG)~\citep{eisenberg1959consensus,eisenberg1961aggregation} and Shmyrev convex programs~\citep{shmyrev2009algorithm,cole2017convex}. 


Most existing equilibrium computation literature studies the case of a static market where all buyers and items are present in every time step of the equilibrium computation process. 
In contrast, we study a setting where only a random subset of buyer-item pairs show up at each time step. 
Such a setting is well-motivated from a computational perspective, since stochastic methods are typically more efficient for extremely large problems. Secondly, our model allows us to model new types of market dynamics.
We make use of recent advances in \emph{stochastic first-order optimization}, more specifically, block-coordinate-type methods, to design new equilibrium computation algorithms for this setting. 
The resulting equilibrium computation algorithms have strong convergence guarantees and consistently outperform deterministic full-information algorithms in numerical experiments.
In addition, many of the optimization steps not only give efficient update formulas for market iterates, but also translate to interpretable market dynamics.

\paragraph{Summary of contribution.}
We propose two stochastic algorithms for computing large-scale ME: \emph{(proximal) block-coordinate descent on EG} (\bcdeg) and \emph{block-coordinate proportional response} (\bcpr). These algorithms are derived by applying stochastic block-coordinate-type algorithms on reformulated equilibrium-capturing convex programs.
More specifically, \bcdeg\ is based on (proximal) stochastic block coordinate descent (BCD) and \bcpr\ is based on a non-Euclidean (Bregman) version of BCD.
We show that these algorithms enjoy attractive theoretical convergence guarantees, and discuss important details for efficient implementation in practice. 
Furthermore, we show that the Euclidean projection onto the simplex in \bcdeg\ (and other preexisting projected-gradient-type methods) has a t\^atonnement-style interpretation.
We then demonstrate the practical efficiency of our algorithms via extensive numerical experiments on synthetic and real market instances, where we find that our algorithms are substantially faster than existing state-of-the-art methods such as proportional response dynamics. 

\paragraph{Preliminaries and notation.}
Unless otherwise stated, we consider a \emph{linear Fisher market} with $n$ \emph{buyers} and $m$ \emph{items}. 
We use $i\in [n]$ to denote a buyer and $j\in [m]$ to denote an item.
Each buyer $i$ has a \emph{budget} $B_i > 0$ and each item has supply one.
An \emph{allocation} (or \emph{bundle}) for buyer $i$ is a vector $x_i \in \RR^m_+$ specifying how much buyer $i$ gets of each item $j$. 
Given $x_i$, buyer $i$ gets utility $\langle v_i, x_i\rangle = \sum_j v_{ij} x_{ij}$, where $v_i\in \RR^m_+$ is their valuation vector. Given \emph{prices} $p\in \RR^m_+$ (i.e., price of item $j$ is $p_j$) on all items, buyer $i$ pays $\langle p, x_i\rangle = \sum_j p_j x_{ij}$ for $x_i$.
Given prices $p$ and budget $B_i$, a bundle $x_i$ is \emph{budget feasible} for buyer $i$ if $\inp{p}{x_{i}} \le B_{i}$. 
We also use $x_{\cdot j} \in \mathbb{R}^n_+$ to denote a vector of amounts of item $j$ allocated to all buyers. 
The \emph{demand set} of buyer $i$ is the set of budget-feasible utility-maximizing allocations:
\begin{align}
    \mathcal{D}_{i}(p) = \arg\max_{x_i}\{ u_i = \inp{v_i}{x_i}: \inp{p}{x_i} \le B_i \}. \label{actual-demand-function} \tag{D}
\end{align}
A market equilibrium (ME) is an allocation-price pair $(x^*,p^*)$ such that $x^*_i \in \mathcal{D}_i(p^*)$ for all $i$ and $\sum_i x^*_{ij} \leq 1$ for all $j$, with equality if $p^*_j >0$. 

\section{Related Work}
\label{sec:relatedworks}

Since the seminal works by Eisenberg and Gale \citep{eisenberg1959consensus,eisenberg1961aggregation}, there has been an extensive literature on equilibrium computation for Fisher markets, often based on convex optimization characterizations of ME \citep{devanur2008market,zhang2011proportional,birnbaum2011distributed,cole2017convex,gao2020first,garg2006auction}. 
Equilibrium computation algorithms for more general market models with additional constraints---such as indivisible items and restrictions on the set of permissible bundles for each buyer---have also been extensively studied \citep{othman2010finding,budish2016course};
these algorithms are often based on approximation algorithms, mixed-integer programming formulations, and local search heuristics. 

In \citet{gao2020first}, the authors considered three (deterministic) first-order optimization methods, namely projected gradient (PG), Frank-Wolfe (FW) and mirror descent (MD) for solving convex programs capturing ME. 
To the best of our knowledge there are no existing results on block-coordinate methods for ME.
In the optimization literature there is an extensive and ongoing literature on new block-coordinate-type algorithms and their analysis (see, e.g., \citet{tseng2001convergence,wright2015coordinate,beck2013convergence,hanzely2019accelerated,hanzely2021fastest,liu2015asynchronous,nesterov2012efficiency,richtarik2014iteration,gao2020randomized,attouch2013convergence,zhang2020new,reddi2016fast}).
As mentioned previously, our \bcdeg\ algorithm is based on \emph{proximal block-coordinate descent} (PBCD) applied to EG.
Linear convergence of the mean-square error of the last iterate of PBCD for nonsmooth, finite-sum, composite optimization (with an objective function of the form $F(x)=\sum_i f_i(x) + \psi(x)$) has been established under different error bound conditions \citep{richtarik2014iteration,karimi2016linear,reddi2016fast}.
For \bcpr, we adopt the analysis of a recently proposed \emph{non-Euclidean} (Bregman) PBCD \citep{gao2020randomized}, which in turn made use of the convergence theory developed in \citet{bauschke2017descent}.

\section{Block Coordinate Descent Algorithm for the EG Program} 
\label{sec:eg-bcd}

(Proximal) Block coordinate descent methods (BCD) are often used to solve problems whose objective function consists of a smooth part and a (potentially) nonsmooth part. The second part is typically block-coordinate-wise separable. 
BCD algorithms update only a small block of coordinates at each iteration, which makes each iteration much cheaper than for deterministic methods, and this enables scaling to very large instances.
The EG convex program, which captures market equilibria, can be written in a form amenable to proximal block coordinate descent method.
The EG convex program for buyers with linear utility functions is
\begin{align}
    \begin{split}
        \max_x \quad & \sum\nolimits_i B_i \log{u_i} \\
        \text{s.t.} \quad & u_i \le \inp{v_i}{x_i} \quad \forall\; i \\
        & \sum\nolimits_i x_{ij} \le 1 \quad \;\forall\; j \\
        & x \ge 0. 
    \end{split}
    \tag{EG}
    \label{program:eg}
\end{align}
Any optimal solution $x^*$ to \eqref{program:eg} and the (unique) optimal Lagrange multipliers $p^*\in \RR^m_+$ associated with the constraints $\sum_i x_{ij} \leq 1$, $j \in [m]$ forms a market equilibrium. 
In fact, this holds more generally if the utility functions are concave, continuous, nonnegative, and homogeneous with degree $1$.

To apply BCD, note that \eqref{program:eg} is of the following form:
\begin{equation}
    \min_{x \in \RR^{m \times n}} F(x) := f(x) + \psi(x) = f(x) + \sum\nolimits_j \psi_{j}(x_{\cdot j})
    \label{eq:eg-bcd-formulation} 
\end{equation}
where $f(x) = - \sum\nolimits_i {B_i \log{\inp{v_i}{x_i}}}$
and $\psi_j (x_{\cdot j}) = 0$ if $\sum_i x_{ij} \le 1, x_{\cdot j} \ge 0$ and $+\infty$ otherwise. 

Thus, the nonsmooth term $\psi(x)$ decomposes along the items $j$, and we can therefore treat $x$ as a set of blocks: each item $j$ has a block of allocation variables corresponding to how much of item $j$ is given to each buyer. We use $x_{\cdot j} = (x_{1j},\ldots, x_{nj})$ to denote the $j$'th block.
Given the full gradient of $f$ at $x$ as $\nabla f(x)$, we will also need the partial gradient w.r.t. the $j$th block $x_{\cdot j}$, which we denote as $\nabla_{\cdot j} f(x)$, that is, $\nabla_{\cdot j} f(x) = \left( \nabla_{1j} f(x), \ldots, \nabla_{nj} f(x) \right)$. 


In each iteration $t$ of the BCD method, we first choose an index $j' \in [m]$ at random, with a corresponding stepsize $\eta_{j'}$. Then, the next iterate $x^+$ is generated from $x$ via $x_{\cdot j}^+ = T_j(x)$ if $j = j'$ and $x_{\cdot j}$ otherwise, 
where $T_j(x)$ equals 
\begin{equation} 
\arg\min_{y_{\cdot j}} \inp{\nabla_{\cdot j} f(x)}{y_{\cdot j} - x_{\cdot j}} + \frac{1}{\eta_{j}} \lVert y_{\cdot j} - x_{\cdot j} \rVert^2 + \psi_j(y_{\cdot j}) 
\label{eq:bcd-update}. 
\end{equation}

The above proximal mapping is equivalent to 
\begin{equation}
    T_{j}(x) = \text{Proj}_{\Delta_n} \Big( x_{\cdot j} - \eta_{j'}\nabla_{\cdot j} f(x) \Big), 
\end{equation} 
so we can generate $T_{j}(x) \in \mathbb{R}^{n}$ via a single projection onto the $n$-dimensional simplex. 
Since the projection is the most computationally expensive part, 
our method ends up being cheaper than full projected gradient descent by a factor of $m$.

To make sure $\nabla_{\cdot j} f(x)$ exists, we need to bound buyer utilities away from zero at every iteration. 
To that end, let $\underline u_i = \langle v_i, (B_i/\sum_{l=1}^n B_l) \vec{1} \rangle$ be the utility of the proportional allocation.
Then we perform ``quadratic extrapolation'' where we replace the objective function $f(x)$ with the function $\tilde{f}(x) = \sum_i \tilde{f}_i(x_i) = \sum_i \tilde{g}_i(u_i)$ where $\tilde{g}_i(u_i) = - B_i \log{u_i}$ if $u_i \ge \underline{u}_i$ and $\log{\underline{u}_i} - \frac{B_i}{\underline{u}_i} \left( u_i - \underline{u}_i \right) + \frac{B_i}{2 \underline{u}_i^2} \left(u_i - \underline{u}_i \right)^2$ otherwise. 
For \eqref{program:eg}, replacing $f$ with $\tilde{f}$ does not affect any optimal solution~\citep[Lemma 1]{gao2020first}. 

We will also need the following Lipschitz bound which ensures that the iterates are descent steps, meaning that the expected objective value is non-increasing as long as the stepsizes are not too large.
The upper bound on the allowed stepsizes (that ensure descent iterates) for a specific block is governed by the Lipschitz constant w.r.t. that block of coordinates. More details and proofs can be found in \cref{app:bcdeg}. 

\begin{lemma}
For any $j \in [m]$, let 
\begin{equation} 
    L_j = \max_{i\in [n]}{\frac{B_i v_{ij}^2}{\underline{u}_i^2}}.
\label{eq:stepsize-bcd-eg} 
\end{equation}
Then, for all $x, y \in \mathcal{X}$ such that $x, y$ differ only in the $j$th block, we have
\begin{equation}
    \tilde{f}(y) \le \tilde{f}(x) + \inp{\nabla_{\cdot j} \tilde{f}(x)}{y - x} + \frac{L_{j}}{2}{{\lVert y - x \rVert}^2}. 
    \label{eq:coordinate-continuous-3}
\end{equation}
\label{lemma:coordinate-continuous}
\end{lemma}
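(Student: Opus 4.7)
The plan is to exploit the composition structure $\tilde f(x)=\sum_i \tilde g_i(u_i(x))$ with $u_i(x)=\langle v_i,x_i\rangle$, reduce the block-Lipschitz bound on $\tilde f$ to a scalar second-derivative bound on each $\tilde g_i$, and then use the simple observation that only one block changes between $x$ and $y$.

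First, I would verify that $\tilde g_i$ is $C^2$ with a uniformly bounded second derivative. Direct differentiation gives $\tilde g_i''(u)=B_i/u^2$ on $u\ge\underline u_i$ and $\tilde g_i''(u)=B_i/\underline u_i^2$ on $u<\underline u_i$, and both pieces (along with the first derivative) agree at $u=\underline u_i$ by the way the quadratic extrapolation is defined. Since $B_i/u^2\le B_i/\underline u_i^2$ for $u\ge\underline u_i$, we conclude the uniform bound $\tilde g_i''(u)\le B_i/\underline u_i^2$ on its entire domain.

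Next, because $y$ differs from $x$ only in the $j$-th block, linearity of $u_i$ gives $u_i(y)-u_i(x)=v_{ij}(y_{ij}-x_{ij})$. Apply the one-dimensional second-order Taylor expansion with integral remainder to $\tilde g_i$ on the segment $[u_i(x),u_i(y)]$ and use the bound on $\tilde g_i''$:
\begin{equation*}
\tilde g_i(u_i(y)) \le \tilde g_i(u_i(x)) + \tilde g_i'(u_i(x))\, v_{ij}(y_{ij}-x_{ij}) + \frac{B_i}{2\underline u_i^2}\, v_{ij}^2 (y_{ij}-x_{ij})^2.
\end{equation*}
Summing over $i$ and using the chain rule identity $\nabla_{ij}\tilde f(x)=\tilde g_i'(u_i(x))\,v_{ij}$, the linear term becomes exactly $\langle \nabla_{\cdot j}\tilde f(x),\,y-x\rangle$, and the quadratic term is
\begin{equation*}
\tfrac12 \sum_i \tfrac{B_i v_{ij}^2}{\underline u_i^2}(y_{ij}-x_{ij})^2 \;\le\; \tfrac12 \Bigl(\max_i \tfrac{B_i v_{ij}^2}{\underline u_i^2}\Bigr)\sum_i (y_{ij}-x_{ij})^2 \;=\; \tfrac{L_j}{2}\,\lVert y-x\rVert^2,
\end{equation*}
which yields the claimed inequality.

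No step is truly hard here; the only subtlety is making sure the piecewise definition of $\tilde g_i$ really gives a $C^2$ (or at least $C^1$ with bounded weak second derivative) function so that the Taylor-with-remainder argument is valid across the join $u_i=\underline u_i$. Once that smoothness is established, the remainder of the proof is a standard composition-plus-Taylor calculation and does not rely on any structure beyond the explicit form of $L_j$.
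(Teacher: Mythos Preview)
Your proof is correct and follows essentially the same route as the paper: both arguments rest on the bound $\tilde g_i''(u)\le B_i/\underline u_i^2$ and a Taylor/integration step along the segment where only block $j$ varies. The paper first packages this into a block-gradient Lipschitz inequality $\lVert\nabla_{\cdot j}\tilde f(y)-\nabla_{\cdot j}\tilde f(x)\rVert\le L_j\lVert y_{\cdot j}-x_{\cdot j}\rVert$ and then integrates, whereas you apply the one-dimensional Taylor bound to each $\tilde g_i$ directly and sum; this is a cosmetic difference only.
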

Let $L$ be the ``global'' Lipschitz constant which determines the stepsize of (full) gradient descent. 
Let $L_{\max} = \max_j L_j$. 
Generally, it is easy to see that $1 \le L / L_{\max} \le n$, but we can get a stronger bound than this using the gradient $(\nabla f)_{ij} = B_i v_{ij} / u_i$.
When only the variables $x_{\cdot j}$ in block $j$ change, we have that $L_j$ is bounded by the maximal diagonal value of the $j$'th-block (sub) Hessian matrix.
In contrast, for $L$ it depends on the maximal trace of the $i$'th-block (sub) Hessian matrix over all buyers $i$.
From a market perspective this can be interpreted as follows: when we only adjust one block, each buyer's utility fluctuates based on that one item. On the contrary, for full gradient methods, every item contributes to the change of $u_i$.
This yields that the ratio of $L_j / L$ is roughly $\max_i v_{ij}^2 / \max_i \lVert v_i \rVert^2$.

\begin{algorithm}
\KwIn{Initial $x^0$, stepsizes $\eta_1^k, \eta_2^k, \ldots, \eta_m^k,\; \forall\, k \in \NN$}
	\For{$k \gets 1, 2, \ldots $} {
		pick $j_k \in [m]$ with probability $1 / m$\;
		$g^{k-1} \gets \nabla_{\cdot j_k} \tilde{f}(x^{k-1})$\;
		$x^k_{\cdot j_k} \gets \text{Prox}_{\Delta_n} \left( x^{k-1}_{\cdot j_k} - \eta_{j_k}^k g^{k-1} \right)$\; 
        $x^{k}_{\cdot j} \gets x^{k-1}_{\cdot j},\; \forall\, j \neq j_k$\;
	}
\caption{\bcdeg\: (Proximal) Block Coordinate Descent for the EG Program}
\label{algo:bcdeg}
\end{algorithm}

\cref{algo:bcdeg} states the (proximal) BCD algorithm for the EG convex program.
Each iteration only requires a single projection onto an $n$-dimensional simplex, as opposed to $m$ projections for the full projected gradient method. 
Moreover, we show in \cref{app:bcdeg} that for linear utilities we can further reduce the computational cost per iteration of \cref{algo:bcdeg} when the valuation matrix $v_{ij}$ is sparse. 

\paragraph{Line search strategy.}
As mentioned in previous work such as \citet{richtarik2014iteration}, line search is often very helpful for BCD. 
If it can be performed cheaply, then it can greatly speed up numerical convergence. We show later that this occurs for our setting.

We incorporate line search in \cref{algo:bcdeg}  with a (coordinate-wise) $l_2$ smoothness condition.
The line search modifies \cref{algo:bcdeg} as follows: after computing $x_{\cdot j_k}^k$, we  check whether $$\eta_{j_k} \lVert \nabla_{\cdot j_k} \tilde{f}(x^k) - \nabla_{\cdot j_k} \tilde{f}(x^{k-1}) \rVert \leq \lVert x^k_{\cdot j_k} - x^{k-1}_{\cdot j_k} \rVert.$$ If this check succeeds then we increase the stepsize by a small multiplicative factor and go to the next iteration. If it fails then we decrease by a small multiplicative factor and redo the calculation of $x_{\cdot j_k}^k$.
This line search algorithm can be implemented in $O(n)$ cost per iteration, whereas a full gradient method requires $O(nm)$ time. 
A full specification of \bcdeg\ with line search (\bcdegls) is given in~\cref{app:bcdeg}.

\paragraph{Convergence analysis.}
Next we establish the linear convergence of \cref{algo:bcdeg} under reasonably-large (fixed) stepsizes, as well as for the line search variant. 

Following prior literature on the linear convergence of first-order methods for structured convex optimization problems under ``relaxed strong convexity'' conditions, we show that \bcdeg\
generates iterates that have linear convergence of the expected objective value.
For more details on these relaxed sufficient conditions that ensure linear convergence of first-order methods, see \citet{karimi2016linear} and references therein. 

\citet{gao2020first} showed that \eqref{program:eg} and other equilibrium-capturing convex programs can be reformulated to satisfy these conditions. Hence, running first-order methods on these convex programs yields linearly-convergent equilibrium computation algorithms. 
Similar to the proof of \citet[Theorem 2]{gao2020first} and \citet[(39)]{karimi2016linear}, we first establish a \emph{Proximal-P\L}\ inequality. 

\begin{lemma}
    For any feasible $x$ and any $L > 0$, define $\mathcal{D}_{\psi}(x, L)$ to be
    \begin{equation} 
        -2 L \min_y \inp{\nabla \tilde{f}(x)}{y - x} + \frac{L}{2} {\lVert y - x \rVert}^2 + \psi(y) - \psi(x)
        \nonumber
    \end{equation}
    then we have the inequality
    \begin{equation}
        \frac{1}{2}\mathcal{D}_{\psi}(x, L) \ge \min\left\{\frac{\mu}{\theta^2(A, C)}, L\right\} (F(x) - F^*)
        \label{eq:proximal-pl-condition}
    \end{equation}
    where $\theta(A, C)$ is the Hoffman constant of the polyhedral set of optimal solutions of \eqref{program:eg}, which is characterized by matrices $A$ and $C$ where $C$ is a matrix capturing optimality conditions,  and $F^*$ is the optimal value.
    \label{lemma:proximal-pl}
\end{lemma}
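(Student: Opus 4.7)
The plan is to follow the template laid out in \citet[Theorem 2]{gao2020first} and \citet[(39)]{karimi2016linear}, combining strong convexity of the outer function in a composite representation of $\tilde f$ with Hoffman's lemma applied to the optimal set. The key structural observation is that $\tilde f(x) = h(Vx)$, where $V: x \mapsto (\langle v_i, x_i\rangle)_{i=1}^n$ is the linear utility map and $h(u) = \sum_i \tilde g_i(u_i)$. Thanks to the quadratic extrapolation defining $\tilde g_i$, each summand is strongly convex on the compact range of utilities attainable over $\mathcal{X}$ (on $[0, \underline{u}_i)$ directly by construction and on $[\underline{u}_i, \bar{u}_i]$ from $-B_i\log$), so $h$ is $\mu$-strongly convex in $u$ with an explicit modulus depending on $B_i$, $\underline{u}_i$, and $\bar{u}_i$.

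The first step would be to leverage strong convexity of $h$ together with the first-order optimality condition for $x^*$ to establish $F(x) - F^* \ge \tfrac{\mu}{2}\lVert Vx - u^*\rVert^2$, where $u^* = Vx^*$ is the (unique, because $h$ is strongly convex) optimal utility profile shared by all optima. The second step is to apply Hoffman's error bound to the polyhedral description of the optimal set: since $u^*$ is pinned down, $X^*$ equals the feasible polyhedron of \eqref{program:eg} intersected with the affine subspace $\{x : Vx = u^*\}$ together with the KKT complementary-slackness conditions against an equilibrium price vector $p^*$, which jointly form the matrix $C$. Hoffman's lemma then yields a constant $\theta(A, C)$ with $d(x, X^*) \le \theta(A,C)\lVert Vx - u^*\rVert$ for all feasible $x$. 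Chaining these two bounds gives the quadratic growth inequality $d(x, X^*)^2 \le (2\theta^2(A,C)/\mu)(F(x) - F^*)$.

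The third step is the by-now-standard reduction from quadratic growth to Proximal-P\L. I would restrict the minimization in $\mathcal{D}_\psi(x, L)$ to the ray $y = x + \alpha(\bar x^* - x)$ with $\bar x^* = \mathrm{Proj}_{X^*}(x)$ and $\alpha \in [0,1]$, use convexity of $\tilde f$ and $\psi$ to upper-bound the bracketed expression by $-\alpha (F(x) - F^*) + \tfrac{L\alpha^2}{2}\lVert x - \bar x^*\rVert^2$, and optimize in $\alpha$. When the unconstrained minimizer $\alpha^\star = (F(x)-F^*)/(L\lVert x - \bar x^*\rVert^2)$ lies in $(0,1)$, the resulting quadratic improvement combined with the quadratic-growth bound on $\lVert x - \bar x^*\rVert^2$ yields a coefficient of order $\mu/\theta^2(A,C)$; when the minimizer is clipped to $\alpha^\star = 1$, the coefficient is of order $L$. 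Taking the minimum of the two cases produces the stated inequality.

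The main obstacle I foresee is the polyhedral description of $X^*$ needed for Hoffman's lemma. One has to specify $C$ cleanly so it simultaneously encodes $Vx = u^*$ and complementary slackness against $p^*$ (and any KKT relations linking $x$ to the equilibrium prices), while keeping $\theta(A, C)$ finite and independent of $x$; in particular one must argue that the conditions characterizing $X^*$ can be written as finitely many linear (in)equalities, which relies on uniqueness of $u^*$ and on the fact that $\psi$ has a polyhedral domain. Once this polyhedral representation is in place and a uniform modulus $\mu$ is extracted from the extrapolated $\tilde g_i$, the remaining pieces assemble without further surprises.
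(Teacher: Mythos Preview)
Your route is correct in spirit and uses the same two ingredients as the paper (strong convexity of the outer function $h=\sum_i\tilde g_i$ and Hoffman's bound for the polyhedral optimal set), but the assembly is genuinely different. You pass through \emph{quadratic growth} first---using strong convexity anchored at $x^*$ to get $F(x)-F^*\ge\tfrac{\mu}{2}\|Vx-u^*\|^2$, then Hoffman to obtain $d(x,X^*)^2\le (2\theta^2/\mu)(F(x)-F^*)$---and only afterwards reduce QG to Proximal-P\L\ via the segment argument. The paper instead anchors strong convexity at $x$: it writes
\[
F^*\ \ge\ F(x)+\langle\nabla\tilde f(x),x^*-x\rangle+\tfrac{\mu}{2}\|Ax-Ax^*\|^2+\psi(x^*)-\psi(x),
\]
applies Hoffman to replace $\|Ax-Ax^*\|^2$ by $\tfrac{1}{\theta^2}\|x-x^*\|^2$, and then simply takes the minimum over $y$ to land directly on $\mathcal D_\psi(x,\mu/\theta^2)$. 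The case split is then just whether $L\ge\mu/\theta^2$ (use monotonicity of $L\mapsto\mathcal D_\psi(x,L)$) or $L<\mu/\theta^2$ (weaken the quadratic coefficient before taking the min).

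What this buys: the paper's direct route recovers the constant in the lemma \emph{exactly}, namely $\min\{\mu/\theta^2,L\}$. Your segment argument, if you carry the constants through, yields $\tfrac12\mathcal D_\psi(x,L)\ge\min\{\mu/(4\theta^2),\,L/2\}(F(x)-F^*)$---a factor of $4$ (resp.\ $2$) off in the two regimes. That is harmless for the downstream linear rate, but it does not prove the lemma as stated; your phrase ``of order'' signals you were aware of this. If you want the exact constant along your route you would need to sharpen QG to avoid the extra factor of $2$, which essentially brings you back to the paper's direct argument.
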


In previous inequalities, they essentially showed that for any $L \ge \lambda = {\mu}/{\theta^2(A, C)} > 0$, 
$\frac{1}{2}\mathcal{D}_\psi(x, L) \ge \lambda (F(x) - F^*)$.
However, here we generate a Proximal-P\L\ inequality giving a lower bound on $\frac{1}{2}\mathcal{D}_{\psi}(x, L)$ for any $L > 0$. 
Then, combining \cref{lemma:proximal-pl} with \citet[Lemmas 2 \& 3]{richtarik2014iteration}, we can establish the main convergence theorem for \bcdeg. 
\begin{theorem}
    Given an initial iterate $x^0$ and stepsizes $\eta_j^0 = 1/L_j,\, \forall\; j$ satisfying \eqref{eq:stepsize-bcd-eg}, let $x^k$ be the random iterates generated by \cref{algo:bcdeg}.
    Then, 
\begin{equation}
    \EE\big[F(x^{k+1})\big] - F^* \le (1 - \rho)^k \left( F(x^0) - F^* \right),
\label{eq:linear-convergence-bcd-eg} 
\end{equation}
where $\rho = \min \left\{\frac{\mu}{m L_{\max} \theta^2(A, C)}, \frac{1}{m}\right\}$ and $L_{\max} = \max_j L_j$. 
\label{theorem:bcdeg-convergence}
\end{theorem}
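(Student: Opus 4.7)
The plan is to combine three ingredients: the per-block descent lemma \cref{lemma:coordinate-continuous}, the Proximal-P\L\ inequality \cref{lemma:proximal-pl}, and a Richt\'arik--Tak\'a\v{c}-style expectation over the uniformly sampled block index, and then iterate a one-step contraction.

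First, I would fix iteration $k$, condition on $x^k$, and derive a deterministic per-realization descent. With $\eta_{j_k} = 1/L_{j_k}$, the update \eqref{eq:bcd-update} minimizes a quadratic upper model of $\tilde f$ along block $j_k$ supplied by \cref{lemma:coordinate-continuous}, so standard majorization--minimization yields
\[
F(x^{k+1}) - F(x^k)\;\le\;-\tfrac{1}{2L_{j_k}}\,\mathcal{D}_{\psi_{j_k}}(x^k, L_{j_k}),
\]
where $\mathcal{D}_{\psi_{j_k}}$ is the block-$j_k$ analog of the quantity in \cref{lemma:proximal-pl}, obtained by restricting the inner minimization to variations in block $j_k$ only.

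Next I would take expectation over $j_k\sim\mathrm{Unif}([m])$. The key structural fact is that $\psi = \sum_j \psi_j$ is separable across item-blocks and the quadratic term in $\mathcal{D}_\psi(x^k, L_{\max})$ is a sum of block quadratics, so the inner minimization decomposes and $\mathcal{D}_\psi(x^k, L_{\max}) = \sum_j \mathcal{D}_{\psi_j}(x^k, L_{\max})$. Bounding $L_{j_k}\le L_{\max}$ in the per-realization inequality and invoking Lemmas~2--3 of \citet{richtarik2014iteration} then gives
\[
\EE\bigl[F(x^{k+1}) \mid x^k\bigr] - F(x^k)\;\le\;-\tfrac{1}{2mL_{\max}}\,\mathcal{D}_\psi(x^k, L_{\max}).
\]

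Finally, applying \cref{lemma:proximal-pl} at $L = L_{\max}$ lower bounds the right-hand side by $\min\{\mu/\theta^2(A,C),\, L_{\max}\}(F(x^k) - F^*)/(mL_{\max})$. Rearranging gives the one-step contraction
\[
\EE\bigl[F(x^{k+1}) \mid x^k\bigr] - F^*\;\le\;(1 - \rho)(F(x^k) - F^*),
\]
with $\rho = \min\{\mu/(mL_{\max}\theta^2(A,C)),\,1/m\}$. Taking total expectations and iterating over $k$ via the tower property then delivers \eqref{eq:linear-convergence-bcd-eg}. The main delicacy I anticipate is in the expectation step: because the per-block Lipschitz constants $L_j$ are non-uniform, the Richt\'arik--Tak\'a\v{c} argument must be carried out carefully, with the separability of $\psi$ and the worst-case bound $L_{j_k}\le L_{\max}$ applied consistently so that the global $\mathcal{D}_\psi$ (rather than merely per-block pieces) appears on the right-hand side of the expected-descent inequality, matching the statement of \cref{lemma:proximal-pl} when the latter is invoked.
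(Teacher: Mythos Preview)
Your proposal is correct and follows essentially the same route as the paper: per-block descent via \cref{lemma:coordinate-continuous}, uniform averaging over $j_k$, separability of $\psi$ to assemble the global $\mathcal{D}_\psi(x^k,L_{\max})$, then \cref{lemma:proximal-pl} and the tower property. The only cosmetic difference is ordering---the paper writes the expectation as a sum over blocks first and then bounds $L_j\le L_{\max}$, whereas you bound per-realization first and average afterward---but the two arguments are interchangeable because $L\mapsto \tfrac{1}{2L}\mathcal{D}_{\psi_j}(x,L)$ is monotone in the right direction.
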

\citet{karimi2016linear} also develop a block-coordinate method that applies to EG.
Unlike their result, our result admits larger stepsizes that can vary per block, as well as a line search strategy, which is helpful for practical performance as we show in \cref{sec:experiments}.

\section{Economic Interpretation of Projected Gradient Steps}

As \citet{goktas2021consumer} argue, one drawback of computing market equilibrium via projected gradient methods such as \bcdeg\ is that these methods do not give a natural interpretation as market dynamics. To address this deficiency, in this section we show that \cref{algo:bcdeg}, and projected gradient descent more generally, can be interpreted as distributed pricing dynamics that balance supply and demand. 

The projection step in \cref{algo:bcdeg}, for an individual buyer $i$ and a chosen item $j$, is as follows (where we use $j$ for $j_k$ and drop the time index for brevity): 
\begin{equation}
    x^k_{\cdot j} \gets \text{Proj}_{\Delta^n} \left(x^{k-1}_{\cdot j} - \eta_j g^{k-1}\right).
\end{equation} 

As is well-known, the projection of a vector $y\in \RR^n$ onto the simplex $\Delta^n = \left\{ x\in \RR^n_+: \sum_i x_i = 1  \right\}$ can be found using an $O(n\log n)$ algorithm (the earliest discovery that we know of is \citet{held1974validation}; see \cref{app:econ-projection} for a discussion of more recent work on simplex projection). 
The key step is to find the (unique) number $t$ such that 
\begin{align}
    \sum\nolimits_i (y_i - t)^+ = 1 
    \label{eq:simplex-find-t}
\end{align}
 and compute the solution as $x = (y - t\cdot \ones)^+$ (component-wise). This can be done with a simple one-pass algorithm if the $y_i$ are sorted. 
 In fact, the number $t$ corresponds to the (unique) optimal Lagrange multiplier of the constraint $\sum_i x_i = 1$ in the KKT conditions. 

In the projection step in \cref{algo:bcdeg}, \eqref{eq:simplex-find-t} has the form
\begin{align}
    \sum\nolimits_i ( x_{ij}^{k-1} - \eta_j g^{k-1}_i - t )^+ = 1.
    \label{eq:bcdeg-simplex-find-t}
\end{align}
Recall that 
we have $g^{k-1} < 0$. Note that the left-hand side of \eqref{eq:bcdeg-simplex-find-t} is non-increasing in $t$ and strictly decreasing around the solution (since some terms on the left must be positive for the sum to be $1$). Furthermore, setting $t=0$ gives a lower bound of $\sum_i ( x_{ij}^{k-1} - \eta_j g^{k-1}_i - t )^+ > \sum_i x^{k-1}_{ij} = 1$. Hence, the unique solution $t^*$ must be positive. 
Now we rewrite $t$ as $\eta_j p_j$ for some ``price'' $p_j$. Then, \eqref{eq:bcdeg-simplex-find-t} can be written as
\begin{equation}
    \sum\nolimits_i D^k_i(p_j) = 1,\;
    D^k_i(p_j) = ( x^{k-1}_{ij} - \eta_j g^{k-1}_i - \eta_j p_j )^+.
    \label{eq:bcdeg-find-pj}
\end{equation}
In other words, the projection step is equivalent to finding $p_j^k$ that solves \eqref{eq:bcdeg-find-pj}. 
Here, $D^k_i$ can be viewed as the \emph{linear demand function} of buyer $i$ at time $k$ given a prior allocation $x^{k-1}_{ij}$.
The solution $p_j^k$ can be seen as a \emph{market-clearing price}, since $\sum_i D^k_i(p_j^k)$ exactly equals the unit supply of item $j$. 
After setting the price, the updated allocations can be computed easily just as in the simplex projection algorithm, that is, $x^k_{ij} = \left( x^{k-1}_{ij} - \eta_j g^{k-1} - \eta_j p^k_j \right)^+$ for all $i$. Equivalently, the new allocations are given by the current linear demand function: $x^k_{ij} = D^k_i(p^k_j)$.

Summarizing the above, we can recast \cref{algo:bcdeg} into the following dynamic pricing steps. At each time $k$, the following events occur. 
\begin{itemize}
    \item An item $j$ is sampled uniformly at random. 
    \item For each buyer $i$, her demand function becomes $D_i^k(p_j) \gets ( x^{k-1}_{ij} - \eta_j g^{k-1} - \eta_j p_j )^+$. 
    \item Find the unique price $p_j^k$ such that $\sum_i D^k_i(p^k_j)=1$. 
    \item Each buyer chooses their new allocation of item $j$ via $x^k_{ij} = D^k_i(p^k_j)$. 
\end{itemize}

To get some intuition for the linear demand function, note that when $u_i \ge \underline{u}_i$, we have $g^{k-1}_i = - {B_i v_{ij}}/{u^{k-1}_i}$, and therefore it holds that $D^k_i\left( {B_i v_{ij}}/{ u^{k-1}_i } \right) = x^{k-1}_{ij}$.
In other words, the current demand of buyer $i$ is exactly the previous-round allocation $x^{k-1}_{ij}$ if the price of item $j$ is $(B_i / u^{k-1}_i)v_{ij}$. 
This can be interpreted in terms familiar from the solution of EG: let $\beta^{k-1}_i = B_i / u^{k-1}_i$ be the \emph{utility price} at time $k-1$ for buyer $i$, then we get that after seeing prices $p_j^{k}$, buyer $i$ increases their allocation of goods that beat their current utility price, and decreases their allocation on goods that are worse than their current utility price. The stepsize $\eta_j$ denotes buyer $i$'s responsiveness to price changes. 

The fact that the prices are set in a way that equates the supply and demand is reminiscent of t\^atonnement-style price setting. The difference here is that the buyers are the ones who slowly adapt to the changing environment, while the prices are set in order to achieve exact market clearing under the current buyer demand functions.

\section{Relative Block Coordinate Descent Algorithm for PR dynamics}
\label{sec:rcd}

\citet{birnbaum2011distributed} showed that the \emph{Proportional Response dynamics} (\pr) for linear buyer utilities can be derived by applying mirror descent (MD) with the KL divergence on the Shmyrev convex program formulated in buyers' \emph{bids}~\citep{shmyrev2009algorithm, cole2017convex}. 
The authors derived an $O(1/k)$ last-iterate convergence guarantee of \pr\ (MD) by exploiting a ``relative smoothness'' condition of the Shmyrev convex program. 
This has later been generalized and led to faster MD-type algorithms for more general relatively smooth problems \citep{hanzely2021fastest,lu2018relatively,gao2020randomized}.
In this section, we propose a randomized extension of \pr\ dynamics, which we call \emph{block coordinate proportional response} (\bcpr). \bcpr\ is based on a recent stochastic mirror descent algorithm \cite{gao2020randomized}. We provide stepsize rules and show that each iteration involves only a single buyer and can be performed in $O(m)$ time. 

Let $b\in \RR^{n\times m}$ denote the matrix of all buyers' bids on all items and $p_j(b):=\sum_i b_{ij}$ denote the price of item $j$ given bids $b$. 
Denote $a_{ij} = \log v_{ij}$ if $v_{ij} > 0$ and $0$ otherwise. 
The Shmyrev convex program is
\begin{align}
    \begin{split}
    \max_{b} \quad & \sum\nolimits_{i,j} a_{ij} b_{ij} - \sum\nolimits_j p_j(b) \log p_j(b) \\
    \text{s.t.} \quad & \sum\nolimits_j b_{ij} = B_i \quad \forall\, i \\ 
    & b \ge 0.  
    \end{split} 
    \tag{S} 
    \label{program:shmyrev}
\end{align}
It is known that an optimal solution $(b^*, p^*)$ of \eqref{program:shmyrev} gives equilibrium prices $p^*_j$. Corresponding equilibrium allocations can be constructed via $x_{ij}^* = b_{ij}^* / p_j^*$ for all $i,j$. 
\eqref{program:shmyrev} can be rewritten as minimization of a smooth finite-sum convex function with a potentially nonsmooth convex separable regularizer:
\begin{align}
    \min_{b \in \RR^{m \times n}} \Phi(b) := \varphi(b) + r(b) = \varphi(b) + \sum\nolimits_i r_i(b_i) \label{eq:rcd-formulation}
\end{align}
where $\varphi(b) = - \sum\nolimits_{i, j} {b_{ij} \log\left(\frac{v_{ij}}{p_j(b)}\right)}$ 
and $r_i(b_i) = 0$ if $\sum_j b_{ij} = B_i,\, b_i \ge 0$ and $+\infty$ otherwise. 

Now we introduce the relative randomized block coordinate descent (RBCD) method for \eqref{eq:rcd-formulation}. 
We use the KL divergence as the Bregman distance in the proximal update of $b$. 
Let $\DKL(q^1, q^2) = \sum_{i}{q^1_i \log{(q^1_i / q^2_i)}}$ denote the KL divergence between $q^1$ and $q^2$ (assuming $\sum_i q^1_i = \sum_i q^2_i$ and $q^1_i, q^2_i > 0$, $\forall\, i$). 
In each iteration, given a current $b$, we select $i \in [n]$ uniformly at random and only the update $i$-th block of coordinates $b_{i}$. The next iterate $b^+$ is $b_i^+ = T_i(b_i)$ for $i$ and $b^+ = b$ for the remaining blocks. Here, $T_i(b_i)$ equals 
\begin{equation}
    \arg\min_a \inp{\nabla_i \varphi(b)}{a - b_i} + \frac{1}{\alpha_i} \DKL(a, b_i) + r_i(a)
    \label{eq:bcpr-update}
\end{equation}
where $\alpha_i > 0$ is the stepsize. 
It is well-known that \eqref{eq:bcpr-update} is equivalent to 
the following simple, explicit update formula:  
\begin{equation}
    b_{ij}^+ = \frac{1}{Z_i} b_{ij} \left( \frac{v_{ij}}{p_j} \right)^{\alpha_i} \quad \forall j = 1, 2, \ldots, m 
    \label{eq:bcpr-update-simple}
\end{equation}
where $Z_i$ is a normalization constant such that $\sum_j b_{ij}^+ = B_i$. 

\cref{algo:bcpr} states the full block coordinate proportional response dynamics.
\begin{algorithm}
    \KwIn{Initial $b^0$, $p^0$, stepsizes $\alpha_1^k, \ldots, \alpha_n^k,\; \forall\, k \in \NN$}
        \For{$k \gets 1, 2, \ldots $} {
            pick $i_k \in [n]$ with probability $1/n$\; 
            compute $b_{i_k}^+$ based on \eqref{eq:bcpr-update-simple} with stepsize $\alpha_{i_k}^k$\;
            $b_{i_k}^k \gets b_{i_k}^+$ and $b_{i'}^k \gets b_{i'}^{k-1}, \; \forall\, i' \neq i_{k}$\;
            $p^k_j \gets \sum_i b_{ij}^k, \; \forall\, j$\;
        }
\caption{Block Coordinate Proportional Response (BCPR)}
\label{algo:bcpr}
\end{algorithm}




\paragraph{Convergence Analysis.}
The objective function of \eqref{program:shmyrev} is relatively smooth with $L = 1$~\citep[Lemma 7]{birnbaum2011distributed}. This means that $\alpha_i = 1$ is a safe lower bound for stepsizes. 
For \cref{algo:bcpr}, a last-iterate sublinear convergence rate is given by \citet{gao2020randomized} for $0 < \alpha_i < \frac{1 + \theta_i}{L_i}$, where $\theta_i$ is the Bregman symmetry measure introduced by \citet{bauschke2017descent}. 
For the KL divergence $\theta_i = 0$.
Their proof still goes through for $\alpha_i = 1 / L_i$, which yields the following
\begin{theorem} 
    Let $b^k$ be random iterates generated by \cref{algo:bcpr} with $\alpha_i^k = 1 / L_i$ for all $k$, then 
    \begin{equation}
        \EE \big[ \Phi(b^k) \big] - \Phi^* \le \frac{n}{n + k} \Big( \Phi^* - \Phi(b^0) + D(b^*, b^0) \Big)
    \end{equation}
    where $\Phi^*$ is the optimal objective value. 
    \label{thm:bcpr-fixed-stepsize}
\end{theorem}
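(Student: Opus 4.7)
My plan is to adapt the convergence argument of~\citet{gao2020randomized} for relative randomized block-coordinate descent, with the single technical wrinkle that we run at the boundary stepsize $\alpha_i = 1/L_i$ rather than strictly below $(1+\theta_i)/L_i$. The three ingredients are: (i) the blockwise relative smoothness of $\varphi$ with constant $L_i = 1$ inherited from~\citet[Lemma 7]{birnbaum2011distributed}; (ii) the three-point identity for Bregman divergences applied to the KL-proximal update~\eqref{eq:bcpr-update}; and (iii) uniform sampling of the block $i_k \in [n]$.

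First I would fix iteration $k$, condition on $b^{k-1}$, and let $i = i_k$. The optimality condition of~\eqref{eq:bcpr-update} together with the three-point identity gives, for every feasible $a$ with $a_i \in \dom r_i$,
\begin{align*}
\inp{\nabla_i \varphi(b^{k-1})}{b^k_i - a_i} + r_i(b^k_i) - r_i(a_i) \le \tfrac{1}{\alpha_i}\bigl( \DKL(a_i, b^{k-1}_i) - \DKL(a_i, b^k_i) - \DKL(b^k_i, b^{k-1}_i) \bigr).
\end{align*}
Relative smoothness applied blockwise bounds $\varphi(b^k) - \varphi(b^{k-1}) - \inp{\nabla_i\varphi(b^{k-1})}{b^k_i - b^{k-1}_i}$ by $L_i \DKL(b^k_i, b^{k-1}_i)$; at $\alpha_i = 1/L_i$ this term exactly cancels the negative $\DKL(b^k_i, b^{k-1}_i)$ contribution above, producing a clean per-iterate descent bound on $\Phi(b^k)$ in terms of $\Phi(b^{k-1})$, a linearized gap at $b^{k-1}$, and the difference of two KL potentials evaluated at $a$.

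Next I would take conditional expectation over $i_k$ uniform on $[n]$, exploit block separability of $r$ and the fact that $b^k$ agrees with $b^{k-1}$ outside the sampled block to pull $1/n$-factors in front of the per-block gains, and then set $a = b^*$. Convexity of $\varphi$ turns the linearized gap into $\Phi^* - \Phi(b^{k-1})$, yielding a Lyapunov one-step recursion
\begin{align*}
\EE\bigl[\Phi(b^k) - \Phi^* + \tfrac{1}{n}\DKL(b^*, b^k) \bigm| b^{k-1}\bigr] \le \bigl(1 - \tfrac{1}{n}\bigr)\bigl(\Phi(b^{k-1}) - \Phi^*\bigr) + \tfrac{1}{n}\DKL(b^*, b^{k-1}).
\end{align*}
Taking total expectation and performing a straightforward induction on $k$ on this recursion (as in~\citet{gao2020randomized}) yields the stated $n/(n+k)$ bound with constant $\Phi^* - \Phi(b^0) + \DKL(b^*, b^0)$.

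The main obstacle---and essentially the only deviation from the published argument---will be justifying the boundary stepsize $\alpha_i = 1/L_i$. \citet{gao2020randomized} assume the strict inequality $\alpha_i < (1+\theta_i)/L_i$ in order to retain a strictly positive multiple of $\DKL(b^k_i, b^{k-1}_i)$ after combining relative smoothness with the three-point identity; at $\alpha_i = 1/L_i$ (with $\theta_i = 0$ for KL) those coefficients cancel to zero, but this still gives a valid (non-strict) descent inequality, and the leftover $\DKL(b^k_i, b^{k-1}_i)$ term is never used anywhere downstream in the telescoping. A secondary subtlety I would address is ensuring $\DKL(b^*, b^0) < \infty$, which is achieved by a strictly positive initialization such as $b^0_{ij} = B_i/m$.
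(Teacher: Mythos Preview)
Your plan matches the paper's proof: blockwise relative smoothness with $L_i=1$, the three-point identity for the KL-proximal step, convexity of $\varphi$, uniform-sampling expectations, and a telescoping sum. The one slip is in the displayed Lyapunov recursion: the $\tfrac{1}{n}$ prefactor on the two $\DKL$ terms should not be there. Because only block $i_k$ changes, $\DKL(b^*,b^k)-\DKL(b^*,b^{k-1})=\DKL(b^*_{i_k},b^k_{i_k})-\DKL(b^*_{i_k},b^{k-1}_{i_k})$, so the per-block KL difference already equals the full KL difference; adding it to both sides of your per-block descent inequality and \emph{then} averaging over $i_k$ yields
\[
\EE\bigl[\Phi(b^k)-\Phi^*+\DKL(b^*,b^k)\,\bigm|\,b^{k-1}\bigr]\;\le\;\bigl(1-\tfrac{1}{n}\bigr)\bigl(\Phi(b^{k-1})-\Phi^*\bigr)+\DKL(b^*,b^{k-1}),
\]
which is exactly the recursion the paper derives and then telescopes (using monotonicity of the expected objective). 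Your version with $\tfrac{1}{n}\DKL$ on both sides is not a consequence of the ingredients you list---it would additionally require $\EE[\DKL(b^*,b^k)\mid b^{k-1}]\le\DKL(b^*,b^{k-1})$, which is not established for the block-coordinate iterates---and would in any case produce a different constant than the theorem. Your observation that the boundary stepsize $\alpha_i=1/L_i$ is fine because the residual $\DKL(b^k_i,b^{k-1}_i)$ term merely cancels to zero and is never used downstream is correct and is precisely how the paper handles it.
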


\paragraph{Line search.} 
To speed up \bcpr, we introduce a line search strategy and an adaptive stepsize strategy. 
\bcpr\ with line search can be implemented by comparing $\DKL(p^+, p)$ and $\DKL(b_i^+, b_i)$, which takes $O(m)$ time, and is much cheaper than computing the whole objective function value of \eqref{program:shmyrev}. 
Beyond that, by storing $p$, we also avoid touching all variables in each iteration. Therefore, the amount of data accessed and computation needed is $O(m)$ per iteration (vs. $O(nm)$ for full gradient methods). 
In the adaptive strategy we compute larger stepsize based on Lipschitz estimates using a closed-form formula. 
The \bcpr\ with Line Search (\bcprls) and Adaptive \bcpr\ (\abcpr) are formally stated in \cref{app:rbcd}.  

As with \bcdeg, our experiments demonstrate that larger stepsizes can accelerate \cref{algo:bcpr}. When we consider a series of stepsizes $\{ \alpha_i^k \}_{k \in \NN}$ generated by line search or an adaptive strategy, we can show the inequality 
\begin{align}
    &\expect*{ \Phi(b^k) - \Phi(b^*) } \nonumber \\ 
    & \le \frac{n}{n + k} \left( \Phi(b^*) - \Phi(b^0) + \sum_i \frac{1}{\alpha_i^0} D(b_i^*, b^0) \right) \nonumber \\ 
    & + \frac{1}{k} \sum_{l=1}^k \expect*{ \sum_i \frac{1}{\alpha_i^l} D(b_i^*, b^l) - \sum_i \frac{1}{\alpha_i^{l-1}} D(b_i^*, b^l) }. 
    \label{eq:bcpr-ls-bound}
\end{align} 
However, we cannot guarantee convergence, as we are unable to ensure the convergence of the last term above. 

\section{Proportional Response with Line Search}
In this section we extend vanilla PR dynamics to PR dynamics with line search (\prls), by developing a Mirror Descent with Line Search (MDLS) algorithm. 
Intuitively, the LS strategy is based on repeatedly incrementing the stepsize and checking the relative-smoothness condition, with decrements made when the condition fails.
This is similar to the projected gradient method with line search (\pgls) in \citet[A.5]{gao2020first}, but replaces the $\ell_2$ norm with the Bregman divergence. 
This allows larger stepsizes while guaranteeing the same sublinear last-iterate convergence. 
The general  MDLS algorithm is stated in \cref{app:prls}. 

\paragraph{Convergence rate.} \citet[Theorem 3]{birnbaum2011distributed} showed that a constant stepsize of $\alpha^k = 1/L$ ensures sublinear convergence at a $1/k$ rate.
One of the key steps in establishing the rate is the descent lemma, which also holds in the line search case:
\begin{lemma}
    Let $b^k$ be MDLS iterates. 
    Then, $D(b^*, b^{k+1}) \le D(b^*, b^k)$ for all $k$.
    \label{lem:mdls-d-descent}
\end{lemma}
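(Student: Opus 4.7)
The plan is to establish the descent of $D(b^*, b^k)$ by combining the three-point identity for Bregman divergences with the relative-smoothness condition that MDLS enforces via its line search. The key observation is that line search guarantees a stepsize $\alpha^k$ satisfying
\[ \varphi(b^{k+1}) \le \varphi(b^k) + \langle \nabla \varphi(b^k), b^{k+1} - b^k \rangle + \frac{1}{\alpha^k} D(b^{k+1}, b^k), \]
i.e.\ the analog of the relative-smoothness inequality used by \citet{birnbaum2011distributed} to prove their $O(1/k)$ rate, but with $\alpha^k$ in place of the conservative $1/L$.

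First I would invoke the optimality condition of the proximal/mirror update defining $b^{k+1}$: for every feasible $b$,
\[ \langle \nabla \varphi(b^k) + \xi^{k+1}, b - b^{k+1} \rangle + \frac{1}{\alpha^k}\langle \nabla h(b^{k+1}) - \nabla h(b^k), b - b^{k+1} \rangle \ge 0, \]
where $\xi^{k+1} \in \partial r(b^{k+1})$ and $h$ is the entropy mirror map. Setting $b = b^*$ and applying the three-point identity $\langle \nabla h(b^{k+1}) - \nabla h(b^k), b^* - b^{k+1}\rangle = D(b^*, b^k) - D(b^*, b^{k+1}) - D(b^{k+1}, b^k)$ yields
\[ \frac{1}{\alpha^k} D(b^*, b^{k+1}) \le \frac{1}{\alpha^k} D(b^*, b^k) - \frac{1}{\alpha^k} D(b^{k+1}, b^k) + \langle \nabla \varphi(b^k), b^* - b^{k+1} \rangle + r(b^*) - r(b^{k+1}). \]

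Next I would split $\langle \nabla \varphi(b^k), b^* - b^{k+1}\rangle = \langle \nabla \varphi(b^k), b^* - b^k\rangle + \langle \nabla \varphi(b^k), b^k - b^{k+1}\rangle$. Convexity of $\varphi$ bounds the first piece by $\varphi(b^*) - \varphi(b^k)$, and the line-search relative-smoothness inequality above bounds the second piece by $\varphi(b^k) - \varphi(b^{k+1}) + \frac{1}{\alpha^k} D(b^{k+1}, b^k)$. The two $D(b^{k+1}, b^k)$ terms cancel, leaving
\[ \frac{1}{\alpha^k} D(b^*, b^{k+1}) \le \frac{1}{\alpha^k} D(b^*, b^k) + \Phi(b^*) - \Phi(b^{k+1}). \]
Since $b^*$ minimizes $\Phi$, the last difference is non-positive and the claim follows after multiplying through by $\alpha^k > 0$.

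The only real obstacle is being careful that the line-search-validated stepsize genuinely gives the relative-smoothness inequality at the accepted iterate (and not merely at a candidate that was later rejected); this should be immediate from the MDLS acceptance rule in the appendix, but it is worth stating as a formal invariant before invoking it. Everything else is routine bookkeeping with the three-point property and convexity.
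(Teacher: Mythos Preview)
Your argument is correct and follows essentially the same route as the paper. The paper obtains $\varphi(b^{k+1}) \le \varphi(b^*) + \frac{1}{\alpha^k}\bigl(D(b^*,b^k)-D(b^*,b^{k+1})\bigr)$ by combining the accepted line-search inequality with \citet[Lemma~9]{birnbaum2011distributed} (the three-point/optimality lemma of Chen--Teboulle) and convexity of $\varphi$, then reads off $D(b^*,b^{k+1})\le D(b^*,b^k)$ from $\varphi(b^{k+1})\ge\varphi^*$; you simply unpack that cited lemma into the explicit optimality condition plus three-point identity, and carry the indicator $r$ along rather than leaving the constraint implicit, arriving at the equivalent inequality $\frac{1}{\alpha^k}D(b^*,b^{k+1})\le \frac{1}{\alpha^k}D(b^*,b^k)+\Phi(b^*)-\Phi(b^{k+1})$.
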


We have the following theorem. 
\begin{theorem}
    Let $b^k$ be iterates generated by \prls\
    starting from any initial solution feasible $b^0$, then we have
    \begin{equation} 
        \varphi(b^k) - \varphi^* \leq \frac{1}{\rho^-} \cdot \frac{ D(b^*, b^0) }{k} 
        \label{eq:mdls-non-tighter-convergence}
    \end{equation}
    where $\rho^-$ is the shrinking factor of stepsize. 
    \label{thm:mdls-sublinear-conv}
\end{theorem}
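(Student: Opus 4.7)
The plan is to adapt the classical sublinear mirror-descent analysis, treating the accepted stepsize $\alpha^k$ produced by the line search as the effective inverse-Lipschitz constant at iteration $k$, and then invoking the line-search floor on $\alpha^k$ to pass from a variable-stepsize telescoping inequality to a clean $1/k$ bound. Two structural facts drive the whole argument: (i) the line-search acceptance criterion is, by construction, the one-step relative-smoothness inequality $\varphi(b^{k+1}) \leq \varphi(b^k) + \langle \nabla \varphi(b^k), b^{k+1} - b^k\rangle + \frac{1}{\alpha^k} D(b^{k+1}, b^k)$; and (ii) because \eqref{program:shmyrev} is globally $1$-relatively-smooth (with $L=1$), any trial $\alpha \leq 1$ passes the acceptance test, so the shrinking phase can only drop below $1$ by one further factor of $\rho^-$, which delivers the uniform floor $\alpha^k \geq \rho^-$.

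First I would derive the standard per-iteration progress inequality $\alpha^k[\varphi(b^{k+1}) - \varphi^*] \leq D(b^*, b^k) - D(b^*, b^{k+1})$. The ingredients are (a) the three-point lemma for the Bregman proximal map applied with $a = b^*$, which yields $\langle \nabla \varphi(b^k), b^{k+1} - b^*\rangle + \frac{1}{\alpha^k} D(b^{k+1}, b^k) \leq \frac{1}{\alpha^k}\bigl(D(b^*, b^k) - D(b^*, b^{k+1})\bigr)$; (b) convexity of $\varphi$ to rewrite $\langle \nabla \varphi(b^k), b^k - b^*\rangle$ as a lower bound for $\varphi(b^k) - \varphi^*$; and (c) the line-search-guaranteed relative-smoothness bound to dominate $\langle \nabla \varphi(b^k), b^{k+1} - b^k\rangle + \frac{1}{\alpha^k} D(b^{k+1}, b^k)$ from below by $\varphi(b^{k+1}) - \varphi(b^k)$. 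Adding these three pieces cancels the $\varphi(b^k)$ terms and yields the claimed inequality; as a byproduct, nonnegativity of $\varphi(b^{k+1}) - \varphi^*$ recovers \cref{lem:mdls-d-descent}, so the lemma is used only implicitly.

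Next I would show that $\varphi(b^{k+1}) \leq \varphi(b^k)$ along the MDLS trajectory. Applying the same three-point lemma with the alternative choice $a = b^k$ gives $\langle \nabla \varphi(b^k), b^{k+1} - b^k\rangle + \frac{1}{\alpha^k} D(b^{k+1}, b^k) \leq 0$, and combining this with the relative-smoothness inequality from the line-search acceptance yields monotone non-increase of $\varphi(b^k)$. Consequently $\varphi(b^{l+1}) \geq \varphi(b^k)$ for every $l \leq k-1$. Summing the per-iteration progress inequality over $l = 0, \ldots, k-1$, lower-bounding each $\alpha^l$ by $\rho^-$ and each $\varphi(b^{l+1}) - \varphi^*$ by $\varphi(b^k) - \varphi^*$, and telescoping the right-hand side to $D(b^*, b^0) - D(b^*, b^k) \leq D(b^*, b^0)$ produces $\rho^- k\,[\varphi(b^k) - \varphi^*] \leq D(b^*, b^0)$, which rearranges to \eqref{eq:mdls-non-tighter-convergence}.

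The main obstacle is pinning down the stepsize floor $\alpha^k \geq \rho^-$, because this depends on the exact accept/reject protocol of MDLS (starting stepsize at iteration $k$, growth and shrink factors, stopping condition) rather than on any convexity property. Once one verifies that the acceptance test is precisely the relative-smoothness inequality, the argument is short: at any rejected trial stepsize $\tilde\alpha$ one must have $\tilde\alpha > 1/L = 1$ (otherwise global relative smoothness would force acceptance), so the next trial $\rho^-\tilde\alpha > \rho^-$, and the ultimately accepted $\alpha^k$ inherits this lower bound. The remainder of the argument is a direct transplant of the mirror-descent-with-line-search analysis into the Shmyrev setting and involves no further subtlety.
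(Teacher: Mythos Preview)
Your proposal is correct and follows essentially the same route as the paper's proof: combine the line-search acceptance criterion (which is exactly the one-step relative-smoothness inequality) with the three-point identity and convexity to obtain $\alpha^k[\varphi(b^{k+1}) - \varphi^*] \leq D(b^*, b^k) - D(b^*, b^{k+1})$, establish monotone descent of $\varphi$, sum and telescope, and finally invoke the stepsize floor $\alpha^k \geq \rho^-/L = \rho^-$ that follows from global $1$-relative-smoothness of \eqref{program:shmyrev}. The paper organizes the steps in a slightly different order (it states objective descent first and cites the three-point inequality as \citet[Lemma~9]{birnbaum2011distributed}), but the ingredients and logic are the same.
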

Unlike the result in \bcprls, the deterministic algorithm maintains its convergence guarantee with line search. The main issue in the block coordinate case is the lack of monotonicity of $\EE[D_i(b^*, b^k)]$, which is avoided by the deterministic algorithm. In the proof, we also give a tighter, path-dependent bound.



\section{Block Coordinate Descent Algorithm for CES utility function}

In this section, we show that block coordinate descent algorithms also work for the case where buyers have \emph{constant elasticity of substitution} (CES) utility functions (for $\rho \in (0, 1)$). 

Formally, a CES utility function for buyer $i$ with parameters $v_i \in \RR^m_{+}$ and $\rho \in (0, 1)$ is $u_{i}(x_i) = ( \sum_j {v_{ij} x_{ij}^\rho} )^{\frac{1}{\rho}}$ where $v_{ij}$ denotes the valuation per unit of item $j$ for buyer $i$. 
CES utility functions are convex, continuous, nonnegative and homogeneous and hence the resulting ME can still be captured by the EG convex program. 

\paragraph{BCDEG for CES utility function.}
The resulting EG program is of similar form as \eqref{eq:eg-bcd-formulation} with
    $f(x) = - \sum\nolimits_{i} {\frac{B_i}{\rho} \log{\inp{v_{i}}{x_i^\rho}}}$,
where $x_i^\rho = (x_{i1}^\rho, \ldots, x_{im}^\rho)$ and the same separable $\psi(x)$ as \eqref{eq:eg-bcd-formulation}. Hence, as for linear Fisher markets, we can apply block-coordinate descent. 


Since $u$ and $x$ may reach $0$ at some iterates, we need smooth extrapolation techniques to ensure the existence of gradients. 
First, similar to \citet[Lemma 8]{zhang2011proportional}, we lower bound $x^*$ for all $i, j: v_{ij} > 0$, which ensures that extrapolation will not affect the equilibrium when $\min_{i, j} v_{ij} > 0$. 
Our bounds are tighter than \citet{zhang2011proportional}. 
\begin{lemma}
    For a market with CES utility functions with $\rho \in (0, 1)$ and market equilibrium allocation $x^*$, for any $i, j$ such that $v_{ij} > 0$, we have $x_{ij}^* \ge \underline{x}_{ij}^* := \omega_1(i)^{\frac{1}{1-\rho}} \omega_2(i)^{\frac{\rho(\rho+1)}{1 - \rho}}$, where $\omega_1(i) = \frac{B_i}{m\sum_l B_l}$, $\omega_2(i) = \min_{j: v_{ij} > 0} v_{ij} / \max_j v_{ij}$.
    \label{lem:ces-helper-lemma}
\end{lemma}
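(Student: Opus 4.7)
The plan is to exploit the KKT conditions of each buyer's utility-maximization problem at the market equilibrium, derive a closed-form expression for $x_{ij}^*$, and then bound the quantities appearing in it using market-clearing and supply feasibility.

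First I would note that for $\rho\in(0,1)$ and $v_{ij}>0$, the marginal utility $\partial u_i/\partial x_{ij} = u_i^{1-\rho} v_{ij} x_{ij}^{\rho-1}$ diverges as $x_{ij}\to 0^+$, so $x_{ij}^*>0$ whenever $v_{ij}>0$. The stationarity condition for buyer $i$'s program $\max\{u_i(x_i):\langle p^*,x_i\rangle\le B_i,\;x_i\ge 0\}$ gives
\begin{equation}
  u_i^{1-\rho} v_{ij} (x_{ij}^*)^{\rho-1} = \lambda_i p_j^*, \nonumber
\end{equation}
where $\lambda_i>0$ is the Lagrange multiplier. Since $u_i$ is $1$-homogeneous, Euler's theorem yields $u_i = \sum_j x_{ij}^*\,\partial u_i/\partial x_{ij} = \lambda_i B_i$, so $\lambda_i = u_i/B_i$. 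Substituting and solving for $x_{ij}^*$ produces the CES demand identity
\begin{equation}
  x_{ij}^* = \left( \frac{B_i v_{ij}}{u_i^\rho\, p_j^*} \right)^{\!\frac{1}{1-\rho}}. \nonumber
\end{equation}

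The next step is to upper bound the two factors $u_i^\rho$ and $p_j^*$. For the price, budget balance at equilibrium gives $\sum_j p_j^* = \sum_l B_l$, hence $p_j^* \le \sum_l B_l$. For the utility, supply feasibility $x_{ik}^*\le 1$ (which holds even per-buyer since total allocations per item sum to at most $1$) yields the first-pass bound $u_i^\rho = \sum_k v_{ik}(x_{ik}^*)^\rho \le \sum_k v_{ik} \le m\max_k v_{ik}$. Plugging both bounds into the demand identity and using $v_{ij}/\max_k v_{ik}\ge \omega_2(i)$ for $j$ with $v_{ij}>0$ already delivers a lower bound of the form $\omega_1(i)^{1/(1-\rho)}\,\omega_2(i)^{1/(1-\rho)}$.

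The main obstacle is sharpening the exponent on $\omega_2(i)$ from the naive $1/(1-\rho)$ to the claimed $\rho(\rho+1)/(1-\rho)$. The route I would try is a bootstrap argument: apply the preliminary bound to the buyer's best-valued item $k^*:=\arg\max_\ell v_{i\ell}$ (where $v_{ik^*}/\max_\ell v_{i\ell}=1$, so only the $\omega_1(i)^{1/(1-\rho)}$ factor survives), then re-enter this lower bound on $x_{ik^*}^*$ into the KKT equality $u_i^\rho = B_i v_{ik^*}/(p_{k^*}^*(x_{ik^*}^*)^{1-\rho})$, together with the lower bound $p_{k^*}^* \ge B_i/m$ derived from the same identity and $x_{ik^*}^*\le 1$, to produce a second-generation upper bound on $u_i^\rho$ that now picks up an extra factor of $\omega_1(i)$. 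Feeding this refined $u_i^\rho$ estimate back into the demand identity for $x_{ij}^*$ and carefully collecting the exponents of $\omega_1(i)$ and $\omega_2(i)$ through the $1/(1-\rho)$ power should recover the stated bound. I expect the bookkeeping in this fixed-point-style argument to be the main source of difficulty, as exponents of $\rho$ and $1-\rho$ interact multiplicatively and extracting precisely the factor $\rho(\rho+1)/(1-\rho)$ (rather than something like $1/(1-\rho)$ or $\rho/(1-\rho)$) requires tracking which bound is tight and at which iterate. Comparison to \citet{zhang2011proportional} Lemma 8 suggests this bootstrap is exactly where their analysis stops one step early, which is what enables the tighter exponent here.
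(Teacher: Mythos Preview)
Your CES demand identity and the preliminary bound $x_{ij}^*\ge \omega_1(i)^{1/(1-\rho)}\omega_2(i)^{1/(1-\rho)}$ are correct. The gap is the bootstrap. Carry it out explicitly: with $p_{k^*}^*\ge B_i/m$ and $(x_{ik^*}^*)^{1-\rho}\ge \omega_1(i)$ inserted into $u_i^\rho = B_i v_{ik^*}\big/\bigl(p_{k^*}^*(x_{ik^*}^*)^{1-\rho}\bigr)$ you obtain
\[
u_i^\rho \;\le\; \frac{B_i v_{ik^*}}{(B_i/m)\,\omega_1(i)} \;=\; \frac{m\,v_{ik^*}}{\omega_1(i)},
\]
which is \emph{larger} than your first-generation bound $u_i^\rho\le m\,v_{ik^*}$, since $\omega_1(i)\le 1$. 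Both lower bounds you feed into the denominator were themselves consequences of $u_i^\rho\le m\,v_{ik^*}$, so re-entering them cannot sharpen it; the iteration moves the wrong way. Moreover nothing in this loop ever introduces an $\omega_2$ factor into the $u_i^\rho$ estimate, so the target exponent $\rho(\rho+1)/(1-\rho)$ cannot emerge from it. (Incidentally, for $\rho\ge(\sqrt{5}-1)/2$ one has $\rho(\rho+1)\ge 1$ and your preliminary bound already implies the stated one, so the hole is only for small $\rho$---but there it is genuine.)

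The paper does not iterate on $u_i^\rho$ at all. It works in bid space with $b_{ij}^*=p_j^*x_{ij}^*$, uses the PR fixed-point identity to link $b_{ij}^*$ and $b_{ik}^*$ for an auxiliary item $k$, and chooses $k$ by \emph{pigeonhole on bids} rather than by maximal valuation: since $\sum_\ell b_{i\ell}^*=B_i$, some $k$ has $b_{ik}^*\ge B_i/m$, which immediately yields $p_k^*\ge b_{ik}^*\ge B_i/m$. The $\omega_2$ dependence then enters only through the valuation ratio $v_{ij}/v_{ik}$ in the fixed-point identity and through a lower bound on $p_j^*$ obtained via $p_j^*\ge b_{ij}^*$ combined with that same identity. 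The improvement over the crude estimate comes from controlling \emph{prices} through bids and the cross-item relation, not from recycling a bound on $u_i^\rho$.
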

In \cref{app:ces}, we show how to use this bound to perform safe extrapolation. This yields the following theorem for applying \bcdeg\ to CES utilities. Due to its similarity to the linear case, we give the full algorithm in the appendix. 
The theorem is a direct consequence of \cref{lem:ces-helper-lemma} and \citet[Theorem 7]{richtarik2014iteration}.
\begin{theorem}
    Let $x^k$ be the random iterates generated by \bcdeg\; for CES utility function ($\rho \in (0, 1)$) with stepsizes $\eta_j^k = 1 / L_j \; \forall\; k$, where
    \begin{equation} 
        L_j = \max_{i\in [n]}{\frac{B_i v_{ij} \underline{x}_{ij}^{\rho - 2}}{\underline{u}_i(\rho)}} \quad \text{for all } j \in [m] 
        \label{eq:stepsize-bcdeg-ces} 
    \end{equation}
    and $\underline{u}_i(\rho) = \sum_j v_{ij} {\underline{x}^*_{ij}}^\rho$ (defined in \cref{lem:ces-helper-lemma}).
    Then,  
    \begin{equation}
        \EE\big[F(x^k)\big] - F^* \le \Big( 1 - \frac{\mu(L)}{m} \Big)^k \left( F(x^0) - F^* \right)
        \label{eq:ces-coordinate-continuous-3}
    \end{equation}
    where $\mu(L)$ is the strong-convexity modulus w.r.t. the weighted norm $\sum_j L_j \lVert \cdot \rVert^2_{\cdot j}$. 
    \label{theorem:ces-convergence}
\end{theorem}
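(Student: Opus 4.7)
The plan is to mirror the linear-utility case and then invoke Theorem 7 of \citet{richtarik2014iteration}, which gives the rate $(1-\mu/m)^k$ for randomized BCD once the smooth part admits block-coordinate Lipschitz constants $L_j$ and the full objective is strongly convex with modulus $\mu$ with respect to the weighted norm $\sum_j L_j\|\cdot\|^2_{\cdot j}$. My three tasks are therefore: (a) justify a safe smooth extrapolation $\tilde f$ of the CES EG objective so that $\nabla \tilde f$ exists on the whole box; (b) identify the block-Lipschitz constants as those of \eqref{eq:stepsize-bcdeg-ces}; and (c) read off the rate from the cited theorem.

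For (a), I would invoke \cref{lem:ces-helper-lemma} to conclude $x^*_{ij}\ge \underline x^*_{ij}$ whenever $v_{ij}>0$, so any equilibrium lies in the interior region on which the CES log-utility is smooth. Mirroring the linear quadratic-extrapolation device used before \cref{lemma:coordinate-continuous}, I would replace $-(B_i/\rho)\log\langle v_i, x_i^\rho\rangle$ by a convex $C^{1,1}$ surrogate that agrees with the original function on the set $\{\langle v_i, x_i^\rho\rangle\ge \underline u_i(\rho)\}$ and continues as its second-order Taylor expansion below. Since $\underline u_i(\rho)=\sum_j v_{ij}\underline x_{ij}^{*\rho}$ is defined exactly at the coordinatewise equilibrium lower bounds, the argument of \citet[Lemma 1]{gao2020first} carries over unchanged and shows that the optimal set of \eqref{program:eg} is preserved under the substitution $f\leftarrow\tilde f$.

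For (b), I would bound the diagonal of the block Hessian $\nabla^2_{\cdot j}\tilde f$ directly. On the CES branch, a chain-rule computation through $u_i=\langle v_i,x_i^\rho\rangle^{1/\rho}$ and $\log u_i$ bounds $\partial^2\tilde f_i/\partial x_{ij}^2$ by a positive multiple of $B_i v_{ij} x_{ij}^{\rho-2}/\langle v_i,x_i^\rho\rangle$ (the off-diagonal $(i,j)\,(i,j')$ cross terms vanish because only the $j$th block moves); on the extrapolation branch, the matching quadratic coefficient dominates. Using $\rho-2<0$ together with the coordinatewise lower bound $x_{ij}\ge\underline x^*_{ij}$ and the utility lower bound $\langle v_i,x_i^\rho\rangle\ge\underline u_i(\rho)$, one arrives at exactly \eqref{eq:stepsize-bcdeg-ces} after maximizing over $i$. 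This is the direct CES analogue of \cref{lemma:coordinate-continuous}.

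With (a) and (b) in hand, Theorem 7 of \citet{richtarik2014iteration} applies verbatim to $F=\tilde f+\psi$: uniform block sampling with probability $1/m$ and stepsizes $\eta_j^k=1/L_j$ yield $\EE[F(x^k)]-F^*\le(1-\mu(L)/m)^k(F(x^0)-F^*)$, where $\mu(L)$ is defined in the statement as the strong-convexity modulus of $F$ with respect to the weighted norm; strong convexity on the effective domain follows from the strict convexity of the CES $-\log$ composition together with the quadratic extrapolation on the complementary region. The main obstacle is the bookkeeping in step (b): correctly tracking the factor $v_{ij}\underline x_{ij}^{\rho-2}/\underline u_i(\rho)$ that replaces the simpler $v_{ij}^2/\underline u_i^2$ of the linear case, and checking that the extrapolation branch contributes no larger diagonal Hessian entry than the CES branch so that a single $L_j$ suffices for all feasible points.
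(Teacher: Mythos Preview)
Your overall strategy matches the paper's---establish a CES analogue of \cref{lemma:coordinate-continuous} and then invoke \citet[Theorem~7]{richtarik2014iteration}---but your extrapolation step (a) is incomplete, and this causes step (b) to fail.

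The difficulty specific to CES with $\rho\in(0,1)$ is that the \emph{inner} map $x_{ij}\mapsto v_{ij}x_{ij}^\rho$ is not $C^{1,1}$ on $[0,1]$: its second derivative $\rho(\rho-1)v_{ij}x_{ij}^{\rho-2}$ blows up as $x_{ij}\to 0$. Your proposal extrapolates only the outer function $g_i(u)=-\tfrac{B_i}{\rho}\log u$ below the threshold $\underline u_i(\rho)$, exactly as in the linear case. But on your ``CES branch'' $\{\langle v_i,x_i^\rho\rangle\ge\underline u_i(\rho)\}$ the iterates can still have individual coordinates $x_{ij}$ arbitrarily close to zero (the constraint is only on the aggregate utility), so the diagonal Hessian entry $\partial^2\tilde f_i/\partial x_{ij}^2$ remains unbounded and no finite block-Lipschitz constant $L_j$ exists for your surrogate. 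Your sentence ``using $\rho-2<0$ together with the coordinatewise lower bound $x_{ij}\ge\underline x^*_{ij}$'' is where the argument breaks: \cref{lem:ces-helper-lemma} bounds $x^*_{ij}$ at the \emph{optimum}, not along the BCD iterates.

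The paper's fix is to perform a \emph{second} quadratic extrapolation, this time on each $u_{ij}^\rho(x_{ij})=v_{ij}x_{ij}^\rho$ below the coordinate threshold $\underline x_{ij}$ (see \eqref{eq:ces-quad-extrapolation-u}). With both $\tilde g_i$ and $\tilde u_{ij}^\rho$ extrapolated, the chain-rule Hessian
\[
\partial_{x_{ij}}^2\tilde f \;=\; \tilde g_i''(\tilde u_i)\,(\partial_{x_{ij}}\tilde u_i)^2 \;+\; \tilde g_i'(\tilde u_i)\,\partial_{x_{ij}}^2\tilde u_i
\]
is globally bounded by $B_i v_{ij}\underline x_{ij}^{\rho-2}/\underline u_i(\rho)$, which is exactly \eqref{eq:stepsize-bcdeg-ces}. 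The first extrapolation is justified (as you note) by the utility lower bound; the second is justified by \cref{lem:ces-helper-lemma}, which ensures the optimal set is preserved. Once you add this second layer of extrapolation, your steps (b) and (c) go through as written.
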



\paragraph{BCPR for CES utility function.}
Unlike for linear utilities, the EG program for CES utility cannot be converted to a simple dual problem. Hence, we cannot view PR for $\rho \in (0, 1)$ as a mirror-descent algorithm and analyze it with typical relative smoothness techniques. However, \citet{zhang2011proportional} nonetheless showed convergence of PR for $\rho \in (0, 1)$. We show that we can still extend their proof to show convergence of block coordinate PR for CES utility. 
\begin{theorem}
    Let $b^k$ be the random iterates generated by \bcpr\; for CES utility function ($\rho \in (0, 1)$). For any $\epsilon > 0$, when 
    \begin{equation}
        k \ge \frac{ 2 \log{\frac{\sqrt{8 D(b^*, b^0)} W^{\frac{1}{1-\rho}}}{\epsilon}} }{\log{\frac{n}{n - 1 + \rho}}},
        W = \frac{n}{\min_{v_{ij} > 0} v_{ij} \cdot \min_i B_i}, 
    \end{equation}
    we have $\expect*{ \frac{\lvert b_{ij} - b_{ij}^* \rvert}{b_{ij}^*} } \le \epsilon$ for all $i, j$ such that $v_{ij} > 0$. 
    \label{thm:bcpr-ces-convergence}
\end{theorem}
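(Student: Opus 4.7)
The plan is to adapt \citet{zhang2011proportional}'s analysis of proportional response for CES markets to the randomized block-coordinate setting, using the total KL divergence $D(b^*, b) = \sum_i D_i(b_i^*, b_i)$ as the Lyapunov function. This route avoids the mirror-descent/relative-smoothness template used for \bcpr\ on linear utilities, which does not extend to CES with $\rho \in (0, 1)$, as noted in the exposition preceding the theorem.

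First I would establish a single-buyer deterministic contraction: whenever BCPR selects block $i_k = i$, the CES-appropriate choice of stepsize in \eqref{eq:bcpr-update-simple} produces $b_i^+$ satisfying
\begin{equation}
    D_i(b_i^*, b_i^+) \le \rho \cdot D_i(b_i^*, b_i),
\end{equation}
with the current prices $p$ treated as fixed inputs. This is the single-buyer specialization of Zhang's CES contraction, leveraging the multiplicative-weights form of the update, the first-order optimality of $b^*$, and the concavity/homogeneity of CES for $\rho \in (0, 1)$. Since the contributions $D_{i'}$ for $i' \neq i$ depend only on $b_{i'}$ and are untouched by a block update of buyer $i$, we obtain the pathwise bound $D(b^*, b^+) \le D(b^*, b) - (1-\rho) D_{i_k}(b_{i_k}^*, b_{i_k})$. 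Averaging over the uniform random choice of $i_k$ then yields
\begin{equation}
    \EE\bigl[D(b^*, b^{k+1}) \mid b^k\bigr] \le \frac{n-1+\rho}{n}\, D(b^*, b^k),
\end{equation}
which iterates to $\EE[D(b^*, b^k)] \le \gamma^k D(b^*, b^0)$ with $\gamma = (n-1+\rho)/n$.

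Next I would convert this expected divergence bound into the per-coordinate bid-error bound claimed in the theorem. Pinsker's inequality applied to each buyer's scaled simplex (since $\sum_j b_{ij} = \sum_j b_{ij}^* = B_i$) gives $\|b_i^* - b_i\|_1 \le \sqrt{2 B_i\, D_i(b_i^*, b_i)} \le \sqrt{2 B_i\, D(b^*, b)}$, whence Jensen's inequality (applied to the concave square root) yields
\begin{equation}
    \EE\!\left[\frac{|b_{ij} - b_{ij}^*|}{b_{ij}^*}\right] \le \frac{\sqrt{2 B_i}}{b_{ij}^*}\sqrt{\EE[D(b^*, b^k)]} \le \frac{\sqrt{2 B_i\, D(b^*, b^0)}}{b_{ij}^*}\,\gamma^{k/2}.
\end{equation}
Combining \cref{lem:ces-helper-lemma} with the identity $b_{ij}^* = x_{ij}^* p_j^*$ and a lower bound on the equilibrium price $p_j^*$ (from the budget-balance and market-clearing KKT conditions of \eqref{program:eg}) yields $b_{ij}^* \ge \tfrac{1}{2}\sqrt{B_i}\,W^{-1/(1-\rho)}$ whenever $v_{ij} > 0$, after which requiring the right-hand side above to be at most $\epsilon$ and taking logarithms reproduces $k \ge 2\log(\sqrt{8 D(b^*, b^0)}\, W^{1/(1-\rho)}/\epsilon)/\log(n/(n-1+\rho))$.

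The main obstacle is the per-buyer contraction in the first step: Zhang's CES proof updates all buyers simultaneously, so one must verify that the contraction of $D_i(b_i^*, b_i^+)$ under a single-block update depends only on $b_i$ and the current prices $p$, not on the other buyers' bids, which requires re-examining each step of Zhang's potential-function argument under the frozen-prices viewpoint. A secondary difficulty is the bid lower bound: \cref{lem:ces-helper-lemma} only controls allocations $x_{ij}^*$ directly, so it must be paired with an equilibrium-price lower bound derived from market-clearing and the valuation/budget bounds encoded in $W$, with all numerical constants absorbed into the final expression.
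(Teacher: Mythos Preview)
Your main step---the per-buyer deterministic contraction $D_i(b_i^*, b_i^+) \le \rho\, D_i(b_i^*, b_i)$---is false, and this is not just a difficulty to be ``re-examined'' but a genuine obstruction. Take any iterate with $b_i^k = b_i^*$ but $p^k \neq p^*$ (which happens whenever other buyers are off equilibrium). Then $D_i(b_i^*, b_i^k) = 0$, yet the CES proportional-response update $T_{ij}^k \propto v_{ij}(b_{ij}^k/p_j^k)^\rho$ moves buyer $i$ away from $b_i^*$ because the prices are wrong, so $D_i(b_i^*, T_i^k) > 0$. Hence no pathwise inequality of the form $D(b^*, b^+) \le D(b^*, b) - (1-\rho)D_{i_k}(b_{i_k}^*, b_{i_k})$ can hold.

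What the paper does instead is compute the per-buyer \emph{identity}
\[
D_i(b_i^*, T_i^k) = \rho\, D_i(b_i^*, b_i^k) - \rho \sum_j b_{ij}^* \log\tfrac{p_j^*}{p_j^k} - B_i \log\tfrac{u_i^*(\rho)}{u_i^k(\rho)},
\]
and only after summing over all $i$ do the two cross terms become nonnegative: $\sum_{i,j} b_{ij}^* \log(p_j^*/p_j^k) = \DKL(p^*,p^k) \ge 0$ and $\sum_i B_i \log(u_i^*(\rho)/u_i^k(\rho)) \ge 0$ by optimality of \eqref{program:eg}. This yields $\sum_i D_i(b_i^*, T_i^k) \le \rho\, m(k)$, and plugging into $\EE[m(k+1)\mid b^k] = \tfrac{1}{n}\sum_i D_i(b_i^*, T_i^k) + \tfrac{n-1}{n}m(k)$ gives the same $\tfrac{n-1+\rho}{n}$ contraction you state---but only \emph{in conditional expectation}, never pathwise. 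The averaging over the random block is essential, not incidental.

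Your conversion step also differs from the paper's. Rather than Pinsker plus \cref{lem:ces-helper-lemma} and a separate price bound, the paper invokes \citet[Lemma~9]{zhang2011proportional}, which gives $\DKL(b^*,b) \ge \tfrac{1}{8}\eta^2 \min_{i,j} b_{ij}^*$ for $\eta = \max_{i,j}|b_{ij}-b_{ij}^*|/b_{ij}^*$ when $\eta<1$, together with \citet[Lemma~8]{zhang2011proportional}'s direct bid lower bound $b_{ij}^* \ge W^{-2/(1-\rho)}$. Jensen then finishes. Your claimed bound $b_{ij}^* \ge \tfrac{1}{2}\sqrt{B_i}\,W^{-1/(1-\rho)}$ is not the one used and would need its own justification.
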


\section{Numerical Experiments} 
\label{sec:experiments}

We performed numerical experiments based on both simulated (for linear and CES ($\rho \in (0, 1)$) utilities) and real data to test the scalability of our algorithms.

To measure the amount of work performed, we measure the number of accesses to cells in the valuation matrix.
For the deterministic algorithms, each iteration costs $n \times m$, whereas for our \bcdeg\ algorithms each iteration costs $n$, and for our \bcpr\ algorithms each iteration costs $m$. For algorithms that employ line search, we count the valuation accesses required in order to perform the line search as well.

To measure the accuracy of a solution, we use the duality gap and average relative difference between $u$ and $u^*$.
The instances are small enough that we can compute the equilibrium utilities $u^*$ using \citet{mosek2010mosek}.

\paragraph{Simulated low-rank instances.}
To simulate market instances, we generate a set of valuations that mimic \emph{approximately low rank valuations}, which are prevalent in real markets, and were previously studied in the market equilibrium context by \citet{kroer2019computing}.
The valuation for item $j$ and buyer $i$ is generated as: 
    $v_{ij} = v_i v_j + \epsilon_{ij}$, where $v_i \sim \mathcal{N}(1, 1)$, $v_j \sim \mathcal{N}(1, 1)$, and $\epsilon_{ij} \sim \emph{uniform}(0, 1)$.
Here, buyer $i$'s valuation for item $j$ consists of three parts: a value of item $j$ itself ($v_j$), buyers $i$'s average valuation ($v_i$), and a random term $\epsilon_{ij}$. 
We consider markets with $n = m = 400$. 
All budgets and supplies are equal to one. 

\paragraph{Movierating instances.}
We generate a market instance using a movie rating dataset collected from twitter called \emph{Movietweetings}~\citep{dooms2013movietweetings}. 
Here, users are viewed as the buyers, movies as items, and ratings as valuations. 
Each buyer is assigned a unit budget and each item unit supply. We use the ``snapshots $200$K'' data set and remove users and movies with too few entries. Using the matrix completion software \emph{fancyimpute}~\citep{fancyimpute}, we estimate missing valuations. 
The resulting instance has $n = 691$ buyers and $m = 632$ items. 

First we compare each of our new algorithms in terms of the different stepsize strategies: \bcdeg\ vs. \bcdegls, \bcpr\ vs. \abcpr\ vs. \bcprls, and \pr\ vs \prls.
The results are shown in \cref{fig:compare-simulated-data} and \cref{fig:compare-real-data} in the upper left, upper right, and lower left corners. 
In all cases we see that our new line search variants perform the best.

Second, we then compare our block-coordinate algorithms to the best deterministic state-of-the-art market equilibrium algorithms: \pgls\ \citep{gao2020first} and \prls.
The results are shown in \cref{fig:compare-simulated-data} and \cref{fig:compare-real-data} in the lower right corner.
\begin{figure}
        \centering
        \includegraphics[width=0.23\textwidth]{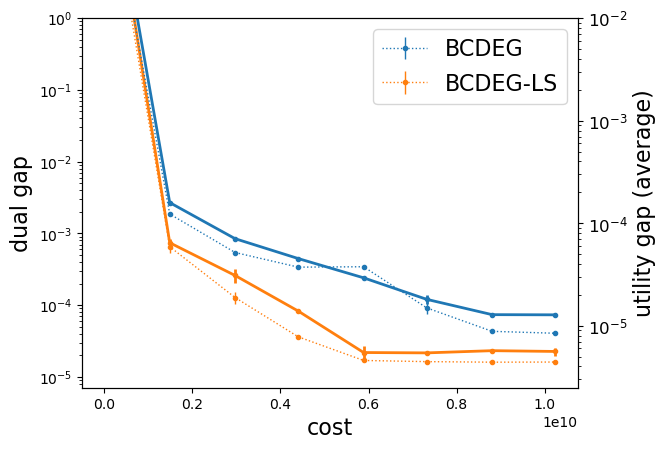}
        \includegraphics[width=0.23\textwidth]{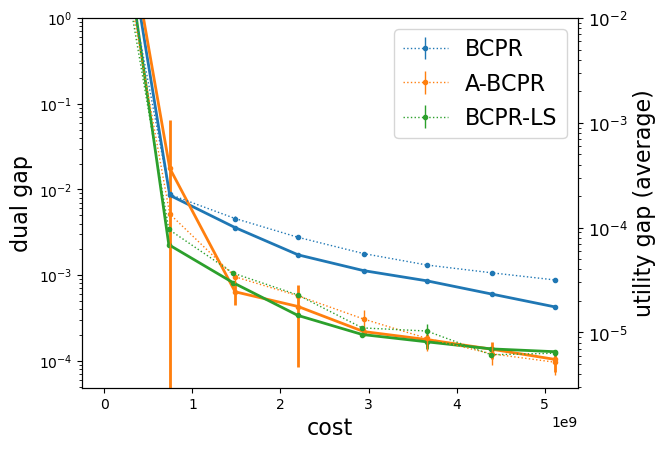}
        \includegraphics[width=0.23\textwidth]{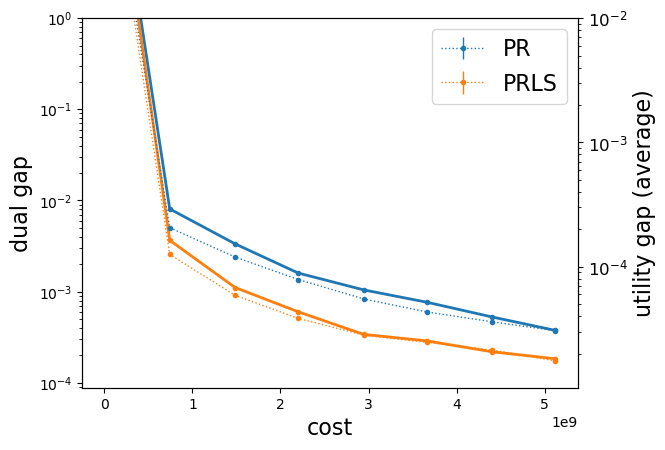}
        \includegraphics[width=0.23\textwidth]{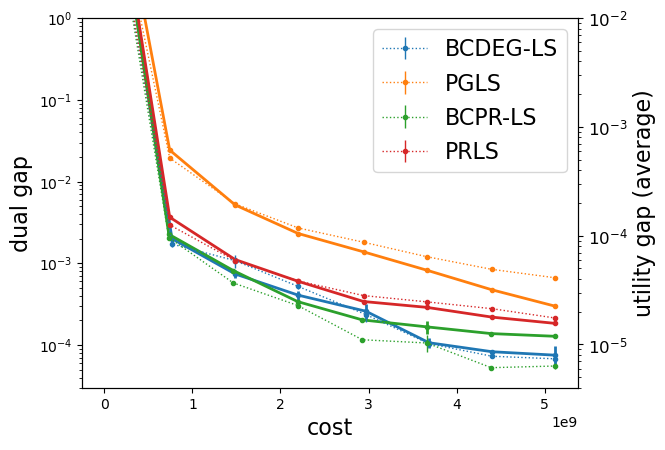}
    \caption{Performance on simulated low-rank instances. Random algorithms were implemented with seeds $0-9$. We also plotted vertical bars representing standard deviations across different seeds. The left y-axis shows performance in terms of the duality gap (solid lines for each algorithm) while the right y-axis shows performance in terms of utilities (dotted line for each algorithm). The x-axis shows units of work performed. }
    \label{fig:compare-simulated-data}
\end{figure}
For all markets either \bcdeg\ or \bcprls\ is best on all  metrics, followed by \prls, and \pgls\ in order. 
In general, we see that the stochastic block-coordinate algorithms converge faster than their deterministic counterparts across the board, even after thousands of iterations of the deterministic methods. Thus, block-coordinate methods seem to be better even at high precision, while simultaneously achieving better early performance due to the many more updates performed per unit of work.

\begin{figure}
        \centering
        \includegraphics[width=0.23\textwidth]{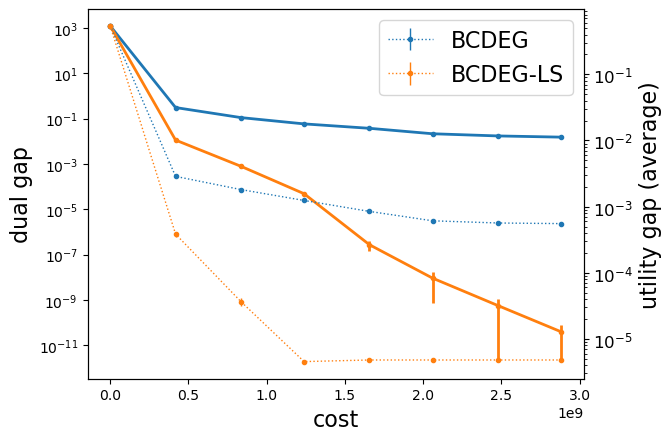}
        \includegraphics[width=0.23\textwidth]{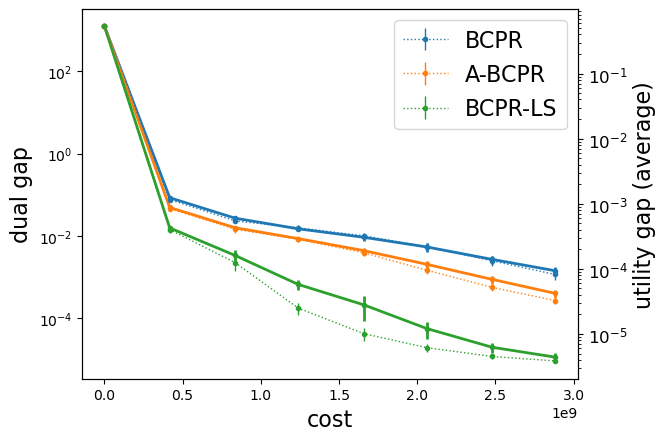}
        \includegraphics[width=0.23\textwidth]{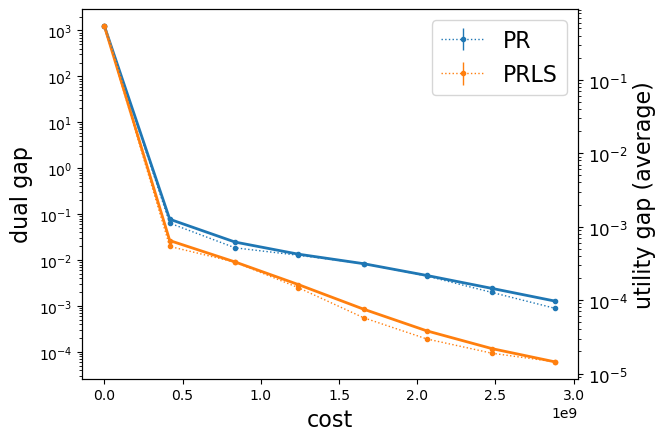}
        \includegraphics[width=0.23\textwidth]{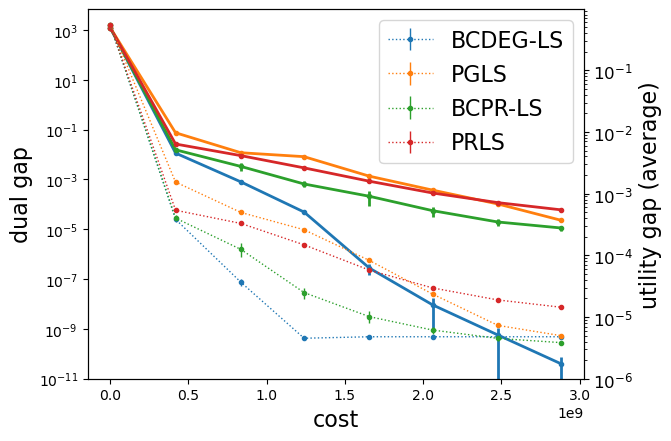}
    \caption{
        Performance on movierating instances. The plot setup is the same as in \cref{fig:compare-simulated-data}.
    }
    \label{fig:compare-real-data}
\end{figure}

\begin{figure}
        \hspace{2pt}
        \includegraphics[width=0.20\textwidth]{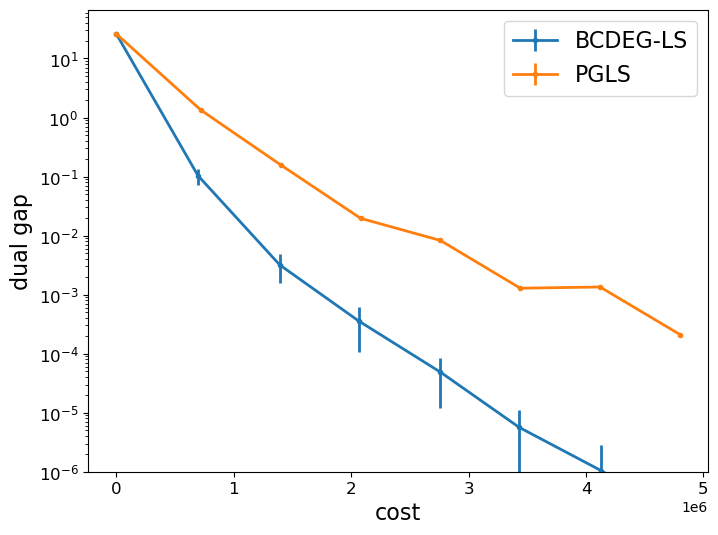} \hspace{11pt}
        \includegraphics[width=0.20\textwidth]{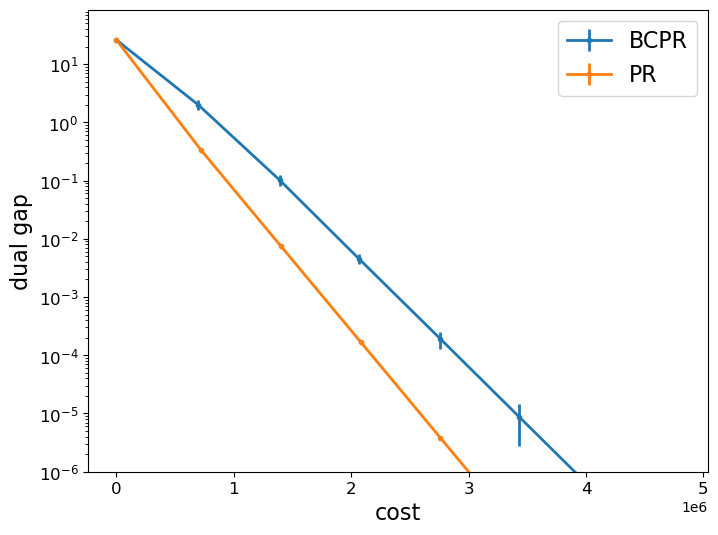}
    \caption{Performance on simulated instances with CES utility with $\rho \in (0, 1)$. The setup is the same as in \cref{fig:compare-simulated-data}.
    }
    \label{fig:compare-simulated-data-ces}
\end{figure}

\paragraph{CES utilities.}
Similar to linear utilities, we generate a $n = m = 200$ scale market instance with the CES-utility parameters generated as $v_{ij} = v_i v_j + \epsilon_{ij}$, where $v_i \sim \mathcal{N}(1, 0.2^2)$, $v_j \sim \mathcal{N}(1, 0.2^2)$, and $\epsilon_{ij} \sim \emph{uniform}(0, 0.2)$. 
For the CES instances we were not able to obtain high-accuracy solutions from existing conic solvers, and thus we only measure performance in terms of the dual gap.
Somewhat surprisingly, we find that for CES utilities vanilla \pr\ converges very fast; faster than \bcpr\ and all versions of \bcdeg. 
Moreover, \bcdeg\ suffers from extremely small stepsizes, so we have to use \bcdegls. 
Here, we used specific small parameters for the distributions in the simulated utilities. 
More discussion and experiments on CES utilities are given in \cref{app:ces}.

\section{Conclusion and Future Work}
We proposed two stochastic block-coordinate algorithms for computing large-scale ME: (proximal) block-coordinate descent on EG (\bcdeg) and block-coordinate proportional response (\bcpr). For each algorithm we provided  theoretical convergence guarantees and showed numerically that they outperform existing state-of-the-art algorithms.
We also provided a new economic interpretation of the projected gradient update used in {\bcdeg}.
For future work, we are interested in deriving a sublinear convergence rate for {\bcpr} with adaptive stepsizes, extending it to leverage distributed and parallel computing capabilities, and allowing more general dynamic settings and other buyer utility models.


\bibliography{refs.bib}


\newpage
\onecolumn
\appendix
\section{Block Coordinate Descent Algorithm for the EG Program}
\label{app:bcdeg}

\subsection{An efficient implementation of \bcdeg}

In order to implement \cref{algo:bcdeg} efficiently, we temporarily store a vector $u \in \RR^n$ in memory, which represents the current utilities for all buyers. 
Then, at each iteration $k$, we can dynamically update $u$ by only substituting $x_{\cdot j_k}$ while updating a particular block $j_k$, which can be done in $O(n)$ time. 
This avoids performing calculations of the form $u_i = \inp{v_i}{x_i}$, which require $O(nm)$ time when performed for each buyer.
See \cref{algo:efficient-bcdeg} for details. 
\begin{algorithm}
    \KwIn{Initial iterate $x^0$, stepsizes $\eta_1^k, \eta_2^k, \ldots, \eta_m^k,\; \forall\; k \in \NN$}
        $u_i^0 \gets \inp{v_i}{x_i^0}$ for all $i$\;
        \For{$k \gets 1, 2, \ldots $} {
            pick $j_k \in [m]$ with probability $1 / m$\;
            \For{$i \gets 1, 2, \cdots, n$}{
                \uIf{$u_i^{k-1} \ge \underline{u}_i$}{
                    $g_i^{k-1} \gets - B_i \log{u_i^{k-1}}$
                }
                \Else{
                    $g_i^{k-1} \gets - B_i \log{\underline{u}_i} - \frac{B_i}{\underline{u}_i} \left( u_i^{k-1} - \underline{u}_i \right) + \frac{B_i}{2\underline{u}_i^2} \left( u_i^{k-1} - \underline{u}_i \right)^2$
                }
            }
            $x^k_{\cdot j_k} \gets \text{Prox}(x^{k-1}_{\cdot j_k} - \eta_{j_k}^k g^{k-1})$\;
            $x^k_{\cdot j} \gets x^{k-1}_{\cdot j} \quad \forall j \neq j_k$\;
            $u^k \gets u^{k-1} + v_{j_k}({x^k_{\cdot j_k}} - {x^{k-1}_{\cdot j_k}})$\;
        }
    \caption{An efficient implementation of \bcdeg}
    \label{algo:efficient-bcdeg}
\end{algorithm}


\subsection{(Proximal) Block Coordinate Descent for the EG Program with Line Search}

We formally state \bcdegls\ here. This algorithm often outperforms others in our numerical experimental results. 
The difference to \bcdeg\ is that we check a descent condition (the if condition in \cref{algo:bcdeg-ls}) after performing a tentative proximal step. If the descent condition fails the check (the true case of the if-else statement), then we decrease the stepsize and perform another tentative step. If it passes the check then we commit to the tentative proximal step.
\begin{algorithm}
    \KwIn{Initial iterate $x^0$, initial stepsizes $\eta_1^0, \eta_2^0, \ldots, \eta_m^0$, $\rho_-, \rho_+ \in \RR^+$}
        \For{$k \gets 1, 2, \ldots $} {
            pick $j_k \in [m]$ with probability $1 / m$\;
            $g^{k-1} \gets \nabla_{\cdot j_k} \tilde{f}(x^{k - 1})$\;
            $(*)$ $x^+_{\cdot j_k} \gets \text{Prox}_{\Delta_{n}} \left( x^{k-1}_{\cdot j_k} - \eta_{j_k} g^{k-1} \right)$\;
            $g^+ \gets \nabla_{\cdot j_k} \tilde{f}(x^+)$\;
            \uIf{$\eta_{j_k} \lVert g^+ - g^{k-1} \rVert > \lVert x^+_{\cdot j_k} - x^{k-1}_{\cdot j_k} \rVert$}{
                $\eta_{j_k} \gets \max\left\{ \rho_- \eta_{j_k}, 1 / L_j \right\}$ where $L_j$ is computed by \eqref{eq:stepsize-bcd-eg}\;
                go back to $(*)$ and re-compute $x^+_{\cdot j_k}$ and $g^+$\;
            }
            \Else{
                $\eta_{j_k} \gets \rho_+ \eta_{j_k}$\;
                $x^k_{\cdot j_{k}} \gets x^+_{\cdot j_k}$ and $x^{(k)}_{\cdot j} \gets x^{(k-1)}_{\cdot j} \quad \forall\; j \neq j_{k}$\;                
            }
        }
    \caption{BCDEG-LS: (Proximal) Block Coordinate Descent for the EG Program with Line Search}
    \label{algo:bcdeg-ls}
\end{algorithm}

\subsection{Proof of \cref{lemma:coordinate-continuous}}
\label{app:stepsize-bcd-eg}

\begin{proof}

It can be verified by simple calculations that both $\tilde{f}_i(x)$ and $\nabla \tilde{f}_i(x)$ are continuous on their domains. 
Each entry of Hessian matrix of $\tilde{f}$ is of the form: 
\begin{equation*}
    \nabla_{ij}^2 \tilde{f}(x) = 
    \begin{cases} 
    {B_i v_{ij}^2}/{u_i^2(x)} & u_i(x) \ge \underline{u}_i \\ 
    {B_i v_{ij}^2}/{\underline{u}_i^2} & u_i(x) < \underline{u}_i. 
    \end{cases}
\end{equation*}

Let $x$ and $y$ be two vectors which only differ in one coordinate $j$. 
Then, 
\begin{equation}
    \left\lvert \nabla_{ij} \tilde{f}(y) - \nabla_{ij} \tilde{f}(x) \right\rvert = \left\lvert \nabla_{ij} \tilde{f}(y_i) - \nabla_{ij} \tilde{f}(x_i) \right\rvert \le \left\lvert \frac{B_i v_{ij}^2}{\underline{u}_i^2} \right\rvert \lvert y_{ij} - x_{ij} \rvert = L_j \lVert y_i - x_i \rVert, 
\end{equation}
where the inequality is due to 
\[
    \nabla_{ij}^2 \tilde{f}(x) \le \frac{B_i v_{ij}^2}{\underline{u}_i^2} \le \max_{i\in [n]}{\frac{B_i v_{ij}^{2}}{\underline{u}_i^2}} = L_j. 
\]

Therefore, 
\begin{equation} 
    \left\lVert \nabla_{\cdot j} \tilde{f}(y) - \nabla_{\cdot j} \tilde{f}(x) \right\rVert^2 = \sum_i \left\lvert \nabla_{ij} \tilde{f}(y) - \nabla_{ij} \tilde{f}(x) \right\rvert^2 \le L_j^2 \sum_i \lVert y_i - x_i \rVert^2 = L_j^2 \lVert y - x \rVert^2 = L_j^2 \lVert y_{\cdot j} - x_{\cdot j} \rVert^2, 
\label{eq:bcdeg-proof-smoothness-sqrt}
\end{equation}
that is, 
\begin{equation}
    \left\lVert \nabla_{\cdot j} \tilde{f}(y) - \nabla_{\cdot j} \tilde{f}(x) \right\rVert \le L_j \lVert y_{\cdot j} - x_{\cdot j} \rVert. 
\label{eq:bcdeg-proof-smoothness}
\end{equation}

To derive \eqref{eq:coordinate-continuous-3}, we can apply integration: 
\begin{align}
    \tilde{f}(y) - \tilde{f}(x) - \inp{\nabla_{\cdot j} \tilde{f}}{y - x} &= \int_0^1 \inp{\nabla_{\cdot j} \tilde{f}(x + \tau (y-x)) - \nabla_{\cdot j} \tilde{f}(x)}{y - x}\; d\tau \nonumber \\ 
    &\le \int_0^1 \lVert \nabla_{\cdot j} \tilde{f}(x + \tau (y-x)) - \nabla_{\cdot j} \tilde{f}(x) \rVert \lVert y - x \rVert\; d\tau \nonumber \\
    &\le \int_0^1 \tau L_j \lVert y - x \rVert^2 \; d\tau = \frac{L_j}{2} \lVert y - x \rVert^2. 
\end{align}

\end{proof}

\paragraph{Remark.} As a comparison, we provide a proof for a standard (global) Lipschitz constant $L$ as follows. 
Comparing this with the above proof, we show that when only one block of the variables $x_{\cdot j}$ change, the coordinate-wise Lipschitz constant $L_j$ is bounded by the maximum (over all $i$) diagonal value of the Hessian matrix. 
In contrast, for $L$ it depends on the maximum (over all $i$) sub-trace of the Hessian matrix. 

\begin{lemma}
For any $j \in [m]$, let 
\begin{align} 
    L = \max_{i\in [n]}{\frac{B_i \lVert v_i \rVert^2}{\underline{u}_i^2}}.
\label{eq:stepsize-pg} 
\end{align}
Then, for all $x, y \in \mathcal{X}$, we have
\begin{align}
    \tilde{f}(y) \le \tilde{f}(x) + \inp{\nabla \tilde{f}(x)}{y - x} + \frac{L}{2}{\lVert y - x \rVert^2}.  
    \label{eq:full-continuous-3}
\end{align}
\label{lemma:full-continuous}
\end{lemma}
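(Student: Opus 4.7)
The plan is to mirror the proof of \cref{lemma:coordinate-continuous} but track the full Hessian instead of a single diagonal entry. The key structural observation is that $\tilde f(x) = \sum_i \tilde f_i(x_i)$ decomposes across buyers, so $\nabla^2 \tilde f(x)$ is block-diagonal with blocks $\nabla^2 \tilde f_i(x_i) \in \RR^{m \times m}$; the global gradient-Lipschitz constant is therefore the supremum over $\mathcal{X}$ of the maximum operator norm of these blocks.

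First I would compute each block in closed form. On the region $u_i(x) \ge \underline u_i$ the function $\tilde f_i(x_i) = -B_i \log \inp{v_i}{x_i}$ has Hessian $\frac{B_i}{u_i(x)^2}\, v_i v_i^{\top}$, while on the region $u_i(x) < \underline u_i$ the quadratic extrapolation gives Hessian $\frac{B_i}{\underline u_i^2}\, v_i v_i^{\top}$. In both cases the block is a positive scalar multiple of the rank-one matrix $v_i v_i^{\top}$, whose operator norm equals $\lVert v_i \rVert^2$. Using $u_i(x)^2 \ge \underline u_i^2$ in the first case produces the single uniform bound $\lVert \nabla^2 \tilde f_i(x_i) \rVert_{\mathrm{op}} \le B_i \lVert v_i \rVert^2 / \underline u_i^2$ valid for every $x \in \mathcal{X}$.

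Second, combining the blockwise bounds through the block-diagonal structure, for any direction $h = (h_1, \ldots, h_n)$ I get $\langle \nabla^2 \tilde f(x) h, h \rangle = \sum_i \langle \nabla^2 \tilde f_i(x_i) h_i, h_i \rangle \le L \sum_i \lVert h_i \rVert^2 = L \lVert h \rVert^2$ with $L$ as in \eqref{eq:stepsize-pg}, i.e., an operator-norm bound on the full Hessian. Following the same sequence of steps as in \eqref{eq:bcdeg-proof-smoothness-sqrt}--\eqref{eq:bcdeg-proof-smoothness}, this gives $\lVert \nabla \tilde f(y) - \nabla \tilde f(x) \rVert \le L \lVert y - x \rVert$ on $\mathcal{X}$, and the descent inequality \eqref{eq:full-continuous-3} drops out of the same integration identity $\tilde f(y) - \tilde f(x) - \inp{\nabla \tilde f(x)}{y - x} = \int_0^1 \inp{\nabla \tilde f(x + \tau(y-x)) - \nabla \tilde f(x)}{\, y - x}\, d\tau$, bounded via Cauchy--Schwarz and the Lipschitz estimate just obtained.

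There is no genuine obstacle beyond bookkeeping. The one point worth flagging is that because each block $v_i v_i^{\top}$ is rank one, its operator norm coincides with its trace $\lVert v_i \rVert^2$, which is exactly the ``maximum sub-trace of the Hessian'' interpretation that the Remark contrasts with the blockwise constant $L_j = \max_i B_i v_{ij}^2 / \underline u_i^2$ of \cref{lemma:coordinate-continuous}; this is what makes the ratio $L_j / L$ roughly $\max_i v_{ij}^2 / \max_i \lVert v_i \rVert^2$ as claimed in the main text.
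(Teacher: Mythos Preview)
Your argument is correct and is in fact cleaner than the paper's. The paper does not go through the operator norm of the Hessian; instead it telescopes $\nabla_{ij}\tilde f(y)-\nabla_{ij}\tilde f(x)$ along a coordinate path from $x_i$ to $y_i$, bounds each increment by a mixed second partial $\partial^2 \tilde f/\partial x_{ij}\partial x_{ij'}\le B_i v_{ij}v_{ij'}/\underline u_i^2$, and then sums. Your approach exploits directly that each buyer block of $\nabla^2\tilde f$ is the rank-one matrix $(B_i/\max\{u_i,\underline u_i\}^2)\,v_iv_i^\top$, whose operator norm is exactly $B_i\lVert v_i\rVert^2/\underline u_i^2$; combining blocks via block-diagonality and integrating yields \eqref{eq:full-continuous-3}. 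This buys you a tight spectral bound in one step, whereas the paper's coordinate telescoping is looser: the double sum $\sum_j\sum_{j'}v_{ij}v_{ij'}$ that arises there actually equals $\lVert v_i\rVert_1^2$, not $\lVert v_i\rVert_2^2$, so strictly speaking their chain of inequalities only delivers the weaker constant $\max_i B_i\lVert v_i\rVert_1^2/\underline u_i^2$. Your rank-one observation recovers the stated $\ell_2$ constant exactly, and it is also what makes the trace/operator-norm remark at the end of your write-up precise.
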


\begin{proof}
Let $x$ and $y$ be two vectors in the domain of $\tilde{f}$, then 
    \begin{equation*}
        \left\lvert \nabla_{ij} \tilde{f}(y) - \nabla_{ij} \tilde{f}(x) \right\rvert
        = \left\lvert \nabla_{ij} \tilde{f}(y_i) - \nabla_{ij} \tilde{f}(x_i) \right\rvert
        \le \sum_{j'=0}^{n-1} \left\lvert \nabla_{ij} \tilde{f}(y^{j'}_i) - \nabla_{ij} \tilde{f}(x^{j'}_i) \right\rvert
        \le \sum_{j'=0}^{n-1} \left\lvert \frac{\partial^2}{\partial x_{ij} \partial x_{ij'}} \tilde{f}(x) \right\rvert \left\vert y_i^{j'} - x_i^{j'} \right\vert
    \end{equation*}
    where $y^{j'}_i$ is a vector whose first $(j' + 1)$'th components are copy of the first $(j' + 1)$'th components of $y_i$ and other components are copy of the last $(m - j' - 1)$'th components of $x_i$; 
    $x^{j'}_i$ is a vector whose first $j'$'th components are copy of the first $j'$'th components of $y_i$ and other components are copy of the last $(m - j')$'th components of $x_i$. 
    That is, each pair of $y^{j'}_i$ and $x^{j'}_i$ only differ in $(j' + 1)$'th component.  
    Also, we have $\vert y_i^{j'} - x_i^{j'} \vert \le \Vert y_i - x_i \Vert$ for all $i, j'$. 
    \begin{equation*}
        \sum_j \left\lvert \nabla_{ij} \tilde{f}(y) - \nabla_{ij} \tilde{f}(x) \right\rvert^2 
        \le \left( \sum_j \left\vert \nabla_{ij} \tilde{f}(y) - \nabla_{ij} \tilde{f}(x) \right\vert \right)^2
        \le \left( \sum_j \sum_{j'} \frac{B_i v_{ij} v_{ij'}}{\underline{u}_i^2} \left\Vert y_i - x_i \right\Vert \right)^2 \le L^2 \left\Vert y_i - x_i \right\Vert^2
    \end{equation*}
    since 
    \begin{equation*}
        \frac{\partial^2}{\partial x_{ij} \partial x_{ij'}} \tilde{f}(x) \le \frac{B_i v_{ij} v_{ij'}}{\underline{u}_i^2} \quad \text{and} \quad \sum_j \sum_{j'} \frac{B_i v_{ij} v_{ij'}}{\underline{u}_i^2} = \frac{B_i \lVert v_i \rVert^2}{\underline{u}_i^2} \le \max_{i \in [n]} \frac{B_i \lVert v_i \rVert^2}{\underline{u}_i^2} = L. 
    \end{equation*}
    Therefore, 
    \begin{equation} 
        \left\lVert \nabla \tilde{f}(y) - \nabla \tilde{f}(x) \right\rVert^2 = \sum_i \sum_j \left\lvert \nabla_{ij} \tilde{f}(y) - \nabla_{ij} \tilde{f}(x) \right\rvert^2 \le L^2 \sum_i \lVert y_i - x_i \rVert^2 = L^2 \lVert y - x \rVert^2. 
\end{equation}
\end{proof}
Note that this is a tighter bound than \citet{gao2020first} because $\max_{i \in [n]} \frac{B_i \lVert v_i \rVert^2}{\underline{u}_i^2} \le \max_{i \in [n]} \frac{B_i}{\underline{u}_i^2} \max_{i \in [n]} \lVert v_i \rVert^2$.

\subsection{Proof of Lemma~\ref{lemma:proximal-pl}}
\label{app:modified-proximal-pl-condition}

\begin{proof}

Since $g(u)$ is strongly convex, there exists a unique $u^*$ such that 
\[ 
    g(u^*) = g^* \quad\quad Ax^* = u^* \quad \forall x^* \in \mathcal{X}^*. 
\]

Thus, the set of optimal solutions $\mathcal{X}^*$ can be described by the following polyhedral set for some $C$: 
\[ 
    \mathcal{X}^* = \{ x^*: A x^* = u^*, C x^* \le d \}. 
\]

Assume that the optimal polyhedral set $\mathcal{X}^*$ is non-empty, then the Hoffman inequality~\citep{hoffman2003approximate} tells us there exists some positive constant 
such that
\begin{equation} 
    \lVert x - x^* \rVert \le \theta(A, C) 
    \begin{Vmatrix} 
        \begin{bmatrix}
        Ax - u^* \\
        \left( Cx - d \right)^+
        \end{bmatrix}	
    \end{Vmatrix} 
    \quad \forall\; x. 
\end{equation}
Let $\mathcal{X}$ be the set of feasible allocations. For any  $x \in \mathcal{X}$ such that $Cx \le d$, we have 
\begin{equation}
\lVert x - x^* \rVert \le \theta(A, C) \lVert Ax - Ax^* \rVert. 
\label{eq:x-bound-by-Ax}
\end{equation}

From strong convexity of $\tilde{g}$ and \eqref{eq:x-bound-by-Ax}, we have 
\begin{align}
F^* = F(x^*) & \ge F(x) + \inp{\nabla \tilde{f}(x)}{x^* - x} + \frac{\mu}{2} \lVert Ax^* - Ax \rVert^2 + \psi(x^*) - \psi(x) \nonumber \\ 
&\ge F(x) + \inp{\nabla \tilde{f}(x)}{x^* - x} + \frac{\mu}{2\theta^2(A, C)}\lVert x^* - x \rVert^2 + \psi(x^*) - \psi(x). 
\end{align}

If $L \ge \frac{\mu}{\theta^2(A, C)}$, then 
\begin{align}
F^* &\ge F(x) + \inp{\nabla \tilde{f}(x)}{x^* - x} + \frac{\mu}{2\theta^2(A, C)}\lVert x^* - x \rVert^2 + \psi(x^*) - \psi(x) \nonumber \\
&\ge F(x) + \min_{y \in \RR^{nm}} \left\{\inp{\nabla \tilde{f}(x)}{y - x} + \frac{\mu}{2\theta^2(A, C)} {\lVert y - x \rVert}^2 + \psi(y) - \psi(x) \right\} \nonumber \\
&= F(x) - \frac{\theta^2(A, C)}{2\mu}\mathcal{D}_{\psi}(x, \frac{\mu}{\theta^2(A, C)}). 
\end{align}
Hence, 
\begin{equation}
    \frac{1}{2}\mathcal{D}_{\psi}(x, \frac{\mu}{\theta^2(A, C)}) \ge \frac{\mu}{\theta^2(A, C)}(F(x) - F^*). 
\end{equation}
Since $\mathcal{D}(x, \cdot)$ is non-decreasing \citep[Lemma 1]{karimi2016linear}, we have
\begin{equation}
    \frac{1}{2}\mathcal{D}_{\psi}(x, L) \ge \frac{\mu}{\theta^2(A, C)}(F(x) - F^*). 
\end{equation}

If $L < \frac{\mu}{\theta^2(A, C)}$, then 
\begin{align}
F^* &\ge F(x) + \inp{\nabla \tilde{f}(x)}{x^* - x} + \frac{\mu}{2\theta^2(A, C)}\lVert x^* - x \rVert^2 + \psi(x^*) - \psi(x) \nonumber \\
&\ge F(x) + \inp{\nabla \tilde{f}(x)}{x^* - x} + \frac{L}{2}\lVert x^* - x \rVert^2 + \psi(x^*) - \psi(x) \nonumber \\
&\ge F(x) + \min_{y \in \RR^{nm}} \left\{\inp{\nabla \tilde{f}(x)}{y - x} + \frac{L}{2} {\lVert y - x \rVert}^2 + \psi(y) - \psi(x) \right\} \nonumber \\
&= F(x) - \frac{1}{2L}\mathcal{D}_{\psi}(x, L). 
\end{align}
Hence, 
\begin{equation} 
    \frac{1}{2}\mathcal{D}_{\psi}(x, L) \ge L(F(x) - F^*).  
\end{equation}

Combining the two cases, we can conclude \eqref{eq:proximal-pl-condition}. 
\end{proof}

\subsection{Proof of Theorem~\ref{theorem:bcdeg-convergence}}
\label{app:convergence-bcd-eg}

The following convergence analysis is typical for uniform (block) coordinate descent algorithms. Leveraging \eqref{eq:proximal-pl-condition}, it is easy to derive a linear convergence rate for both \bcdeg\ and \bcdegls. This proof scheme was given in \citet[Appendix H]{karimi2016linear}, but here we give a variant with different stepsizes across blocks and iterations. 

\begin{proof}
    To remind readers of notations, we define (again)
    \begin{equation}
        T_j(x) = \arg\min_{y_{\cdot j}} \inp{\nabla_{\cdot j} f(x)}{y_{\cdot j} - x_{\cdot j}} + L_j \lVert y_{\cdot j} - x_{\cdot j} \rVert^2 + \psi_j(y_{\cdot j})
    \end{equation}
    for \bcdeg\ and 
    \begin{equation}
        T_j^k(x) = \arg\min_{y_{\cdot j}} \inp{\nabla_{\cdot j} f(x)}{y_{\cdot j} - x_{\cdot j}} + \frac{1}{\eta_j^k} \lVert y_{\cdot j} - x_{\cdot j} \rVert^2 + \psi_j(y_{\cdot j})
    \end{equation}
    for \bcdegls. 
    
    Then, we have (e.g. for \bcdeg)
    \begin{align}
    \expect { F(x^{k+1}) | x^k } &\le F(x^k) + \frac{1}{m} \sum_j \left\{\inp{\nabla_{\cdot j} \tilde{f}(x^k)}{T_j(x) - x_{\cdot j}^k } + \frac{L_j}{2} \lVert T_j(x) - x_{\cdot j}^k \rVert^2 + \psi_j(T_j(x)) \right\} \label{eq:bcdeg-proof-smoothness-in-thm} \\
        &= \tilde{f}(x^k) + \frac{1}{m} \sum_j \min_{y_{\cdot j}} \left\{\inp{\nabla_{\cdot j} \tilde{f}(x^k)}{y_{\cdot j} - x_{\cdot j}^k} + \frac{L_j}{2} \lVert y_{\cdot j} - x_{\cdot j}^k \rVert^2 + \psi_j(y_{\cdot j}) \right\} \nonumber \\
        &= \tilde{f}(x^k) + \frac{1}{m} \min_y \left\{\inp{\nabla \tilde{f}(x^k)}{y - x^k} + \frac{1}{2} {\lVert y - x^k \rVert}_L^2 + \psi(y) \right\} \nonumber \\
        &= F(x^k) + \frac{1}{m} \min_y \left\{\inp{\nabla \tilde{f}(x^k)}{y - x^k} + \frac{1}{2} {\lVert y - x^k \rVert}_L^2 + \psi(y) - \psi(x^k) \right\} \nonumber \\
        &\le F(x^k) + \frac{1}{m} \min_y \left\{\inp{\nabla \tilde{f}(x^k)}{y - x^k} + \frac{L_{\max}}{2} \lVert y - x^k \rVert^2 + \psi(y) - \psi(x^k) \right\} \nonumber \\ 
        &= F(x^k) -\frac{1}{2 m L_{\max}} \mathcal{D}_\psi(x^k, L_{\max}), 
        \label{eq:bcdeg-proof-pl-condition-in-thm}
    \end{align}
    where we used Lipschitz smoothness (or line search condition for \bcdegls) of $\tilde{f}$, the definition of $T_j(x)$, and separability on each block $j$. 

    
    

    For \bcdegls, replacing \eqref{eq:bcdeg-proof-smoothness-in-thm} with the following inequality: 
    \begin{equation}
        \expect { F(x^{k+1}) | x^k } \le F(x^k) + \frac{1}{m} \sum_j \left\{\inp{\nabla_{\cdot j} \tilde{f}(x^k)}{T_j(x) - x_{\cdot j}^k } + \frac{1}{2 \eta_j^k} \lVert T_j(x) - x_{\cdot j}^k \rVert^2 + \psi_j(T_j(x)) \right\}, 
    \end{equation}
    and by $1 / \eta_j^k \le L_j \le L_{\max}$ (used in the last inequality) we have \eqref{eq:bcdeg-proof-pl-condition-in-thm}. 

    By \cref{lemma:proximal-pl}, we have 
    \begin{equation}
        \expect { F(x^{k+1}) | x^k } - F(x^k) \le -\frac{1}{2 m L_{\max}} \mathcal{D}_\psi(x^k, L_{\max}) \le - \min\left\{\frac{\mu}{m L_{\max} \theta^2(A, C)}, \frac{1}{m} \right\} (F(x^k) - F^*). 
    \end{equation} 
    
    Subtracting $F^*$ on the both sides, Rearranging the above inequality, and taking expectation w.r.t. $j_1, \ldots, j_k$, we obtain \eqref{eq:linear-convergence-bcd-eg} by induction. 
\end{proof}

\section{Algorithm Details for Projection onto a Simplex}
\label{app:econ-projection}

We show a key theorem for projection onto a simplex algorithm and a formally stated algorithm as background for readers who are not familiar with this algorithm. 

\begin{theorem}{\citep[Theorem 2.2.]{chen2011projection}}
    For any vector $y \in \mathbb{R}^n$, the projection of $y$ onto $\Delta^n$ is obtained by the positive part of $y - \hat{t}$: 
    \begin{equation} 
        x = (y - \hat{t})^+,  
        \label{eq:x} 
    \end{equation}
    where $\hat{t}$ is the only one in $\{ t_i: i=0, \ldots , n-1 \}$ that falls into the corresponding interval as follows, 
    \begin{equation} 
        t_i := \frac{\sum_{j=i+1}^{n}{y_{(j)} - 1}}{n - i}, i=0, \ldots, n-1, \quad \text{ where } t_1 \le y_{(1)} \text{ and } y_{(i)} \le t_{i} \le y_{(i+1)}, i=1, \ldots, n-1. 
        \label{eq:t_i} 
    \end{equation} 
\end{theorem}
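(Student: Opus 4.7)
The plan is to derive the result from the Karush–Kuhn–Tucker (KKT) conditions of the projection problem and then identify the correct $t_i$ by a sorting/monotonicity argument.

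First, I would formulate the projection as the convex quadratic program
\begin{equation*}
    \min_{x \in \mathbb{R}^n}\; \tfrac{1}{2}\|x - y\|^2 \quad \text{s.t.} \quad \sum_i x_i = 1,\; x \ge 0.
\end{equation*}
Writing the Lagrangian with multiplier $\hat t \in \mathbb{R}$ for the equality constraint and $\mu_i \ge 0$ for the nonnegativity constraints, the KKT stationarity condition gives $x_i - y_i + \hat t - \mu_i = 0$, together with the complementary slackness $\mu_i x_i = 0$. Eliminating $\mu_i$ by case analysis on whether $x_i=0$ or $x_i>0$ yields the closed form $x_i = (y_i - \hat t)^+$. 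Since the objective is strictly convex and the feasible set is nonempty (containing, e.g., $e_1$), the KKT system has a unique solution, and $\hat t$ itself is unique. Substituting back into the equality constraint gives the one-dimensional equation
\begin{equation*}
    \sum_i (y_i - \hat t)^+ = 1.
\end{equation*}

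Second, I would sort the coordinates $y_{(1)} \le y_{(2)} \le \cdots \le y_{(n)}$ and observe that the active set $\{j : y_{(j)} > \hat t\}$ is necessarily a suffix $\{i+1,\ldots,n\}$ for some $0 \le i \le n-1$. Under this hypothesis the equation above reduces to $\sum_{j=i+1}^n (y_{(j)} - \hat t) = 1$, which solved for $\hat t$ yields precisely
\begin{equation*}
    t_i = \frac{\sum_{j=i+1}^n y_{(j)} - 1}{n - i}.
\end{equation*}
Consistency of the suffix hypothesis $\{i+1,\ldots,n\}$ being the true active set is exactly the requirement $y_{(i)} \le t_i \le y_{(i+1)}$ (with the boundary convention $t_0$ requiring no lower check and $t_{n-1}$ requiring no upper check beyond $y_{(n)}$, handled as the case $t_1 \le y_{(1)}$ in the statement).

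Third, I would establish that exactly one $i \in \{0,\ldots,n-1\}$ produces a $t_i$ in the required interval. The natural way is a monotonicity argument: define $g(t) := \sum_j (y_j - t)^+ - 1$, which is continuous, piecewise linear, and strictly decreasing on $\{t : t < y_{(n)}\}$. Since $g(\min_i y_i) \ge 0$ at $t = \min_i y_i$ (in fact $g(-\infty) = +\infty$) and $g(y_{(n)}) = -1 < 0$, there is a unique root $\hat t$. This root lies in exactly one half-open slab $[y_{(i)}, y_{(i+1)})$ (or in $(-\infty, y_{(1)}]$), and on that slab $g$ is affine with the closed-form root $t_i$ above; hence $t_i = \hat t$ for that unique $i$, and all other $t_{i'}$ fail the interval condition.

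The main obstacle is the bookkeeping of the boundary cases—verifying that the definition of $t_0$ and the asymmetric inequality $t_1 \le y_{(1)}$ in the statement precisely captures the case in which the entire suffix (including $y_{(1)}$) is active. I would handle this by carefully stating the active-set suffix characterization so that the endpoints of the slabs $[y_{(i)}, y_{(i+1)})$ partition the relevant range of $t$, and then the existence and uniqueness of $\hat t$ drop out of strict monotonicity of $g$ on the region where it has not yet plateaued to $-1$.
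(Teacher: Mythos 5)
The paper states this result only as background, citing \citet[Theorem 2.2]{chen2011projection} without reproducing a proof, so there is no in-paper argument to compare against. Your derivation---KKT stationarity and complementary slackness giving $x=(y-\hat t)^+$, the suffix structure of the support under the ascending sort yielding the closed form $t_i$, and strict monotonicity of $t\mapsto\sum_j (y_j-t)^+$ on $(-\infty, y_{(n)})$ giving a unique root that coincides with the unique $t_i$ passing its interval test---is correct and is essentially the standard proof from the cited reference (note only that the uniqueness of $\hat t$ asserted in your first step is really supplied by your third-step monotonicity argument, and that in degenerate tie cases $\hat t = y_{(i+1)}$ two adjacent indices can pass the closed-interval test while yielding the same $\hat t$, a harmless imprecision already present in the theorem statement).
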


Hence, to find $x$, we only need to find the $t_i$ in~\eqref{eq:t_i} that falls into the corresponding interval, 
claim it as the optimal $\hat{t}$, 
and then compute $x$ based on~\eqref{eq:x}. 
The following algorithm is implied by the above theorem. 

\begin{algorithm}[H]
    \KwIn{$y = (y_1, y_2, \ldots, y_n)^\top \in \mathbb{R}^n$}
    \KwOut{$x = (x_1, x_2, \ldots, x_n)^\top \in \mathbb{R}^n$ as the projection of $y$ onto $\Delta^n$}
    Sort $y$ in the ascending order as $y_{(1)} \le \cdots \le y_{(n)}$\;
    $\hat{t} \gets \frac{\sum_{j=1}^{n}{y_{(j)}} - 1}{n}$\;   
    \For{$i \gets n-1, \ldots, 1$}{
        $t_{i} \gets \frac{\sum_{j=i+1}^{n}{y_{(j)}} - 1}{n-i}$\;
        \uIf{$t_i \ge y_{(i)}$} {
            $\hat{t} \gets t_i$ and break the loop\;
        }
    }
    $x \gets (y - \hat{t})^+$\;
    \caption{Projection onto a Simplex}
    \label{algo:projsplx}
\end{algorithm}

\subsection{Discussion on alternative algorithms for projection onto a simplex}

In this paper, we assume that the above algorithm is an efficient method to project an array onto a simplex, and interpret projection-type ME-solving algorithms based on this algorithm. 
There are several variants of this type of projection algorithm, and some of them give improved complexity results or practical performance.
\citet{condat2016fast} summarize most of these algorithms, which use different methods to find the correct threshold value. 
Some of these algorithms exploit problem structure to obtain $O(n)$ expected or practical complexity. However, in the worst case, there is still no better result than $O(n\log(n))$. 
\citet{perez2020filtered} proposed a method with a worst-case linear-time complexity result. 
Their result is analogous to bucket sort, it assumes that the set of possible values that you might encounter is constant. 

\section{Block Coordinate Proportional Response}
\label{app:rbcd}

\subsection{Block Coordinate Proportional Response with Line Search}

We formally state Block Coordinate Proportional Response with Line Search (\bcprls) algorithm as follows. 
Note that $\delta \in [0, 1)$ is a ``conservative'' factor - we can set $\delta > 0$ to use more conservative stepsize strategy than standard line search ($\delta = 0$). 
In $\delta > 0$ case, we can guarantee some (weak) convergence property for \bcprls. 

\begin{algorithm}[H]
    \KwIn{$b^0 \in \RR^{n\times m}, p^0 \in \RR^m, \alpha_1, \ldots, \alpha_n \in \RR^+, \rho_-, \rho_+ \in \RR^+$}
        \For{$k \gets 1, 2, \ldots $} {
            pick $i_k \in [n]$ with probability $1/n$\; 
            (*) compute $b_{i_k}^+$ based on \eqref{eq:bcpr-update-simple} with stepsize $(1 - \delta)\alpha_{i_k}$\;
            $p^+ \gets p + b_{i_k}^+ - b_{i_k}$\;
            \uIf{$\alpha_{i_k} \DKL(p^+, p) > \DKL(b_{i_k}^+, b_{i_k})$}{
                $\alpha_{i_k} \gets \rho_- \alpha_{i_k}$\;
                go back to (*) and re-compute $b_{ik}^+$ and $p^+$\;
            }
            \Else{
                $\alpha_{i_k} \gets \rho_+ \alpha_{i_k}$\;
                $b_{i_k}^k \gets b_{i_k}^+$ and $b_{i'}^k \gets b_{i'}^{k-1} \quad \forall\; i' \neq i_{k}$\;
            }
        }
\caption{Block Coordinate Proportional Response (BCPR) with Line Search}
\label{algo:bcpr-ls}
\end{algorithm}

\subsection{Proof of \cref{eq:bcpr-update-simple}}
\label{app:update-rule-rbcd}

\begin{proof}
    The original minimization problem can be solved by Lagrangian method by introducing a Lagrangian multiplier $\lambda$: 
    \begin{equation}
        \min\mathcal{L}(a_{1}, a_{2}, \ldots, a_{m}, \lambda) = {\inp{\nabla_{i} \varphi(b)}{a - b_{i}} + \frac{1}{\alpha} D_{\text{KL}}(a, b_{i}) + \lambda \Big(B_{i} - \sum_{j}{a_{j}}\Big)}. 
        \label{rcd:lagrangian}
    \end{equation}
    Let $\nabla_{a_{1}, a_{2}, \ldots, a_{m}, \lambda}\mathcal{L} = 0$, we have 
    \begin{equation}
        \frac{\partial}{\partial a_{j}} \bigg( \inp{\nabla_{i} \varphi(b)}{a - b_{i}} + \frac{1}{\alpha} \DKL(a, b_i) \bigg) = \lambda \quad \forall j =1, 2, \ldots, m \quad \mbox{and} \quad \sum_{j}{a_{j}} = B_i. 
    \end{equation}
    The partial derivatives are equal to 
    \[ 1 - \log{\left( \frac{v_{ij}}{p_{j}} \right)} + \frac{1}{\alpha} \left( 1 + \log{\frac{a_j}{b_{ij}}} \right) \quad \forall j =1, 2, \ldots, m. \]
    Let $a_{j} = \frac{1}{Z} b_{ij} \Big(\frac{v_{ij}}{p_{j}}\Big)^{\alpha}$, $\lambda = 1 + \frac{1}{\alpha} - \log{Z}$, where $Z = \frac{1}{B_{i}}\sum_{j}{b_{ij} \Big(\frac{v_{ij}}{p_{j}}\Big)^{\alpha}}$. 
    We can verify that 
    \begin{equation}
        1 + \frac{1}{\alpha} - \log{Z} = \frac{\partial}{\partial a_j} \left( \inp{\nabla_i \varphi(b)}{a - b_i} + \frac{1}{\alpha} \DKL(a, b_{i}) \right) = \lambda \quad \mbox{and} \quad \sum_j{a_j} = \frac{1}{Z}\sum_j{b_{ij} \left(\frac{v_{ij}}{p_j}\right)^\alpha} = B_i. 
    \end{equation} 
    Therefore, $(a_1, a_2, \ldots, a_m, \lambda)$ is a stationary point of the Lagrange function. 
    $a = (a_1, a_2, \ldots, a_m)$ is a solution to the original minimization problem due to the convexity of the problem. 
\end{proof}

\subsection{Coordinate-wise relative smoothness (in expectation)}

In the rest of this section, except for \abcpr, we use general Bregman distance $D(a, b)$ between $a, b \in \RR^n$ and $D_L(a, b) = \sum_i L_i D_i(a_i, b_i)$ where $D_i(a_i, b_i)$ is coordinate-wise Bregman distance and $L \in \RR^n$. 
We also introduce $T^k = \sum_i T_i^k$ as below (similar to $l_2$ norm block coordinate descent): 
\begin{equation}
    T^k = \sum_{i}{T_i^k} = \argmin_{a \in \RR^{n \times m}} \left\{ \inp{\nabla \varphi(b)}{a - b} + \sum_i \frac{1}{\alpha_i^k} D_i(a_i, b_i) + r(a) \right\}. 
    \label{eq:rcd-update-full}
\end{equation}

Then, it is standard to have the following lemma. 

\begin{lemma}
    If $b^k$ is the random iterates generated by Algorithm~\ref{algo:bcpr-ls}, then 
    \begin{align}
        \expect* { \varphi(b^{k+1}) | b^k } \le& \varphi(b^k) + \frac{1}{n}\inp{\nabla \varphi(b^k)}{T^k - b^k} + \frac{1}{n} D_{1/\alpha^k}(T^k, b^k) \label{eq:h-ESO} \\
        =& \varphi(b^k) + \expect*{\inp{\nabla \varphi(b^k)}{b^{k+1} - b^k} | b^k } + \expect*{D_{1/\alpha^k}(b^{k+1}, b^k) | b^k }. 
        \label{eq:rcd-relative-smooth}
    \end{align}
    \label{lemma:relative-smooth}
\end{lemma}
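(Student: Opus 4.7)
The plan is to derive a one-step descent bound conditional on the index $i_k = i$ that is sampled, and then average over the uniform distribution of $i_k$ on $[n]$.

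First I would fix $i_k = i$ and write $b^+ = b^{k+1}$, which differs from $b^k$ only in row $i$ with $b^+_i = T_i^k$; the induced item prices then satisfy $p^+_j = p^k_j + (b^+_{ij} - b^k_{ij})$ and, because $b^+_i$ and $b^k_i$ both sum to $B_i$, also $\sum_j p^+_j = \sum_j p^k_j$. The algebraic heart of the proof is the exact identity
\[
\varphi(b^+) - \varphi(b^k) - \inp{\nabla \varphi(b^k)}{b^+ - b^k} \;=\; \DKL(p^+, p^k).
\]
This is obtained by splitting $\varphi(b) = -\sum_{ij} b_{ij} \log v_{ij} + \sum_j p_j(b) \log p_j(b)$: the first summand is linear in $b$ so its Bregman gap is zero, and expanding the second summand while using $\sum_j p^+_j = \sum_j p^k_j$ reduces its Bregman gap to exactly $\DKL(p^+, p^k)$.

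Second, the acceptance test in \cref{algo:bcpr-ls} guarantees $\alpha_i^k \DKL(p^+, p^k) \le \DKL(b^+_i, b^k_i)$. Since the per-block Bregman divergence $D_i$ generated by negative entropy coincides with $\DKL$ on the budget simplex $\{a : \sum_j a_j = B_i\}$, combining this bound with the identity above yields the per-realization descent inequality
\[
\varphi(b^+) \;\le\; \varphi(b^k) + \inp{\nabla_i \varphi(b^k)}{T_i^k - b^k_i} + \tfrac{1}{\alpha_i^k} D_i(T_i^k, b^k_i).
\]

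Third, averaging this bound over the uniform choice $i_k \in [n]$ and using the definitions $T^k = \sum_i T_i^k$ and $D_{1/\alpha^k}(T^k, b^k) = \sum_i \tfrac{1}{\alpha_i^k} D_i(T_i^k, b^k_i)$ produces inequality \eqref{eq:h-ESO}. The equality \eqref{eq:rcd-relative-smooth} is then immediate: conditional on $b^k$, both $b^{k+1} - b^k$ and $D_{1/\alpha^k}(b^{k+1}, b^k)$ vanish on every row $i' \ne i_k$, so their conditional expectations reproduce the two averages above term by term.

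The main obstacle I expect is verifying the exact identity in the first step: it uses the specific Shmyrev structure (the objective is linear in $b$ plus a negative-entropy term in $p(b)$), and without it the line-search condition $\alpha_i^k \DKL(p^+, p^k) \le \DKL(b^+_i, b^k_i)$ would not directly translate into an upper bound on the Bregman gap of $\varphi$. Notably, this same structural observation is what lets the line search be implemented via a cheap $O(m)$ KL computation rather than a full objective evaluation, as advertised in the paragraph preceding the lemma.
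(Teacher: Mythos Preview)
Your proposal is correct and follows essentially the same route as the paper: it establishes the exact Bregman-gap identity $\varphi(b^+) - \varphi(b^k) - \inp{\nabla_i \varphi(b^k)}{b_i^+ - b_i^k} = \DKL(p^+, p^k)$ from the Shmyrev structure, invokes the line-search acceptance test to bound it by $\tfrac{1}{\alpha_i^k} D_i(T_i^k, b_i^k)$, and then averages over the uniformly sampled block. The paper adds one remark you omit---namely that $\DKL(p^+, p^k) \le \DKL(b_i^+, b_i^k)$ always holds (via monotonicity of $x \mapsto x\log\tfrac{x}{x-a}$), so the line search is guaranteed to accept once $\alpha_i \le 1$---but this is not needed for the lemma itself, only for the algorithm's well-definedness.
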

Note that \eqref{eq:h-ESO} is a special case (the cardinality of sampling equals to $1$, uniform sampling) of $h$-ESO assumption defined in \citet{hanzely2021fastest}. Essentially, this is coordinate-wise relative smoothness in expectation. 
\begin{proof}
    Given any coordinate $i$, we have (let $b^+$ and $p^+$ be the next iterates of $b$ and $p$, respectively)
    \begin{align}
        \varphi(b^+) - \varphi(b) - \inp{\nabla_i{\varphi(b)}}{b_i^+ - b_i} =& \varphi(b^+) - \varphi(b) - \inp{\nabla{\varphi(b)}}{b^+ - b} \nonumber \\
        =& - \sum_{i, j}{b_{ij}^{+}\log{\frac{v_{ij}}{p_{j}^{+}}}} + \sum_{i, j}{b_{ij}\log{\frac{v_{ij}}{p_{j}}}} - \sum_{i, j}{\big( 1 - \log{\frac{v_{ij}}{p_{j}}} \big)(b_{ij}^{+} - b_{ij})} \nonumber \\
        =& \sum_{i, j}{b_{ij}^{+} \log{\frac{p_{j}^{+}}{p_{j}}}} = \sum_{j}{p_{j}^{+} \log{\frac{p_{j}^{+}}{p_{j}}}}, 
    \end{align}
    and $\DKL(b_i^+, b_i) = \sum_{j}{b_{ij}^{+} \log{\frac{b_{ij}^{+}}{b_{ij}}}}$. 
    The line search condition ensures 
    \begin{align}
        \varphi(b^+) \le \varphi(b) + \inp{\nabla_i{\varphi(b)}}{b_i^+ - b_i} + \frac{1}{\alpha_i} D_{\text{KL}}(b_i^+, b_i).  \label{eq:relative-smooth-i}
    \end{align}

    Note that if we let $a \in \mathbb{R}$ and $f(x) = x\log{\frac{x}{x - a}}$, then $f'(x) = \log{(1 + \frac{a}{x - a})} - \frac{a}{x - a} \le \frac{a}{x - a} - \frac{a}{x - a} = 0$. Since $p_{j}^{+} - p_{j} = b_{ij}^{+} - b_{ij}$, let $a = b_{ij}^{+} - b_{ij}$, we have 
    \begin{equation}
        \sum_j{p_{j}^{+} \log{\frac{p_{j}^{+}}{p_{j}}}} = \sum_j f(p_{j}^{+}) \le \sum_j f(b_{ij}^{+}) = \sum_j {b_{ij}^{+} \log{\frac{b_{ij}^{+}}{b_{ij}}}}, 
    \end{equation}
    where we use $p_j^+ \ge b_{ij}^+$ and monotonity of $f(x)$. 
    Note that \eqref{eq:relative-smooth-i} always holds when $\alpha_i = 1$ for all $i$. 

    Hence, by separability on each block of coordinates we have 
    \begin{align}
        \expect*{ \varphi(b^{k+1}) | b^k } \le& \sum_i \frac{1}{n} \left\{ \varphi(b^k) + \inp{\nabla_i{\varphi(b)}}{T_i^k - b_i^k} + \frac{1}{\alpha_i^k} \DKL(T_i^k, b_i^k) \right\} \nonumber \\ 
        =& \varphi(b^k) + \frac{1}{n}\inp{\nabla \varphi(b^k)}{T^k - b^k} + \frac{1}{n} D_{1/\alpha^k}(T^k, b^k). 
    \end{align}

    As we know
    \begin{equation}
        \expect*{\inp{\nabla \varphi(b^k)}{b^{k+1} - b^k} | b^k } = \sum_i \frac{1}{n} \inp{\nabla_i \varphi(b^k)}{T_i^k - b_i^k} = \frac{1}{n}\inp{\nabla \varphi(b^k)}{T^k - b^k}
        \label{eq:exp-t-relation-1}
    \end{equation}
    and 
    \begin{equation}
        \expect*{ D_{1/\alpha^k}(b^{k+1}, b^k) | b^k } = \sum_i \frac{1}{n} \frac{1}{\alpha_i^k} D_i(T_i^k, b_i^k) = \frac{1}{n} D_{1/\alpha^k}(T^k, b^k), 
        \label{eq:exp-t-relation-2}
    \end{equation}
    \eqref{eq:h-ESO} and \eqref{eq:rcd-relative-smooth} are equivalent. 

\end{proof}

\subsection{A useful descent lemma}

Next, we will consider \bcpr\ with line search (\bcprls). We can directly see that \bcpr\ (with fixed stepsizes)'s sublinear convergence rate is a special case of our convergence property. 

The following lemma can be derived from Lemma~\ref{lemma:relative-smooth} and three point property (e.g. \citet[Lemma 3.5]{hanzely2021fastest}). 
Beyond existing results, we prove a version for line search and $\alpha_i^k = (1 - \delta)/L_i^k$ stepsize strategy. 

\begin{lemma}
If $b^k$ is the random iterates generated by Algorithm~\ref{algo:bcpr-ls}, then 
\begin{align}
    \expect*{ \Phi(b^{k+1}) | b^k } &\le \frac{n-1}{n} \Phi(b^k) + \frac{1}{n} \Big\{ \Phi(u) + D_{\alpha^k}(u, b^k) - D_{\alpha^k}(u, T^k) - \delta D_{\alpha^k}(T^k, b^k) \Big\} \nonumber \\ 
    &= \frac{n-1}{n} \Phi(b^k) + \frac{1}{n} \Phi(u) + \frac{1}{n} D_{\frac{1}{\alpha^k}}(u, b^k) - \frac{1}{n} \expect*{ D_{\frac{1}{\alpha^k}}(u, b^{k+1}) | b^k } - \frac{\delta}{n} \expect*{ D_{\frac{1}{\alpha^k}}(b^{k+1}, b^k) | b^k }, 
    \label{eq:bcpr-descent}
\end{align}
where $T^k$ defined in \eqref{eq:rcd-update-full} and $u \in \dom \varphi \cup \dom h$. 
\label{lemma:bcpr-descent}
\end{lemma}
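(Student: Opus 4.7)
The plan is to combine three ingredients: (i) the coordinate-wise relative smoothness bound from Lemma~\ref{lemma:relative-smooth}; (ii) the three-point property coming from first-order optimality of the Bregman prox step defining $T_i^k$; and (iii) convexity of $\varphi$, which is how the arbitrary reference point $u$ enters.

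First, I would restate the smoothness bound. Interpreting $\alpha_i^k$ as the effective prox stepsize (i.e., $(1-\delta)$ times the line-search-accepted value), the acceptance test of Algorithm~\ref{algo:bcpr-ls} translates, just as in the proof of Lemma~\ref{lemma:relative-smooth}, into relative smoothness of $\varphi$ on block $i_k$ at rate $(1-\delta)/\alpha_{i_k}^k$. Taking expectation over the uniform choice of block yields
\begin{equation*}
\expect*{\varphi(b^{k+1}) \mid b^k} \le \varphi(b^k) + \frac{1}{n}\inp{\nabla\varphi(b^k)}{T^k - b^k} + \frac{1-\delta}{n}\, D_{1/\alpha^k}(T^k, b^k),
\end{equation*}
where $T^k$ is the ``virtual'' full-block update of \eqref{eq:rcd-update-full}.

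Next, I would derive a three-point inequality. Write the stationarity for $T_i^k$ as $\nabla_i\varphi(b^k) + \tfrac{1}{\alpha_i^k}(\nabla h_i(T_i^k) - \nabla h_i(b_i^k)) + v_i = 0$ for some $v_i \in \partial r_i(T_i^k)$, pair with $T_i^k - u_i$, use the subgradient inequality for $r_i$, and invoke the identity $\inp{\nabla h_i(T_i^k) - \nabla h_i(b_i^k)}{T_i^k - u_i} = D_i(u_i, b_i^k) - D_i(u_i, T_i^k) - D_i(T_i^k, b_i^k)$. Summing over blocks gives
\begin{equation*}
\inp{\nabla\varphi(b^k)}{T^k - u} + r(T^k) - r(u) \le D_{1/\alpha^k}(u, b^k) - D_{1/\alpha^k}(u, T^k) - D_{1/\alpha^k}(T^k, b^k).
\end{equation*}
Convexity of $\varphi$ then lets me split $\inp{\nabla\varphi(b^k)}{T^k - b^k} \le \varphi(u) - \varphi(b^k) + \inp{\nabla\varphi(b^k)}{T^k - u}$, after which the above three-point bound controls the remaining inner product. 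Plugging into the smoothness bound, and adding $\expect*{r(b^{k+1}) \mid b^k} = \tfrac{1}{n}r(T^k) + \tfrac{n-1}{n}r(b^k)$ (which follows from block separability and uniform single-block updates) to both sides, the $r(T^k)$ terms cancel exactly and the two contributions to $D_{1/\alpha^k}(T^k, b^k)$ combine as $\tfrac{1-\delta}{n} - \tfrac{1}{n} = -\tfrac{\delta}{n}$, delivering the first form of the lemma. The second, ``tracking'' form then follows from the single-block update identities $\expect*{D_{1/\alpha^k}(u, b^{k+1}) \mid b^k} = \tfrac{1}{n}D_{1/\alpha^k}(u, T^k) + \tfrac{n-1}{n}D_{1/\alpha^k}(u, b^k)$ and $\expect*{D_{1/\alpha^k}(b^{k+1}, b^k) \mid b^k} = \tfrac{1}{n}D_{1/\alpha^k}(T^k, b^k)$, which I would substitute into the first form.

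The main obstacle I anticipate is bookkeeping around the conservative parameter $\delta$: because the prox subproblem and the smoothness estimate are scaled by different factors of $(1-\delta)$, the clean $-\delta/n$ coefficient in front of $D_{1/\alpha^k}(T^k, b^k)$ only emerges after aligning the two scales and letting the corresponding contributions cancel through the precise form of the three-point property. For $\delta = 0$ the alignment is trivial and the argument collapses to the standard non-Euclidean three-point/smoothness combination used, e.g., in the analysis of Bregman proximal gradient methods.
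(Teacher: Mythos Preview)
Your proposal is correct and follows essentially the same route as the paper: the paper likewise combines the coordinate-wise relative smoothness bound (Lemma~\ref{lemma:relative-smooth}) with the first-order optimality condition of the Bregman prox, convexity of $\varphi$ and $r$, and the three-point identity, then uses $\alpha_i^k = (1-\delta)/L_i^k$ to produce the $-\delta/n$ coefficient. The only cosmetic difference is ordering: the paper first folds $\expect*{r(b^{k+1})\mid b^k}=\tfrac{1}{n}r(T^k)+\tfrac{n-1}{n}r(b^k)$ into the smoothness bound and then applies optimality/convexity, whereas you defer that addition to the end; both arrive at the same inequality.
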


\begin{proof}


    

    Given any $b^k$ (iterate at the $k$th iteration), we have
    \begin{align}
        \expect*{\Phi(b^{k+1})|b^k} \le& \varphi(b^k) + \expect*{ \inp{\nabla \varphi(b^k)}{b^{k+1} - b^k} | b^k } + \expect*{ D_{L^k}(b^{k+1}, b^k) | b^k } + \expect*{ r(b^{k+1}) | b^k } \nonumber \\
        =& \varphi(b^k) + \frac{1}{n}\inp{\nabla \varphi(b^k)}{T^k - b^k} + \frac{1}{n}D_{L^k}(T^k, b^k) + \frac{1}{n} \sum_i r_i\left( T_i^k \right) + \frac{1}{n} \sum_i \sum_{i' \neq i} r_{i'}\left( b_{i'}^k \right) \nonumber \\
        =& \varphi(b^k) + \frac{1}{n}\inp{\nabla \varphi(b^k)}{T^k - b^k} + \frac{1}{n}D_{L^k}(T^k, b^k) + \frac{1}{n} r\left( T^k \right) + \frac{n-1}{n} r\left( b^k \right) \nonumber \\
        =& \frac{n-1}{n} \Phi(b^k) + \frac{1}{n} \left\{ \varphi(b^k) + \inp{\nabla \varphi(b^k)}{T^k - b^k} + D_{L^k}(T^k, b^k) + r\left( T^k \right) \right\}, 
        \label{step-1}
    \end{align} 
    which follows from Lemma~\ref{lemma:relative-smooth} and separability of $r$. 

    
    By the optimality of \eqref{eq:rcd-update-full}, we have 
    \begin{equation}
        v^k = - \nabla \varphi(b^k) - \sum_i \frac{1}{\alpha_i^k} \nabla h_i(T_i^k) + \sum_i \frac{1}{\alpha_i^k} \nabla h_i(b^k_i)
        \label{eq:optimal-con}
    \end{equation}
    for some $v^k \in \partial r(T^k)$. 
    
    By convexity of $\varphi$ and $r$, for any $u \in \dom \varphi \cup \dom h$, 
    \begin{equation}
        r(T^k) \le r(u) - \inp{v^k}{u - T^k}, 
        \label{eq:cvx-r}
    \end{equation}
    \begin{equation}
        \varphi(b^k) \le \varphi(u) - \inp{\nabla \varphi(b^k)}{u - b^k}. 
        \label{eq:cvx-varphi}
    \end{equation}

    By the definition of Bregman distance we have so-called three point identity:  
    \begin{equation}
        \inp{\nabla h_i(T_i^k) - \nabla h_i(b^k_i)}{u_i - T_i^k} = D_i(u_i, b^k_i) - D_i(u_i, T_i^k) - D_i(T_i^k, b^k_i)
        \label{eq:three-point}
    \end{equation}
    for all $u_i \in \dom h_i$ and all $i \in [n]$. 

    Then, by combining all above inequalities and equalities we have 
    \begin{align}
        \expect*{ \Phi(b^{k+1}) | b^k } \le& \frac{n-1}{n} \Phi(b^k) + \frac{1}{n} \left\{ \varphi(b^k) + \inp{\nabla \varphi(b^k)}{T^k - b^k} + D_{L^k}(T^k, b^k) + r\left( T^k \right) \right\} \nonumber \\
        \le& \frac{n-1}{n} \Phi(b^k) + \frac{1}{n} \left\{ \varphi(b^k) + \inp{\nabla \varphi(b^k)}{u - b^k} \right\} \nonumber \\ 
        &\quad + \frac{1}{n} \left\{D_{L^k}(T^k, b^k) + r\left( u \right) + \sum_i \frac{1}{\alpha_i^k} \inp{\left( \nabla h_i \left( T_i^k \right) - \nabla h_i \left( b_i^k \right) \right)}{u - T^k} \right\} \nonumber \\
        \le& \frac{n-1}{n} \Phi(b^k) + \frac{1}{n} \varphi(u) + \frac{1}{n} \left\{D_{L^k}(T^k, b^k) + r\left( u \right) + \sum_i \frac{1}{\alpha_i^k} \inp{\left( \nabla h_i \left( T_i^k \right) - \nabla h_i \left( b_i^k \right) \right)}{u - T^k} \right\} \nonumber \\
        =& \frac{n-1}{n} \Phi(b^k) + \frac{1}{n}\left\{ \Phi(u) + D_{\alpha^k}(u, b^k) - D_{\alpha^k}(u, T^k) - \delta D_{\alpha^k}(T^k, b^k) \right\}
        \label{eq:bcpr-ls-iteration-decrease}
    \end{align}
    for any $u$. Here, we use \eqref{step-1}, \eqref{eq:cvx-r} with \eqref{eq:optimal-con}, \eqref{eq:cvx-varphi}, \eqref{eq:three-point} and $\alpha_i^k = (1 - \delta) / L_i^k$ for $\delta \in [0, 1)$. 
    \eqref{eq:bcpr-descent} follows from identity relations similar to \eqref{eq:exp-t-relation-1} and \eqref{eq:exp-t-relation-2}. 
    \end{proof}
    
    The following corollaries provide some (weak) convergence properties for \bcprls. 
    
    \begin{corollary}
        If $b^k$ is the random iterates generated by Algorithm~\ref{algo:bcpr-ls}, then
        \begin{enumerate}
            \item[(i)] $\EE[\Phi(b^k)]$ is non-increasing; 
            \item[(ii)] $\sum_{k=0}^\infty \EE[D_{1/\alpha^k}(b^{k+1}, b^k)] < \infty$, $\EE[D_i(b_i^{k+1}, b_i^k)] \rightarrow 0$ as $k \rightarrow \infty$ for all $i$;  
            \item[(iii)] $\sum_{k=0}^\infty \EE[D_{1/\alpha^k}(b^k, b^{k+1})] < \infty$, $\EE[D_i(b_i^k, b_i^{k+1})] \rightarrow 0$ as $k \rightarrow \infty$ for all $i$;
            \item[(iv)] $\EE[D_{1/\alpha^k}(b^*, b^k)] \le \EE[D_{1/\alpha^k}(b^*, b^{k+1})]$ as $k \rightarrow \infty$. 
        \end{enumerate}
    \end{corollary}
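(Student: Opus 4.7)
The plan is to derive all four parts from the descent lemma \eqref{eq:bcpr-descent} by plugging in two convenient choices of the free parameter $u$, then either taking an expectation, telescoping, or letting $k\to\infty$.

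First, for part (i), I would set $u=b^k$ in \eqref{eq:bcpr-descent}. This collapses $\frac{n-1}{n}\Phi(b^k)+\frac{1}{n}\Phi(u)$ to $\Phi(b^k)$, kills the $\frac{1}{n}D_{1/\alpha^k}(u,b^k)$ term, and leaves two nonnegative Bregman terms on the right-hand side. Dropping those nonnegative terms and then taking total expectation yields $\EE[\Phi(b^{k+1})]\le\EE[\Phi(b^k)]$, which is (i).

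For parts (ii) and (iii), I would keep the Bregman terms this time and rearrange the same $u=b^k$ inequality into the descent form
\[
\tfrac{1}{n}\EE[D_{1/\alpha^k}(b^k,b^{k+1})]+\tfrac{\delta}{n}\EE[D_{1/\alpha^k}(b^{k+1},b^k)]\le \EE[\Phi(b^k)]-\EE[\Phi(b^{k+1})],
\]
then sum from $k=0$ to $\infty$. The right-hand side telescopes and is bounded above by $\Phi(b^0)-\Phi^{*}<\infty$, proving the summability claim in (iii) unconditionally and that in (ii) when $\delta>0$. Convergence of the individual coordinate distances $\EE[D_i(b_i^{k+1},b_i^k)]\to 0$ then follows from the identity $\EE[D_i(b_i^{k+1},b_i^k)\mid b^k]=\tfrac{1}{n}D_i(T_i^k,b_i^k)$ combined with summability, modulo a mild boundedness assumption on the line-search stepsizes $\alpha_i^k$.

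For part (iv), I would substitute $u=b^{*}$ in \eqref{eq:bcpr-descent} and use $\EE[\Phi(b^{k+1})\mid b^k]\ge\Phi^{*}$ to obtain, after rearrangement,
\[
\EE[D_{1/\alpha^k}(b^{*},b^{k+1})]-\EE[D_{1/\alpha^k}(b^{*},b^{k})]\le (n-1)\bigl(\EE[\Phi(b^k)]-\Phi^{*}\bigr)-\delta\,\EE[D_{1/\alpha^k}(b^{k+1},b^k)].
\]
The plan is then to invoke objective-gap convergence $\EE[\Phi(b^k)]\to\Phi^{*}$ (from \cref{thm:bcpr-fixed-stepsize} and the partial line-search analysis around \eqref{eq:bcpr-ls-bound}) together with relative smoothness of $\varphi$ to argue that both $\EE[D_{1/\alpha^k}(b^{*},b^{k})]$ and $\EE[D_{1/\alpha^k}(b^{*},b^{k+1})]$ tend to zero. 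In that asymptotic regime the stated inequality reduces to $0\le 0$ and therefore holds.

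The main obstacle is precisely part (iv): the descent lemma actually produces the \emph{reverse} asymptotic bound $\EE[D_{1/\alpha^k}(b^{*},b^{k+1})]\le \EE[D_{1/\alpha^k}(b^{*},b^{k})]$, so the direction stated in (iv) cannot be recovered by simply rearranging and dropping nonnegative terms. The only route I see that matches the stated direction is the ``both sides vanish'' argument above, which makes the inequality hold as an asymptotic equality. This in turn requires verifying $\EE[\Phi(b^k)]\to\Phi^{*}$ under the adaptive stepsize rule---a convergence which \eqref{eq:bcpr-ls-bound} does not by itself guarantee because of the telescoping correction term, so a supplementary condition (for instance, uniformly bounded stepsizes generated by the line search) has to be imposed. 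An alternative reading is that (iv) contains a sign typo and that the intended claim is Fej\'er monotonicity $\EE[D_{1/\alpha^k}(b^{*},b^{k+1})]\le\EE[D_{1/\alpha^k}(b^{*},b^{k})]$, which follows directly from the same $u=b^{*}$ computation.
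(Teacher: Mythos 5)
For parts (i)--(iii) your route is exactly the paper's: set $u=b^k$ in \eqref{eq:bcpr-descent}, drop (for (i)) or keep (for (ii)--(iii)) the two Bregman terms, and telescope against the monotone, bounded-below sequence $\EE[\Phi(b^k)]$. Your caveats are also genuine features of the paper's argument: the paper's proof of (ii) likewise only controls $\delta\sum_k\EE[D_{1/\alpha^k}(b^{k+1},b^k)]$, so (ii) really does need $\delta>0$ (consistent with the text's remark that weak convergence is guaranteed ``in the $\delta>0$ case''), and passing from the weighted sums to the per-coordinate limits $\EE[D_i(b_i^{k+1},b_i^k)]\to0$ implicitly uses that the line-search stepsizes stay bounded, which the paper glosses over. (Your extra factors of $1/n$ in the telescoped inequality are harmless.)

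On (iv) your diagnosis is correct and in fact matches the paper against itself: the paper's own proof plugs $u=b^*$ into \eqref{eq:bcpr-ls-iteration-decrease}, rearranges to $\EE[D_{1/\alpha^k}(b^*,b^k)]-\EE[D_{1/\alpha^k}(b^*,b^{k+1})]\ \ge\ \tfrac{n-1}{n}\,\EE[\Phi(b^{k+1})-\Phi(b^k)]+\tfrac1n\big(\EE[\Phi(b^{k+1})]-\Phi^*\big)$, and concludes the \emph{transposed} statement, i.e.\ asymptotic non-increase of $\EE[D_{1/\alpha^k}(b^*,b^{k+1})]$ relative to $\EE[D_{1/\alpha^k}(b^*,b^k)]$; so your ``sign typo'' reading is the intended one. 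However, both of your proposed repairs are slightly off. Route (a) is unnecessary and unavailable: one does not need $\EE[\Phi(b^k)]\to\Phi^*$ (which indeed cannot be guaranteed under line search); one only needs $\EE[\Phi(b^{k+1})-\Phi(b^k)]\to0$, which follows from (i) together with $\EE[\Phi(b^k)]\ge\Phi^*>-\infty$ (a monotone bounded sequence converges). Route (b) overclaims: exact Fej\'er monotonicity does \emph{not} ``follow directly,'' and your own rearrangement shows why --- by lower-bounding $\EE[\Phi(b^{k+1})\mid b^k]$ by $\Phi^*$ immediately, you are left with the non-vanishing slack $(n-1)\big(\EE[\Phi(b^k)]-\Phi^*\big)$ on the right. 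The paper instead keeps $\EE[\Phi(b^{k+1})]$, splits it as above into a term that vanishes by monotone convergence plus a nonnegative optimality-gap term, and therefore obtains only the asymptotic (``as $k\to\infty$,'' up to $o(1)$) one-sided bound. With that one extra observation your argument closes and coincides with the paper's.
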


    \begin{proof}
    \item[(i)] Let $u = b^k$ in \eqref{eq:bcpr-descent}, we have 
    \begin{equation}
        \expect*{ \Phi(b^{k+1}) | b^k} \le \Phi(b^k) 
        \label{eq:bcpr-ls-objective-descent-in-exp}
    \end{equation}
    by using non-negativity of Bregman distance. This shows objective descent in expectation. 
    
    \item[(ii)] Also, we have 
    \begin{equation}
        \EE[ \Phi(b^{k+1}) \vert b^k ] \le \Phi(b^k) + \frac{1}{n}\left\{ - D_{\alpha^k}(b^k, T^k) - \delta D_{\alpha^k}(T^k, b^k) \right\} \le \Phi(b^k) - \frac{\delta}{n} D_{\alpha^k}(T^k, b^k). 
        \label{eq:star}
    \end{equation}

    As we know $\EE[D_{\alpha^k}(b^{k+1}, b^k) \vert b^k] = \frac{1}{n} D_{\alpha^k}(T^k, b^k)$, 
    \eqref{eq:star} is equivalent to 
    \begin{equation}
        \delta \EE[D_{\alpha^k}(b^{k+1}, b^k) \vert b^k] \le \Phi(b^k) - \EE[ \Phi(b^{k+1}) \vert b^k ]. 
        \label{eq:obj-bound-d}
    \end{equation}

    Summing up \eqref{eq:obj-bound-d} over $0, 1, \ldots, k-1$ and taking expectation w.r.t. $i_0, i_1, \ldots, i_{k-1}$, we have
    \begin{equation}
        \delta \sum_{l=0}^k \EE[D_{\alpha^l}(b^{l+1}, b^l)] \le \Phi(b^0) - \EE[\Phi(b^k)] < \infty. 
        \label{eq:sum-bound}
    \end{equation}

    Let $k \rightarrow \infty$, we have $\sum_{l=0}^\infty \EE[D_{\alpha^l}(b^{l+1}, b^l)] < \infty$, which implies $\EE[D_{\alpha^k}(b^{k+1}, b^k)] \rightarrow 0$ as $k \rightarrow \infty$. 

    \item[(iii)] Similar to \eqref{eq:star}, we have 
    \begin{equation}
        \EE[ \Phi(b^{k+1}) \vert b^k ] \le \Phi(b^k) + \frac{1}{n}\left\{ - D_{\alpha^k}(b^k, T^k) - \delta D_{\alpha^k}(T^k, b^k) \right\} \le \Phi(b^k) - \frac{1}{n} D_{\alpha^k}(b^k, T^k). 
        \label{eq:star-}
    \end{equation}
    Then, (iii) follows from the same argument as the proof of (ii).

    \item[(iv)] Because $\EE[\Phi(b^k)]$ is non-increasing and $\EE[\Phi(b^k)] \ge \Phi^* > -\infty$, $\EE[\Phi(b^{k+1}) - \Phi(b^k)] \rightarrow 0$. 

    Let $u = b^*$ in \eqref{eq:bcpr-ls-iteration-decrease}, we have
    \begin{equation}
        \expect*{ \Phi(b^{k+1}) | b^k } \le \frac{n-1}{n} \Phi(b^k) + \frac{1}{n}\left\{ \Phi(b^*) + D_{\alpha^k}(b^*, b^k) - D_{\alpha^k}(b^*, T^k) \right\}
        \label{eq:bcpr-ls-iteration-decrease-2}
    \end{equation}
    due to the non-negativity of Bregman distance. 
    Rearranging \eqref{eq:bcpr-ls-iteration-decrease-2}, summing it up over $0, \ldots, k$ and using the tower property, we have 
    \begin{equation}
        (n-1) \EE[\Phi(b^{k+1}) - \Phi(b^k)] + \left( \EE[\Phi(b^{k+1})] - \Phi(u) \right) \le \EE[D_{\alpha^k}(u, b^k)] - \EE[D_{\alpha^k}(u, b^{k+1})]. 
        \label{eq:bcpr-ls-iteration-decrease-3}
    \end{equation}
    When $k \rightarrow \infty$, the left hand side of \eqref{eq:bcpr-ls-iteration-decrease-3} goes to be non-negative, which imples (iv). 
    \end{proof}

    


    \subsection{Proof of \cref{thm:bcpr-fixed-stepsize}}
    
    \begin{proof}

    For any $u \in \dom h$, we have 
    \begin{align}
        \expect*{ D_{1/\alpha^k}(u, b^{k+1}) | b^k }
        &= \sum_i \frac{1}{n} \left\{ \frac{1}{\alpha_i^k}D_i(u_i, T_i^k) + \sum_{i' \neq i} \frac{1}{\alpha_{i'}^k}D_i(u_{i'}, b_{i'}^k) \right\} \nonumber \\
        &= \frac{1}{n} D_{1 /\alpha^k}(u, T^k) + \frac{n - 1}{n} D_{1/\alpha^k}(u, b^k). 
        \label{step-3}
    \end{align}

    By \cref{lemma:bcpr-descent} and \eqref{step-3}, let $u = b^*$, then we get 
    \begin{align}
        \expect*{ \Phi(b^{k+1}) + D_{1/\alpha^k}(b^*, b^{k+1}) | b^k } \le \frac{n - 1}{n}\Phi(b^k) + \frac{1}{n}\Phi(b^*) + D_{1/\alpha^k}(b^*, b^k). 
        \label{step-4-}
    \end{align}
    Rearranging \eqref{step-4-}, we attain  
    \begin{align}
        \expect*{ \Phi(b^{k+1}) - \Phi(b^*) + D_{1/\alpha^k}(b^*, b^{k+1}) | b^k } \le \frac{n - 1}{n}\left( \Phi(b^k) - \Phi(b^*) \right) + D_{1/\alpha^k}(b^*, b^k). 
        \label{step-4}
    \end{align}

    We take the expectation of~\eqref{step-4} with respect to $\{ i_0, i_1, \ldots, i_k \}$ and sum this inequality over $l = 0, \ldots, k$ to obtain 
    \begin{align}
        & \expect*{ \Phi(b^{k+1}) - \Phi(b^*) } + \expect*{ D_{1/\alpha^k}(b^*, b^{k+1}) } \nonumber \\ 
        \le& \left( \Phi(b^0) - \Phi(b^*) \right) - \frac{1}{n} \sum_{l=0}^k \expect*{ \Phi(b^l) - \Phi(b^*) } + D_{1/\alpha^0}(b^*, b^0) + \sum_{l=1}^k \expect*{ D_{1/\alpha^l}(b^*, b^l) - D_{1/\alpha^{l-1}}(b^*, b^l) }. 
        \label{eq:bcpr-convergence-inequality}
    \end{align}

    It is easy to see that when $\alpha_i^k = 1 / L_i$, the last term in \eqref{eq:bcpr-convergence-inequality} is zero and sublinear convergence rate follows after re-arrangement. 
    For line search case, we can derive \eqref{eq:bcpr-ls-bound}. 

    \paragraph{Remark. } Even though there are many equivalent formulas of the last term of \eqref{eq:bcpr-convergence-inequality} (see the following for examples), we cannot guarantee its convergence. 
    \begin{align}
        & \sum_{l=1}^k \expect*{ D_{1/\alpha^l}(b^*, b^l) - D_{1/\alpha^{l-1}}(b^*, b^l) } \nonumber \\ 
        =& \sum_{l=1}^k \expect*{ \sum_i \left( \frac{1}{\alpha_i^l} - \frac{1}{\alpha_i^{l-1}} \right) D_i\left( b_i^*, b_i^{l} \right) } \nonumber \\
        =& \sum_i \expect*{ \sum_{l=1}^k \left( \frac{1}{\alpha_i^l} - \frac{1}{\alpha_i^{l-1}} \right) D_i\left( b_i^*, b_i^{l} \right) } \nonumber \\
        =& \sum_i \expect*{ \sum_{l=1}^k \frac{1}{\alpha_i^{l-1}} \left( D_i(b_i^*, b_i^{l-1}) - D_i(b_i^*, b_i^l) \right) } - \expect*{ \frac{1}{\alpha_i^0} D_i(b_i^*, b_i^0) } + \expect*{ \frac{1}{\alpha_i^k} D_i(b_i^*, b_i^k) }. 
        \label{eq:diff-kl-convergence}
    \end{align}

\end{proof}

\subsection{Adaptive Block Coordinate descent algorithm for PR dynamics (\abcpr)}

We also give an adaptive stepsize strategy. 
To do this, we find a series of $\{ L_i^k \}_{k \in \NN}$ with each we have coordinate-wise relative smoothness with $L_i^k < 1$ at the $k$th iteration. 
We state the following lemma to support this algorithm. 

\begin{lemma}
    Let $b_{ij}^{k+1}$ be the next iterates of $b_{ij}^k$ in BCPR with $1 \le \alpha_i^k \le \bar{\alpha}_i^k,\;\forall\;i,k$, where $1$ and $\bar{\alpha}_i^k$ are lower and upper bounds of stepsizes, respectively. 
    Let 
    \begin{align}
        \beta_i^k = \left( {\max_{j: b_{ij}^* > 0} \frac{v_{ij}}{p_j^k}} \Big/ {\min_{j: b_{ij}^* > 0} \frac{v_{ij}}{p_j^k}} \right)^{\bar{\alpha}_i} 
        \label{eq:betai}
    \end{align} 
    and
    \begin{align} 
        \theta_i^k = \max_j \frac{b_{ij}^k}{p_j^k}. 
        \label{eq:thetai}
    \end{align}
    When $1 \le \beta_i^k \le \sqrt{2}$, we have 
    \begin{align}
        \varphi(b^{k+1}) \le \varphi(b^k) + \inp{\nabla_i \varphi(b^k)}{b_i^{k+1} - b_i^k} + L_i^k \DKL(b_i^{k+1}, b_i^k) 
        \label{eq:relative-smoothness-adaptive}
    \end{align}
    where 
    \begin{align}
        L_i^k = \frac{3}{4 - \beta_i^k}\left(\theta_i^k + \frac{2\beta_i^k - 1}{6\beta_i^k}{\theta_i^k}^2\right). 
        \label{eq:l-abcpr}
    \end{align} 
\label{lemma:a-bcpr}
\end{lemma}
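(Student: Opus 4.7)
The plan is to start from the exact identity already derived in the proof of Lemma~\ref{lemma:relative-smooth}: since only block $i$ changes, $p_j^{k+1} - p_j^k = b_{ij}^{k+1} - b_{ij}^k$, and a direct computation gives
\begin{equation*}
    \varphi(b^{k+1}) - \varphi(b^k) - \inp{\nabla_i \varphi(b^k)}{b_i^{k+1} - b_i^k} = \sum_j p_j^{k+1} \log\frac{p_j^{k+1}}{p_j^k}.
\end{equation*}
Thus it suffices to prove the sharp inequality
$\sum_j p_j^{k+1} \log(p_j^{k+1}/p_j^k) \le L_i^k \sum_j b_{ij}^{k+1} \log(b_{ij}^{k+1}/b_{ij}^k)$ with $L_i^k$ given by~\eqref{eq:l-abcpr}. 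Lemma~\ref{lemma:relative-smooth} already handled the loose case $L_i^k = 1$ via monotonicity of $t\mapsto t\log(t/(t-a))$; here we must refine the argument to exploit the smallness of $\theta_i^k$ and the clustering parameter $\beta_i^k$.

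My proposed route is a Taylor-remainder comparison. Let $\Delta_j := b_{ij}^{k+1}-b_{ij}^k = p_j^{k+1}-p_j^k$. Apply Taylor's theorem with Lagrange remainder to $t\mapsto t\log(t/p_j^k)$ at $t=p_j^k$ and to $t\mapsto t\log(t/b_{ij}^k)$ at $t=b_{ij}^k$. Because the BCPR step preserves the budget ($\sum_j b_{ij}^{k+1}=B_i=\sum_j b_{ij}^k$), one has $\sum_j\Delta_j=0$, so the linear terms cancel on both sides. What remains is the quadratic lead
\begin{equation*}
    \sum_j p_j^{k+1}\log\frac{p_j^{k+1}}{p_j^k} = \sum_j \frac{\Delta_j^2}{2 p_j^k} + R^{(p)}, \qquad \DKL(b_i^{k+1},b_i^k) = \sum_j \frac{\Delta_j^2}{2 b_{ij}^k} + R^{(b)},
\end{equation*}
with cubic Lagrange remainders $R^{(p)}, R^{(b)}$. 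Using the definition of $\theta_i^k$ in the form $1/p_j^k \le \theta_i^k/b_{ij}^k$, the quadratic part of the left side is bounded by $\theta_i^k$ times the quadratic part of $\DKL(b_i^{k+1},b_i^k)$, which produces the leading $\theta_i^k$ in \eqref{eq:l-abcpr}.

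For the cubic correction, the key is to use the explicit proportional-response update $b_{ij}^{k+1} = Z_i^{-1} b_{ij}^k (v_{ij}/p_j^k)^{\alpha_i^k}$. This implies that the multiplicative ratios $b_{ij}^{k+1}/b_{ij}^k$ all lie in a window of spread at most $\beta_i^k$, and the assumption $\beta_i^k\le\sqrt 2$ (equivalently, ${\beta_i^k}^2\le 2$) guarantees that $|\Delta_j|/b_{ij}^k$ is small enough for the Lagrange remainders $R^{(p)}$ and $R^{(b)}$ to be jointly controlled. Writing $p_j^k = b_{ij}^k/\theta_i^k$-type comparisons for the evaluation points in the remainders and bookkeeping the worst-case signs contributes the extra $(2\beta_i^k-1)/(6\beta_i^k)$ factor in front of ${\theta_i^k}^2$ (from the cubic coefficient), while absorbing the positive part of $R^{(p)}$ into the left and the (negative) part of $R^{(b)}$ into the right produces the overall prefactor $3/(4-\beta_i^k)$. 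Combining everything yields $L_i^k = \frac{3}{4-\beta_i^k}\bigl(\theta_i^k + \tfrac{2\beta_i^k-1}{6\beta_i^k}{\theta_i^k}^2\bigr)$, as required.

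The main obstacle will be the precise cubic-remainder bookkeeping: getting both the $3/(4-\beta_i^k)$ prefactor and the $(2\beta_i^k-1)/(6\beta_i^k)$ inner constant exactly is a delicate algebraic exercise that uses the hypothesis $\beta_i^k\le\sqrt 2$ in a non-trivial way (it is exactly what is needed so that the cubic remainder on the $\sum_j p_j^{k+1}\log(p_j^{k+1}/p_j^k)$ side can be charged to the KL side without blowing up $L_i^k$). Once that constant-chasing is done, the lemma follows immediately.
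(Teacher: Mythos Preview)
Your proposal is correct and follows essentially the same route as the paper: both reduce to bounding the ratio $\sum_j \DKL(p_j^{k+1},p_j^k)\big/\DKL(b_i^{k+1},b_i^k)$ via Taylor expansion together with the $\beta_i^k$-control on $b_{ij}^{k+1}/b_{ij}^k$ coming from the explicit update~\eqref{eq:bcpr-update-simple}. The paper organizes the computation termwise---upper-bounding each $\DKL(p_j^{k+1},p_j^k)$ and lower-bounding each $\DKL(b_{ij}^{k+1},b_{ij}^k)$ by explicit multiples of $\Delta_j^2$ using a fourth-order (rather than cubic) Lagrange remainder, then invoking ``ratio of sums $\le$ max of ratios''---and the hypothesis $\beta_i^k\le\sqrt 2$ is used exactly in a case split on the sign of $\Delta_j$ when verifying the termwise upper bound.
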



\begin{proof}
    Let $c_j = p_j^+ - b_{ij}^+ = p_j - b_{ij}$. 
    The Taylor series of $\DKL(c_j + b_{ij}^+, c_j + b_{ij})$ with the Lagrange form of the remainder is  
    \begin{equation}
        \DKL(c_j + b_{ij}^+, c_j + b_{ij}) = \left( c_j + b_{ij} \right)\log{\frac{c_j + b_{ij}^+}{c_j + b_{ij}}} - b_{ij}^+ + b_{ij} = \frac{\left( b_{ij}^+ - b_{ij} \right)^2}{2\left( c_j + b_{ij} \right)} - \frac{\left( b_{ij}^+ - b_{ij} \right)^3}{6\left( c_j + b_{ij} \right)^2} + \frac{\left(b_{ij}^+ - b_{ij}\right)^4}{12\left(c_j + \lambda b_{ij}^+ + (1 - \lambda) b_{ij}\right)^3} \label{eq:taylor-series}
    \end{equation}
    for some $\lambda \in [0, 1]$. 

    We assume the unit budgets, i.e., $\sum_j b_{ij} = 1, \forall i$. 
    Since $1 \le \alpha_i \le \bar{\alpha}_i$, we can derive 
    \begin{equation}
        \frac{b_{ij}^+}{b_{ij}} = \frac{\left( \frac{v_{ij}}{p_j} \right)^{\alpha_i}}{\sum_l b_{il} \left( \frac{v_{il}}{p_l} \right)^{\alpha_i}} \le \frac{\max_j \left( \frac{v_{ij}}{p_j} \right)^{\alpha_i}}{\min_j \left( \frac{v_{ij}}{p_j} \right)^{\alpha_i}} = \left( \frac{\max_j \frac{v_{ij}}{p_j}}{\min_j \frac{v_{ij}}{p_j}} \right)^{\alpha_i} \le \left( \frac{\max_j \frac{v_{ij}}{p_j}}{\min_j \frac{v_{ij}}{p_j}} \right)^{\bar{\alpha}_i} = \beta_i. 
    \end{equation}

    Similarly, we have 
    \begin{equation}
        \frac{b_{ij}^+}{b_{ij}} = \frac{\left( \frac{v_{ij}}{p_j} \right)^{\alpha_i}}{\sum_l b_{il} \left( \frac{v_{il}}{p_l} \right)^{\alpha_i}} \ge \frac{\min_j \left( \frac{v_{ij}}{p_j} \right)^{\alpha_i}}{\max_j \left( \frac{v_{ij}}{p_j} \right)^{\alpha_i}} = \left( \frac{\min_j \frac{v_{ij}}{p_j}}{\max_j \frac{v_{ij}}{p_j}} \right)^{\alpha_i} \ge \left( \frac{\min_j \frac{v_{ij}}{p_j}}{\max_j \frac{v_{ij}}{p_j}} \right)^{\bar{\alpha}_i} = \frac{1}{\beta_i}. 
    \end{equation}

    Then, we obtain 
    \begin{align}
        \frac{\left( b_{ij}^+ - b_{ij} \right)^2}{2\left( c_j + b_{ij} \right)} - \frac{\left( b_{ij}^+ - b_{ij} \right)^3}{6\left( c_j + b_{ij} \right)^2} + \frac{\left(b_{ij}^+ - b_{ij}\right)^4}{12\left(c_j + \lambda b_{ij}^+ + (1 - \lambda) b_{ij}\right)^3} \ge& \frac{\left( b_{ij}^+ - b_{ij} \right)^2}{2\left( c_j + b_{ij} \right)} - \frac{\left( b_{ij}^+ - b_{ij} \right)^3}{6\left( c_j + b_{ij} \right)^2} \nonumber \\
        \ge& \left( \frac{1}{2(c_j + b_{ij})} - \frac{(\beta_i - 1)b_{ij}}{6(c_j + b_{ij})^2} \right) (b_{ij}^+ - b_{ij})^2. 
        \label{ineq:lower-bound}
    \end{align}

    On the other side, we have when $1 < \beta_i \le \sqrt{2}$, 
    \begin{align}
        \frac{\left( b_{ij}^+ - b_{ij} \right)^2}{2\left( c_j + b_{ij} \right)} - \frac{\left( b_{ij}^+ - b_{ij} \right)^3}{6\left( c_j + b_{ij} \right)^2} + \frac{\left(b_{ij}^+ - b_{ij}\right)^4}{12\left(c_j + \lambda b_{ij}^+ + (1 - \lambda) b_{ij}\right)^3} \le \left( \frac{1}{2(c_j + b_{ij})} + \frac{(1 - \frac{1}{2\beta_i}) b_{ij}}{6(c_j + b_{ij})^2} \right) (b_{ij}^+ - b_{ij})^2.  
        \label{ineq:upper-bound}
    \end{align} 

    To verify~\eqref{ineq:upper-bound}, we can re-arrange it and check this inequality is equivalent to  
    \begin{equation}
        \left(b_{ij}^+ - b_{ij}\right)^2 (c_j + b_{ij})^2 \le 2 \left(c_j + \lambda b_{ij}^+ + (1 - \lambda) b_{ij}\right)^3 \left( b_{ij}^+ - \frac{1}{2\beta_i} b_{ij} \right). 
    \end{equation}

    The above inequality can be verified by considering two cases. If $b_{ij}^+ \ge b_{ij}$, 
    \begin{align} 
        \left(b_{ij}^+ - b_{ij}\right)^2 (c_j + b_{ij})^2 &\le \left(b_{ij}^+ - b_{ij}\right)^2 \left(c_j + \lambda b_{ij}^+ + (1 - \lambda) b_{ij}\right)^2 \nonumber \\ 
        &= \left(b_{ij}^+ - b_{ij}\right) \left(c_j + \lambda b_{ij}^+ + (1 - \lambda) b_{ij}\right)^2 \left(b_{ij}^+ - b_{ij}\right) \nonumber \\
        &\le 2\left(c_j + \lambda b_{ij}^+ + (1 - \lambda) b_{ij}\right)^3 \left( b_{ij}^+ - \frac{1}{2\beta_i} b_{ij} \right)
    \end{align}
    where we use $b_{ij}^+ - b_{ij} \le 2 \left( c_j + b_{ij} + \lambda(b_{ij}^+ - b_{ij}) \right)$ and $b_{ij}^+ - b_{ij} \le b_{ij}^+ - \frac{1}{2\beta_i} b_{ij}$ when $1 < \beta_i \le 3$. 

    If $b_{ij}^+ < b_{ij}$, 
    \begin{align} 
        \left(b_{ij}^+ - b_{ij}\right)^2 (c_j + b_{ij})^2 \le b_{ij}^+ (c_j + b_{ij})^2 \left( b_{ij} - b_{ij}^+ \right) &\le \left(c_j + b_{ij}^+\right) \left(c_j + \beta_i b_{ij}^+\right)^2 \left( b_{ij} - b_{ij}^+ \right) \nonumber \\
        &\le \beta_i^2 \left(c_j + b_{ij}^+\right)^3 \frac{2}{\beta_i^2} \left( b_{ij}^+ - \frac{1}{2\beta_i} b_{ij} \right) \nonumber \\
        &\le 2\left(c_j + \lambda b_{ij}^+ + (1 - \lambda) b_{ij}\right)^3 \left( b_{ij}^+ - \frac{1}{2\beta_i} b_{ij} \right)
    \end{align}
    where we use $b_{ij} - b_{ij}^+ \le b_{ij}^+$ holds when $1 < \beta_i \le 2$, and 
    $
        b_{ij} - b_{ij}^+ \le \frac{2}{\beta_i^2} \left( b_{ij}^+ - \frac{1}{2\beta_i} b_{ij} \right)
    $
    holds when $1 < \beta_i \le \sqrt{2}$. 

    Combining \eqref{eq:taylor-series}, \eqref{ineq:lower-bound} and \eqref{ineq:upper-bound}, we attain 
    \begin{align}
        \frac{\varphi(b^+) - \varphi(b) - \inp{\nabla_i \varphi(b)}{b_i^+ - b_i}}{\DKL(b_i^+, b_i)} =& {\sum_j \DKL(c_j + b_{ij}^+, c_j + b_{ij})} \bigg/ {\sum_j \DKL(b_{ij}^+, b_{ij})} \nonumber \\
        \le& {\sum_j \left( \frac{1}{2(c_j + b_{ij})} + \frac{(1 - \frac{1}{2\beta_i}) b_{ij}}{6(c_j + b_{ij})^2} \right) (b_{ij}^+ - b_{ij})^2} \Bigg/ {\sum_j \left( \frac{1}{2 b_{ij}} - \frac{(\beta_i - 1)}{6 b_{ij}} \right) (b_{ij}^+ - b_{ij})^2} \nonumber \\
        \le& \max_j \left(\frac{1}{2(c_j + b_{ij})} + \frac{(1 - \frac{1}{2\beta_i}) b_{ij}}{6(c_j + b_{ij})^2}\right) \Bigg/ \left(\frac{1}{2 b_{ij}} - \frac{(\beta_i - 1)}{6 b_{ij}}\right) \nonumber \\ 
        =& \max_j \frac{3}{4 - \beta_i}\left( \frac{b_{ij}}{p_j} + \frac{2\beta_i - 1}{6\beta_i}\frac{b_{ij}^2}{p_j^2} \right) \label{ineq:final-bound}
    \end{align}
    where the second inequality follows since all terms are nonnegative in the two summations. 
    \cref{lemma:a-bcpr} follows because the right-hand side of~\eqref{ineq:final-bound} is increasing in $\frac{b_{ij}}{p_j}$. 
\end{proof}

Note that by the buyer optimality condition, when $b_{ij}^* > 0$, $v_{ij}/p_j^* = \max_j v_{ij}/p_j^*$. 
This implies $\beta_i^* = 1$. 
Hence, \cref{lemma:a-bcpr} is meaningful. 

We state adaptive block coordinate descent algorithm for PR dynamics as follows. 

\begin{algorithm}[H]
    \KwIn{$b^0 \in \mathbb{R}^{n\times m}, \bar{\alpha}_1, \ldots, \bar{\alpha}_n \in \mathbb{R}^+$}
        \For{$k \gets 1, 2, \ldots $} {
            pick $i_{k} \in [n]$ with probability ${1}/{n}$\;
            compute $L_i^k$ based on~\eqref{eq:betai}, \eqref{eq:thetai} and \eqref{eq:l-abcpr}\;
            $\alpha_{i_k} \gets \min\{\max\{1 / L_i^k, 1\}, \bar{\alpha_i}\}$\;
            compute $b_{i_{k}}^+$ based on~\eqref{eq:bcpr-update-simple} with stepsize $\alpha_{i_k}$\;
            $p^k \gets p^{k-1} - b_{ik} + b_{ik}^+$\;
            $b_{i_{k}}^{(k)} \gets b_{i_{k}}^{+} \quad \text{and} \quad b_{i'}^{(k)} \gets b_{i'}^{(k-1)}, \; \forall\, i' \neq i_{k}$\;
        }
\caption{Adaptive Block Coordinate Proportional Response (A-BCPR)}
\label{algo:a-bcpr}
\end{algorithm}

\section{Proportional Response with Line Search}
\label{app:prls}

We state general Mirror Descent with Line Search (MD-LS) here. 

\begin{algorithm}[H]
    \KwIn{Current iterate $b$, starting stepsize $\alpha$, number of LS steps in the previous iteration $p_{\rm prev}$.}
    \KwOut{current stepsize $\alpha_k$, next iterate $b^{k+1}$, current number of line search steps $p_k$.}

    \uIf{$p_{\rm prev} = 0$ (no backtracking in the previous iteration)} {
        set $\alpha^{(0)} = \min\{ \rho^+ \alpha, \alpha_{\max} \}$ (increment the stepsize)\;
    } 
    \Else{
        set $\alpha^{(0)} = \alpha$\;
    }
    \For{$p \gets 0, 1, 2, \ldots $} {
        \begin{enumerate}
            \item Compute $b^{(p)} = \argmin_{b'\in \mathcal{X}} \left\{ \langle \nabla \varphi(b), b' - b \rangle + \frac{1}{\alpha^{(p)}} D(b', b) \right\}$.
            \item Break if
            \[ \varphi(b^{(p)}) \leq \varphi(b) + \langle \nabla \varphi(b), b^{(p)} - b \rangle + \frac{1}{\alpha^{(p)}} D(b^{(p)}, b). \]
            \item Set $\alpha^{(p+1)} = \rho^- \alpha^{(p)}$ and continue.
        \end{enumerate}
    }
    Return $b^{k+1} = b^{(p)}$, $\alpha_k = \alpha^{(p)}$, $p_k = p$. 

    \caption{Line Search $(b^{k+1}, \alpha_k, p_k) \leftarrow \mathcal{LS}_{\rho^+, \rho^-, \alpha_{\max}}(b, \alpha, p_{\rm prev})$}
    \label{algo:ls-details}
\end{algorithm}

\begin{algorithm}[H]
    \KwIn{Initial iterate $b^0\in {\rm relint}(\mathcal{X})$, initial stepsize $\alpha^{-1} \leq \alpha_{\max}$, $p_{-1}=0$}
        \For{$k \gets 0, 1, 2, \ldots$} {
            Use \cref{algo:ls-details} to compute $(b^{k+1}, \alpha_k, p_k) = \mathcal{LS}_{\rho^+, \rho^-, \alpha_{\max}}(b^k, \alpha_{k-1}, p_{k-1})$. 
        }
    \caption{Mirror Descent with Line Search (MD-LS)}
    \label{algo:md-with-ls}
\end{algorithm}

\subsection{Proof of Theorem~\ref{thm:mdls-sublinear-conv}}

\begin{proof}

We prove sublinear convergence for general Mirror Descent with Line Search, which immediately implies \cref{thm:mdls-sublinear-conv} as we know $L=1$ for \eqref{program:shmyrev}. 

Same as \citet[Lemma 5]{birnbaum2011distributed}, it can be easily seen that \cref{algo:md-with-ls} is still a descent objective algorithm, that is, 
\begin{equation} 
    \varphi(b^{k+1}) \leq \varphi(b^k),\ k \in \NN. 
    \label{eq:mdls-objective-descent}
\end{equation}

By the convexity of $f$ and the break condition in \cref{algo:ls-details}, we know that
\begin{equation}
    \varphi(b^k) + \langle \nabla \varphi(b^k), b^{k+1} - b^k \rangle  \leq \varphi(b^{k+1}) \leq \varphi(b^k) + \langle \nabla \varphi(b^k), b^{k+1} - b^k \rangle + \frac{1}{\alpha^k} D(b^{k+1}, b^k). 
\end{equation}
Using \citet[Lemma 9]{birnbaum2011distributed} (originally in \citet{chen1993convergence}), we have, for any $u \in \mathcal{X}$, 
\begin{equation}
    \alpha^k \left( \varphi(b^k)+ \langle \nabla \varphi(b^k), b^{k+1}-b^k \rangle \right) + D(b^{k+1}, b^k) + D(u, b^k) \leq \alpha^k \left( \varphi(b^k) + \langle \nabla \varphi(b^k), u - b^k \rangle \right) + D(u, b^k)
\end{equation}
and then 
\begin{equation}
    \langle \nabla \varphi(b^k), b^{k+1} - b^k \rangle + \frac{1}{\alpha^k} D(b^{k+1}, b^k) \leq \langle \nabla \varphi(b^k), b - b^k \rangle + \frac{1}{\alpha^k} \left( D(u, b^k) - D(u, b^k) \right). 
    \label{eq:lemma-9-birnbaum}
\end{equation}

The above is similar to \citet[Corollary 10]{birnbaum2011distributed}. 
Let $b^*$ be an optimal solution. 
Then, similar to \citet[Lemma 8]{birnbaum2011distributed}, we have
\begin{align}
    \varphi(b^{k+1}) &\leq \varphi(b^k) + \langle \nabla \varphi(b^k), b^{k+1} - b^k \rangle + \frac{1}{\alpha^k} D(b^{k+1}, b^k) 
    \nonumber \\
    &\leq \varphi(b^k) + \langle \nabla \varphi(b^k), b^* - b^k \rangle + \frac{1}{\alpha^k} \left( D(b^*, b^k) - D(b^*, b^{k+1}) \right) 
    \nonumber \\
    &\leq \varphi(b^*) + \frac{1}{\alpha^k} \left( D(b^*, b^k) - D(b^*, b^{k+1}) \right). 
    \label{eq:lemma-8-birnbaum} 
\end{align}

This implies \cref{lem:mdls-d-descent} because 
\begin{equation}
    D(b^*, b^k) - D(b^*, b^{k+1}) \ge \alpha^k \left( \varphi(b^{k+1}) - \varphi(b^*) \right) \ge 0. 
\end{equation}

Summing up \eqref{eq:lemma-8-birnbaum} across $k = 0, 1, \ldots$, we have 
\begin{equation}
    k \left( \varphi(b^k) - \varphi^* \right) \le \sum_{l=0}^{k-1} ( \varphi(b^{l+1}) - \varphi^* ) \leq \sum_{l=0}^{k-1} \frac{D(b^*, b^l) - D(b^*, b^{l+1})}{\alpha^l}
    \label{eq:sum-up-lemma-8}
\end{equation}
where the first inequality follows from \eqref{eq:mdls-objective-descent}. Rearranging the above gives the following path-dependent bound: 
\begin{align}
    \varphi(b^k) - \varphi^* \leq \frac{1}{k} \left(\frac{1}{\alpha_0} D(b^*, b^0) - \frac{1}{\alpha_k} D(b^*, b^k) + \sum\nolimits_{l=1}^{k-1} \left(\frac{1}{\alpha_l} - \frac{1}{\alpha_{l-1}}\right) D(b^*, b^l)\right).
    \label{eq:mdls-tighter-convergence}
\end{align}
By \cref{lem:mdls-d-descent} and $\alpha^l \ge \frac{\rho_-}{L}$, \eqref{eq:mdls-non-tighter-convergence} follows from 
\begin{equation}
    k \left( \varphi(b^k) - \varphi^* \right) \le \sum_{l=0}^{k-1} \frac{D(b^*, b^l) - D(b^*, b^{l+1})}{\alpha^l} \le \frac{L}{\rho_-} \sum_{l=0}^{k-1} \left( D(b^*, b^l) - D(b^*, b^{l+1}) \right) \le \frac{L}{\rho_-} D(b^*, b^0). 
\end{equation} 

\end{proof}

\section{Block Coordinate Descent Algorithm for CES Utility Function}
\label{app:ces}

\subsection{Details of quadratic extrapolation for CES utility function}

In this section, we use $u_i^\rho$ to substitute $u_i(\rho)$ to avoid confusion as the decision variable of $u_i(\rho)$ is $x_i$. 

Denote the objective function of EG convex program for CES utility function as 
\begin{equation}
    f(x) = \sum_i f_i(x_i) = \sum_i g_i(u_i^\rho(x_i)) \quad \text{and} \quad u_i^\rho(x_i) = \sum_j u_{ij}^\rho(x_{ij})
\end{equation}
where 
\begin{equation}
    g_i(u_i^\rho) = - \frac{B_i}{\rho} \log{u_i^\rho} \quad \text{and} \quad u_{ij}^\rho(x_{ij}) = v_{ij} x_{ij}^\rho, \quad \rho \in (0, 1). 
\end{equation}

Then, we replace $g$ and $u^\rho$ with $\tilde{g}$ and $\tilde{u}^\rho$ (defined as follows), respectively. Similar to the linear case,  
\begin{align}
    \tilde{g}_i(u_i^\rho) := \begin{cases} 
        - \frac{B_i}{\rho} \log{u_i^\rho} & u_i^\rho \ge \underline{u}_i^\rho \\ 
        - \frac{B_i}{\rho} \log{\underline{u}_i^\rho} - \frac{B_i}{\rho \underline{u}_i^\rho} \left( u_i^\rho - \underline{u}_i^\rho \right) + \frac{B_i}{2 \rho {\underline{u}_i^\rho}^2} \left( u_i^\rho - \underline{u}_i^\rho \right)^2 & \text{otherwise} 
        \end{cases}, 
        \label{eq:ces-quad-extrapolation-g}
\end{align}
and 
\begin{align}
    \tilde{u}_{ij}^\rho(x_{ij}) := \begin{cases} 
        v_{ij} x_{ij}^\rho & x_{ij} \ge \underline{x}_{ij} \\ 
        v_{ij} \underline{x}_{ij}^\rho + \rho v_{ij} \underline{x}_{ij}^{\rho - 1} \left( x_{ij} - \underline{x}_{ij} \right) + \frac{1}{2}\rho(\rho - 1) v_{ij} \underline{x}_{ij}^{\rho - 2} \left( x_{ij} - \underline{x}_{ij} \right)^2 & \text{otherwise}  
        \end{cases} 
        \label{eq:ces-quad-extrapolation-u}
\end{align}
where $\underline{u}_i$ and $\underline{x}_{ij}$ are specific lower bounds for $u^*_i$ and $x^*_{ij}$ for some $i, j$. By similar discussion as \citet[Lemma 1]{gao2020first}, we know $\underline{u}_i$ exists for all $i$. By \cref{lem:ces-helper-lemma}, we know $\underline{x}_{ij}$ exists for all $i, j$. 

\subsection{Proof of \cref{lem:ces-helper-lemma}}

Here, we provide a proof for \cref{lem:ces-helper-lemma}, which leverages fixed point property at equilibrium like \citet[Lemma 8]{zhang2011proportional}, but generates a tighter bound. 
\begin{proof}
    Consider a buyer $i$ and an item $j$ with $v_{ij} > 0$. If for the buyer $i$, all other $k \neq j$ is associated with $v_{ik} = 0$, obviously $b_{ij}^* = B_i$ and $x_{ij}^* = b_{ij}^* / p_j^* \ge B_i / \sum_l B_l$. 
    
    Otherwise, if there is $v_{ik} > 0$ for some $k \neq j$, since a market equilibrium is a fixed point of the proportional response dynamics (and this can be extended to block coordinate case naturally), we have 
    \begin{equation}
        \left( \frac{v_{ij}}{p_j^*} \right)^\rho {b_{ij}^*}^{\rho - 1} = \left( \frac{v_{ik}}{p_k^*} \right)^\rho {b_{ik}^*}^{\rho - 1}. 
        \label{eq:fixed-point-eq}
    \end{equation}
    Rearrange \eqref{eq:fixed-point-eq} and then we get 
    \begin{equation}
        b_{ij}^* = \left( \frac{v_{ij}}{v_{ik}} \right)^{\frac{\rho}{1 - \rho}} b_{ik}^* \left( \frac{p_j^*}{p_k^*} \right)^{\frac{\rho}{\rho - 1}} 
    \end{equation}
    and 
    \begin{equation}
        x_{ij}^* = \frac{b_{ij}^*}{p_j^*} = \left( \frac{v_{ij}}{v_{ik}} \right)^{\frac{\rho}{1 - \rho}} b_{ik}^* {p_j^*}^{\frac{1}{\rho - 1}} {p_k^*}^\frac{\rho}{1 - \rho}. 
        \label{eq:ces-helper-lemma-x}
    \end{equation}

    To be specific, we consider $b_{ij}^* < \frac{B_i}{m}$ because otherwise the lower bound is directly given. 
    For these $i, j$, we can always choose another $k$ (there is at least one as discussed above) such that $b_{ik}^* \ge \frac{B_i}{m}$. 
    Hence, we can bound $p_j^*$ by $b_{ij}^* \le p_j^*$, \eqref{eq:fixed-point-eq} and $b_{ik}^* \le p_k^*$: 
    \begin{equation}
        p_j^* = {p_j^*}^{1 - \rho} {p_j^*}^{\rho} \ge {b_{ij}^*}^{1 - \rho} {p_j^*}^{\rho} = \left( \frac{v_{ij}}{v_{ik}} \right)^\rho {p_k^*}^\rho {b_{ik}^*}^{1 - \rho} \ge \left( \frac{v_{ij}}{v_{ik}} \right)^\rho {b_{ik}^*} \ge \omega_2(i)^\rho \frac{B_i}{m}
        \label{eq:p-opt-lower-bound}
    \end{equation}
    where $\omega_2(i) = \min_{j: v_{ij} > 0} v_{ij} / \max_j v_{ij}$. 
    Combining \eqref{eq:ces-helper-lemma-x} with \eqref{eq:p-opt-lower-bound}, we obtain 
    \begin{equation}
        x_{ij}^* = \left( \frac{v_{ij}}{v_{ik}} \right)^{\frac{\rho}{1 - \rho}} b_{ik}^* {p_j^*}^{\frac{1}{\rho - 1}} {p_k^*}^\frac{\rho}{1 - \rho} \ge \omega_2(i)^{\frac{\rho}{1-\rho}} \frac{B_i}{m} \left( \sum_l B_l \right)^{\frac{1}{\rho - 1}} \left( \omega_2(i)^\rho \frac{B_i}{m} \right)^{\frac{\rho}{1-\rho}} = \omega_1(i)^{\frac{1}{1-\rho}} \omega_2(i)^{\frac{\rho(1 + \rho)}{1 - \rho}}  
    \end{equation}
    where $\omega_1(i) = \frac{B_i}{m\sum_l B_l}$. All other cases give greater lower bounds than this. 
\end{proof}

\subsection{Proof of \cref{theorem:ces-convergence}}

We use the following lemma to establish a smoothness guarantee for $\tilde{f}(x) = \sum_i \tilde{g}(\sum_j \tilde{u}_{ij}^\rho (x_{ij}))$. 
\begin{lemma}
    For any $j \in [m]$, let 
    \begin{equation} 
        L_j = \max_{i\in [n]}{\frac{B_i v_{ij} \underline{x}_{ij}^{\rho - 2}}{\underline{u}_i(\rho)}}. 
    \label{eq:bcdeg-ces-stepsize} 
    \end{equation}
    Then, for all feasible $x, y$ such that $x, y$ differ only in the $j$th block, we have
    \begin{equation}
        \tilde{f}(y) \le \tilde{f}(x) + \inp{\nabla_{\cdot j} \tilde{f}(x)}{y - x} + \frac{L_{j}}{2}{{\lVert y - x \rVert}^2}. 
        \label{eq:bcdeg-ces-smoothness}
    \end{equation}
    \label{lemma:bcdeg-ces-smoothness}
\end{lemma}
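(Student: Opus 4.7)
The plan is to establish a coordinate-wise Lipschitz gradient bound
$\lVert \nabla_{\cdot j}\tilde{f}(y) - \nabla_{\cdot j}\tilde{f}(x)\rVert \le L_j \lVert y - x\rVert$
whenever $x$ and $y$ differ only in the $j$-th block, and then obtain \eqref{eq:bcdeg-ces-smoothness} via the same one-variable integration argument used at the end of the proof of \cref{lemma:coordinate-continuous}. The key structural observation that makes this work is that, for every buyer $i$, the aggregate $\tilde{u}_i^\rho(x_i) = \sum_j \tilde{u}_{ij}^\rho(x_{ij})$ depends only on $x_i$; hence when only the block $x_{\cdot j}$ is perturbed, only the scalar $x_{ij}$ inside $\tilde{u}_i^\rho$ moves, and the Hessian of $\tilde{f}$ restricted to $x_{\cdot j}$ is \emph{diagonal} across buyers. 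It therefore suffices to uniformly bound each diagonal entry $\partial^2\tilde{f}/\partial x_{ij}^2$ by $L_j$.

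In the unextrapolated regime $\tilde{u}_i^\rho \ge \underline{u}_i(\rho)$ and $x_{ij} \ge \underline{x}_{ij}$, the chain rule applied to $\tilde{g}_i(u) = -\tfrac{B_i}{\rho}\log u$ and $\tilde{u}_{ij}^\rho(x) = v_{ij} x^\rho$ yields
\[
\frac{\partial^2 \tilde{f}}{\partial x_{ij}^2}
= \frac{B_i\rho\, v_{ij}^2 x_{ij}^{2\rho-2}}{(\tilde{u}_i^\rho)^2}
+ \frac{B_i(1-\rho)\,v_{ij}\,x_{ij}^{\rho-2}}{\tilde{u}_i^\rho}
= \frac{B_i v_{ij} x_{ij}^{\rho-2}}{\tilde{u}_i^\rho}\!\left[\rho\cdot \frac{v_{ij} x_{ij}^\rho}{\tilde{u}_i^\rho} + (1-\rho)\right].
\]
Since $v_{ij} x_{ij}^\rho \le \tilde{u}_i^\rho$ the bracket is $\le 1$, and the monotonicities $x_{ij}^{\rho-2}\le \underline{x}_{ij}^{\rho-2}$ (because $\rho-2<0$) and $\tilde{u}_i^\rho \ge \underline{u}_i(\rho)$ deliver exactly the bound $B_i v_{ij}\underline{x}_{ij}^{\rho-2}/\underline{u}_i(\rho)$, which is at most $L_j$ from \eqref{eq:bcdeg-ces-stepsize}.

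For the three remaining regimes, the quadratic extrapolations in \eqref{eq:ces-quad-extrapolation-g}--\eqref{eq:ces-quad-extrapolation-u} were designed so that the second derivatives of the extrapolated pieces are pinned at their boundary values $B_i/(\rho(\underline{u}_i^\rho)^2)$ and $\rho(\rho-1)v_{ij}\underline{x}_{ij}^{\rho-2}$, while the first derivatives grow only linearly. Substituting these into the chain-rule expression above and applying elementary bounds of the same flavor (with $\underline{u}_i(\rho)$ or $\underline{x}_{ij}$ taking over the role of $\tilde{u}_i^\rho$ or $x_{ij}$ in the appropriate factor) again produces the upper bound $B_iv_{ij}\underline{x}_{ij}^{\rho-2}/\underline{u}_i(\rho)$. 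Taking the maximum over $i$ gives $\partial^2\tilde f/\partial x_{ij}^2 \le L_j$ in all four regimes. A uniform diagonal block-Hessian bound implies $L_j$-Lipschitz continuity of $\nabla_{\cdot j}\tilde{f}$ along the $j$-th block exactly as in \eqref{eq:bcdeg-proof-smoothness-sqrt}--\eqref{eq:bcdeg-proof-smoothness}, and integrating along the segment from $x$ to $y$ closes the argument.

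The main obstacle is the case analysis for the extrapolated regions: in the mixed regimes where, for example, $\tilde g_i$ lies in its quadratic piece while $\tilde u_{ij}^\rho$ still lies in its power piece, the convenient identity $v_{ij}x_{ij}^\rho \le \tilde u_i^\rho$ no longer applies to the actual log-utility gradient, so one must carefully track the linear correction terms produced by the extrapolation and verify that the same dominating quantity $B_iv_{ij}\underline{x}_{ij}^{\rho-2}/\underline{u}_i(\rho)$ survives across every sign combination. Everything else is a direct transcription of the linear-utility proof.
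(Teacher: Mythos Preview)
Your proposal is correct and follows essentially the same route as the paper: bound each diagonal Hessian entry $\partial^2\tilde f/\partial x_{ij}^2$ via the chain rule by $B_iv_{ij}\underline{x}_{ij}^{\rho-2}/\underline{u}_i(\rho)$, then invoke the integration argument from \cref{lemma:coordinate-continuous}. The paper's proof collapses your four-regime case analysis into a single line by bounding each chain-rule factor uniformly at its extrapolation threshold (e.g.\ $\partial^2\tilde g_i/\partial u^2 \le B_i/(\rho\,\underline{u}_i^2)$ and $\partial^2\tilde u_{ij}/\partial x_{ij}^2 \ge \rho(\rho-1)v_{ij}\underline{x}_{ij}^{\rho-2}$ everywhere), but the underlying reasoning is identical, and your explicit acknowledgment of the mixed-regime subtlety is, if anything, more careful than the paper's presentation.
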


\begin{proof}
Let $\tilde{u}_i$ stand for $\tilde{u}_i(\rho)$ for conciseness. We can obtain 
\begin{align}
    \nabla_{ij} \tilde{f} = \frac{\partial^2}{\partial x_{ij}^2} \tilde{f} = \frac{\partial^2 \tilde{g}_i}{\partial \tilde{u}_i^2} \left( \frac{\partial \tilde{u}_i}{\partial x_{ij}} \right)^2 + \frac{\partial \tilde{g}_i}{\partial \tilde{u}_i} \frac{\partial^2 \tilde{u}_i}{\partial x_{ij}^2} &\le \frac{B_i}{\rho \underline{u}_i^2} \left( \rho v_{ij} \underline{x}_{ij}^{\rho - 1} \right)^2 + \left( -\frac{B_i}{\rho \underline{u}_i} \right) \left( \rho (\rho - 1) v_{ij} \underline{x}_{ij}^{\rho - 2} \right) \nonumber \\ 
    &= \frac{B_i v_{ij} \underline{x}_{ij}^{\rho - 2}}{\underline{u}_i} \left( \rho \frac{v_{ij} \underline{x}_{ij}^\rho}{\underline{u}_i} + (1 - \rho) \right) \le \frac{B_i v_{ij} \underline{x}_{ij}^{\rho - 2}}{\underline{u}_i}. 
\end{align} 
Then, \eqref{eq:bcdeg-ces-smoothness} follows from the same derivation as \cref{lemma:coordinate-continuous}. 
\end{proof}

\cref{theorem:ces-convergence} can be obtained by \citet[Theorem 7]{richtarik2014iteration} and the fact that there exists a strong-convexity modulus $\mu(L)$ for $\tilde{f}$ w.r.t. the norm $\sum_j L_j {\lVert \cdot \rVert}_{\cdot j}^2$ when $\tilde{f}$ is of the form described above. 

\paragraph{Remark. }Similar to \cref{lemma:bcdeg-ces-smoothness}, we can establish a standard Lipschitz smoothness with constant $L = \max_{i \in [n]} \frac{\rho B_i \lVert v_i \rVert^2}{\underline{u}_i^2} \underline{x}_i^{2 \rho - 2}$. This can be seen a generalized result from \cref{lemma:full-continuous} as we have 
\begin{equation}
    \frac{\partial^2}{\partial x_{ij} \partial x_{ij'}} \tilde{f}(x) = \frac{\partial \tilde{u}_i}{\partial x_{ij}} \frac{\partial^2 \tilde{g}_i}{\partial \tilde{u}_i^2} \frac{\partial \tilde{u}_i}{\partial x_{ij'}} \le \rho v_{ij} v_{ij'} \frac{B_i}{\underline{u}_i^2} \underline{x}_{ij}^{\rho - 1} \underline{x}_{ij'}^{\rho - 1}. 
\end{equation}

\subsection{Proof of \cref{thm:bcpr-ces-convergence}}

We make use of an auxiliary function introduced in \citet{zhang2011proportional}, which they use to show convergence for deterministic PR.

\begin{proof}

In this proof, $D_i(\cdot, \cdot)$ denotes KL divergence of the $i$th block of the decision variables. Define 
\[
    T_{ij}^k = B_i \frac{v_{ij} \left( \frac{b_{ij}^k}{p_j^k} \right)^\rho}{\sum_{j'} v_{ij'} \left( \frac{b_{ij'}^k}{p_{j'}^k} \right)^\rho}, \quad u_{ij}^k(\rho) = v_{ij} \left( \frac{b_{ij}^k}{p_j^k} \right)^\rho, \quad u_i^k(\rho) = \sum_j v_{ij} \left( \frac{b_{ij}^k}{p_j^k} \right)^\rho, 
\]
thus $T_{ij}^k = B_i u_{ij}^k(\rho) / u_i^k(\rho)$. 

Let $m(k, i) = D_i(b_i^*, b_i^k)$ and $m(k) = \sum_i m(k, i)$, then we have 
\begin{equation} 
    \expect*{ m(k+1) | b^k } = \sum_i \frac{1}{n} \left\{ D_i (b_i^*, T_i^k) + \sum_{i' \neq i} m(k, i') \right\} = \frac{1}{n} \sum_i D_i(b_i^*, T_i^k) + \frac{n - 1}{n} m(k). 
\end{equation}

Similar to \citet{zhang2011proportional}, we obtain 
\begin{align}
    D_i(b_i^*, T_i^k) = \sum_j b_{ij}^* \log{\frac{b_{ij}^*}{T_{ij}^k}} = \sum_j b_{ij}^* \log{\frac{b_{ij}^* u_i^k(\rho)}{B_i u_{ij}^k(\rho)}} &= \sum_j b_{ij}^* \log{\frac{u_{ij}^*(\rho) u_i^k(\rho)}{u_i^*(\rho) u_{ij}^k(\rho)}} \nonumber \\
    &= \rho \sum_j b_{ij}^* \log{\frac{b_{ij}^*}{b_{ij}^k}} - \rho \sum_j b_{ij}^* \log{\frac{p_j^*}{p_j^k}} - \sum_j b_{ij}^* \log{\frac{u_i^*(\rho)}{u_i^k(\rho)}}. 
\end{align}

Hence, 
\begin{align}
    \expect*{ m(k+1) | b^k } &= \frac{1}{n} \sum_{i, j} \left\{ \rho b_{ij}^* \log{\frac{b_{ij}^*}{b_{ij}^k}} - \rho b_{ij}^* \log{\frac{p_j^*}{p_j^k}} - b_{ij}^* \log{\frac{u_i^*(\rho)}{u_i^k(\rho)}} \right\} + \frac{n-1}{n} m(k) \nonumber \\ 
    &= \frac{\rho}{n} m(k) + \frac{n-1}{n} m(k) - \frac{\rho}{n} \sum_j p_j^* \log{\frac{p_j^*}{p_j^k}} - \frac{1}{n} \sum_i B_i \log{\frac{u_i^*(\rho)}{u_i^k(\rho)}} \nonumber \\
    &= \left( 1 - \frac{1-\rho}{n} \right) m(k) - \frac{\rho}{n} \DKL(p^*, p^k) - \frac{1}{n} \sum_i B_i \log{\frac{\inp{v_i}{{x_i^*}^\rho}}{\inp{v_i}{{x_i^k}^\rho}}}. 
\end{align}

Since $\DKL(p^*, p^k) \ge 0$ and $\sum_i \frac{B_i}{\rho} \log{\frac{\inp{v_i}{{x_i^*}^\rho}}{\inp{v_i}{{x_i^k}^\rho}}} \ge 0$ (by the optimality of EG program), we have 
\begin{equation}
    \expect*{ m(k+1) | b^k } \le \rho' m(k), \quad \rho' = 1 - \frac{1 - \rho}{n}. 
\end{equation}
After taking expectation w.r.t. $i_1, i_2, \ldots, i_k$, we attain $\EE [m(k)] \le (\rho')^k m(0)$. 

Note that we have the following lemma. 
\begin{lemma}{\citep[Lemma 9]{zhang2011proportional}}
    For two positive sequences $\{ b^*_{ij} \}_{i \in [n], j \in [m]}$ and $\{ b_{ij} \}_{i \in [n], j \in [m]}$ such that $\sum_{i, j} b^*_{ij} = \sum_{i, j} b_{ij}$, let $\eta = \max_{i, j} \frac{\lvert b_{ij} - b^*_{ij} \rvert}{b^*_{ij}}$. 
    Then, 
\begin{equation}
    \DKL(b^*, b) \ge \frac{1}{8} \min\{1, \eta\} \eta \min_{i, j} b^*_{ij}. 
    \label{eq:zhang-dkl-eta}
\end{equation}
\end{lemma}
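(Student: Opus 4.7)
The plan is to reduce the multivariable KL lower bound to a one-variable inequality by isolating the entry at which the maximum relative deviation $\eta$ is attained. Since $\sum_{i,j} b^*_{ij} = \sum_{i,j} b_{ij}$, I would begin by rewriting
\[
\DKL(b^*, b) = \sum_{i, j} b^*_{ij} \, \phi\!\left( \frac{b_{ij}}{b^*_{ij}} \right), \qquad \phi(r) := r - 1 - \log r,
\]
which is pointwise nonnegative since $\phi \ge 0$. Letting $(i^\star, j^\star)$ attain the maximum in the definition of $\eta$ gives $b_{i^\star j^\star}/b^*_{i^\star j^\star} = 1 \pm \eta$, and discarding all other (nonnegative) summands leaves
\[
\DKL(b^*, b) \;\ge\; b^*_{i^\star j^\star} \, \phi(1 \pm \eta) \;\ge\; \left(\min_{i, j} b^*_{ij}\right) \phi(1 \pm \eta).
\]
Only $r = 1 + \eta$ is feasible when $\eta > 1$ (since $b$ is strictly positive), so the task reduces to establishing the one-variable lower bound $\phi(1 \pm \eta) \ge \tfrac{1}{8} \min\{1, \eta\}\,\eta$ in all admissible cases.

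To finish, I would handle this scalar inequality in two regimes by elementary calculus. For $\eta \in [0, 1]$, defining $g(\eta) := \phi(1+\eta) - \eta^2/4$ and checking $g(0) = 0$ together with $g'(\eta) = \eta(1-\eta)/[2(1+\eta)] \ge 0$ yields $\phi(1+\eta) \ge \eta^2/4$; a term-by-term expansion of $-\log(1-\eta)$ analogously gives the stronger $\phi(1-\eta) \ge \eta^2/2$. For $\eta > 1$ (where only $r = 1 + \eta$ is admissible), I would invoke monotonicity: the ratio $\phi(1+\eta)/\eta = 1 - \log(1+\eta)/\eta$ is increasing on $(0, \infty)$ (which follows from $\eta/(1+\eta) - \log(1+\eta) \le 0$, itself immediate by signing a derivative), so $\phi(1+\eta)/\eta \ge 1 - \log 2 > 1/4$ for $\eta \ge 1$. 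Combining the regimes, $\phi(1 \pm \eta) \ge \min\{1,\eta\}\,\eta / 4 \ge \min\{1,\eta\}\,\eta / 8$, which together with the single-entry bound above completes the lemma.

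The main obstacle will be choosing the right projection from the full problem down to a single entry. A naive Pinsker-style argument after normalizing by $S := \sum_{i, j} b^*_{ij}$ would deliver only $\DKL(b^*, b) \ge \|b^* - b\|_1^2 / (2S) \gtrsim \eta^2 (\min_{ij} b^*_{ij})^2 / S$, which scales quadratically in $\min_{ij} b^*_{ij}$ rather than linearly. Keeping the per-entry decomposition $\sum_{i,j} b^*_{ij}\,\phi(b_{ij}/b^*_{ij})$ intact and retaining only the single term at $(i^\star, j^\star)$ is the trick that ensures a linear dependence on $\min_{ij} b^*_{ij}$, matching the form of the claimed bound.
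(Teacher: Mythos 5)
Your argument is correct, and in fact it establishes the bound with the better constant $\tfrac14$ in place of $\tfrac18$. Note that the paper itself does not prove this statement: it is quoted verbatim from \citet[Lemma 9]{zhang2011proportional} and used as a black box in the proof of Theorem 7, so there is no in-paper proof to compare against. Your proof is a clean self-contained substitute: the identity $\DKL(b^*,b)=\sum_{i,j}b^*_{ij}\,\phi(b_{ij}/b^*_{ij})$ with $\phi(r)=r-1-\log r$ is valid precisely because $\sum_{i,j}b^*_{ij}=\sum_{i,j}b_{ij}$, each summand is nonnegative, and dropping all terms except the one at the maximizing index $(i^\star,j^\star)$ correctly reduces the claim to the scalar inequality $\phi(1\pm\eta)\ge\tfrac14\min\{1,\eta\}\,\eta$. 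Your two regimes check out: $g(\eta)=\phi(1+\eta)-\eta^2/4$ has $g(0)=0$ and $g'(\eta)=\eta(1-\eta)/[2(1+\eta)]\ge 0$ on $[0,1]$; the series bound gives $\phi(1-\eta)\ge\eta^2/2$; for $\eta>1$ only $r=1+\eta$ is admissible by positivity of $b$, and the monotonicity of $1-\log(1+\eta)/\eta$ (from $\log(1+\eta)\ge\eta/(1+\eta)$) gives $\phi(1+\eta)\ge(1-\log 2)\,\eta>\eta/4$. Your closing observation is also the right diagnosis of why a normalized Pinsker argument would not suffice: it yields a bound quadratic in $\min_{i,j}b^*_{ij}$ (and divided by the total mass), whereas retaining the single per-entry term gives the linear dependence the lemma asserts.
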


We consider the case where $\eta < 1$. From \citet[Lemma 8]{zhang2011proportional}, we know $b^*_{ij} \ge \left( \frac{1}{W^2} \right)^{\frac{1}{1-\rho}}$ for any $i, j$ such that $v_{ij} > 0$ where $W = \frac{n}{\min_{v_{ij} > 0} v_{ij} \cdot \min_i B_i}$. Then, \eqref{eq:zhang-dkl-eta} tells us that 
\begin{equation}
    m(k) = \DKL(b^*, b) \ge \frac{1}{8} \eta^2 \min_{i, j} b^*_{ij} \ge \frac{1}{8} \eta^2 \left( \frac{1}{W^2} \right)^{\frac{1}{1-\rho}} = \frac{1}{8} \left( \frac{\lvert b_{ij} - b^*_{ij} \rvert}{b^*_{ij}} \right)^2 \left( \frac{1}{W^2} \right)^{\frac{1}{1-\rho}}. 
\end{equation}

We take expectation on both sides of the above inequality, and use Jensen's inequality, then we have 
\begin{equation}
    \left( \expect*{ \frac{\lvert b_{ij} - b_{ij}^* \rvert}{b_{ij}^*} } \right)^2 \le \EE\left( \frac{\lvert b_{ij} - b^*_{ij} \rvert}{b^*_{ij}} \right)^2 \le 8 W^{\frac{2}{1-\rho}} \expect*{m(k)}. 
\end{equation}

Hence, to guarantee $\EE \frac{\lvert b_{ij} - b_{ij}^* \rvert}{b_{ij}^*} \le \epsilon < 1$, we need 
$8 W^{\frac{2}{1-\rho}} \expect*{m(k)} \le \epsilon^2$. 
Since $\EE [m(k)] \le (\rho')^k m(0)$, it suffices to have  
\begin{equation}
    (\rho')^k m(0) \le \frac{1}{8} \left( \frac{1}{W^2} \right)^{\frac{1}{1-\rho}} \epsilon^2, 
\end{equation}
which is equivalent to $k \ge { 2 \log{\frac{\sqrt{8 m(0)} W^{\frac{1}{1-\rho}}}{\epsilon}} } \Big/ {\log{\frac{n}{n - 1 + \rho}}}$. 

\end{proof}

\subsection{Discussion about BCD methods for CES utilities and more experimental results}

We showed that for linear utilities, the EG program can be solved efficiently by applying block coordinate descent algorithms. In contrast to linear utilities, there are some difficulties in computing market equilibria with CES ($\rho \in (0, 1)$) utilities by BCD-type methods. 

The key issue is that, while we can obtain a lower bound for $x_{ij}^*$ (for those $i, j: v_{ij} > 0$), this bound is too small to generate a reasonable (coordinate-wise) Lipschitz constant, since the lower bound appears in the the bound for $L_j$ (see \eqref{eq:bcdeg-ces-stepsize}).

Even though our algorithms are equipped with line search strategies, we find that they still struggle to find efficient stepsizes. 
We even find it hard to select an appropriate initial stepsize. 
Often, the stepsizes are too small to reach equilibrium, especially when the number of iterations is not very large. 
This makes \bcdegls\ and \pgls\ perform worse than \bcpr\ and \pr, because the latter do not need to choose stepsizes; they have (sublinear) convergence under stepsize $1$. 

This difficulty can be overcome when volatility of valuations is small - in this case, the lower bound of $x$ in \cref{eq:bcdeg-ces-stepsize} is larger. 

See \cref{fig:more-ces} for more experimental results on CES utilities. 
We used the model $v_{ij} = v_i v_j + \epsilon$ where $v_i \sim \mathcal{N}(1, \hat{v})$, $v_j \sim \mathcal{N}(1, \hat{v})$ and $\epsilon \sim \emph{uniform}(0, \hat{v})$, where $\hat{v}$ is a parameter to control volatility. Here, we set $n = m = 200$. 

\begin{figure}
    \centering
        \includegraphics[width=0.24\textwidth]{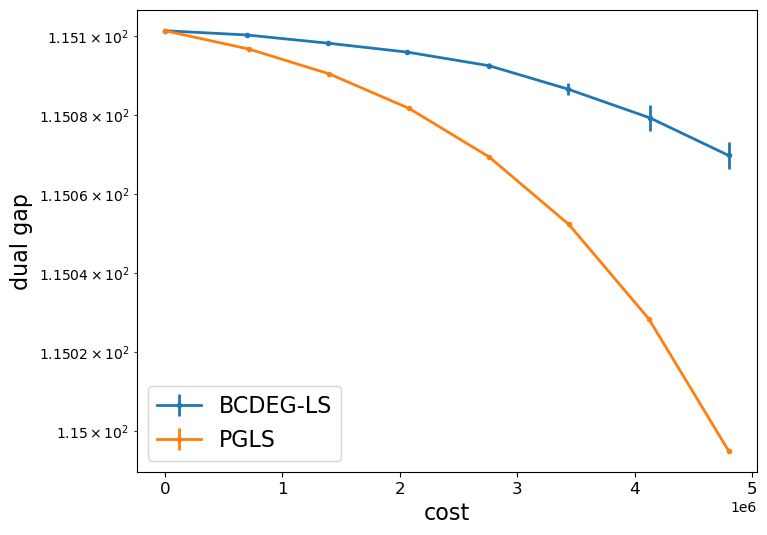}
        \includegraphics[width=0.22\textwidth]{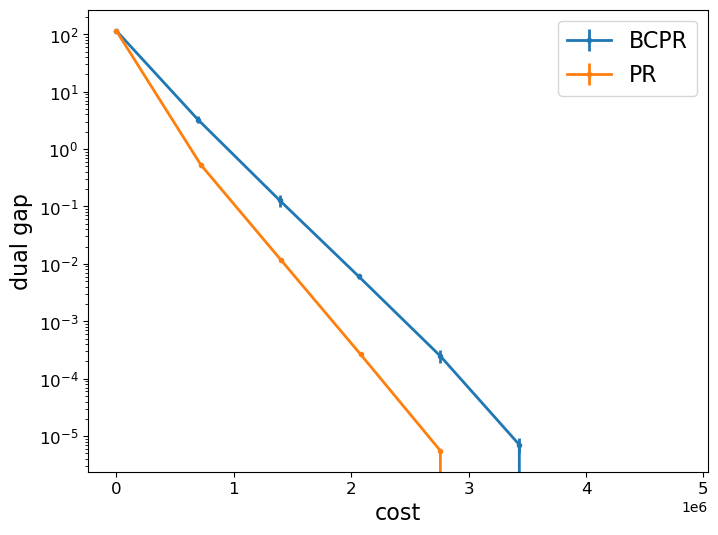}
        \includegraphics[width=0.24\textwidth]{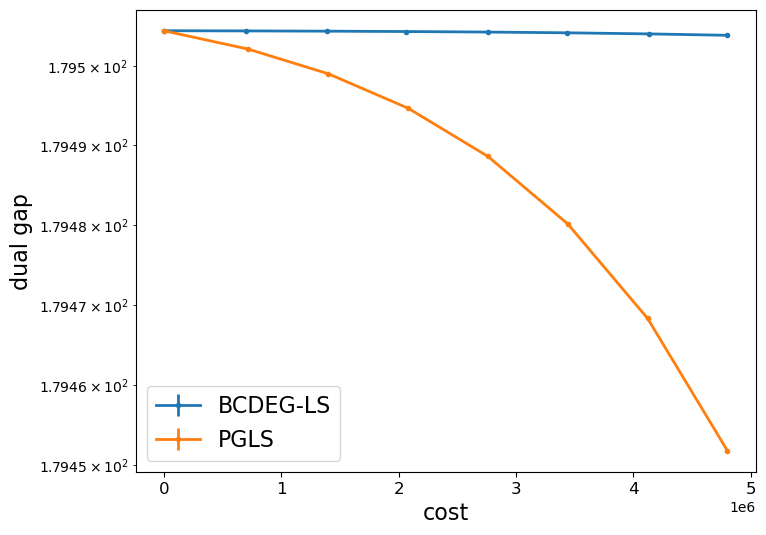}
        \includegraphics[width=0.22\textwidth]{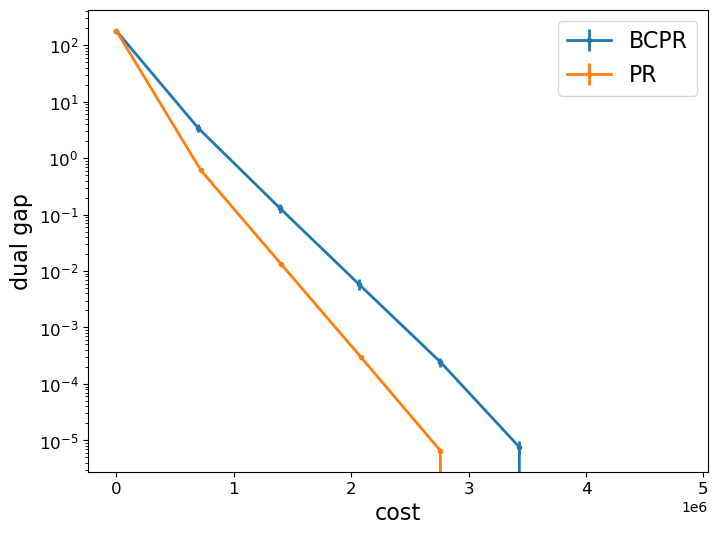}

        \includegraphics[width=0.24\textwidth]{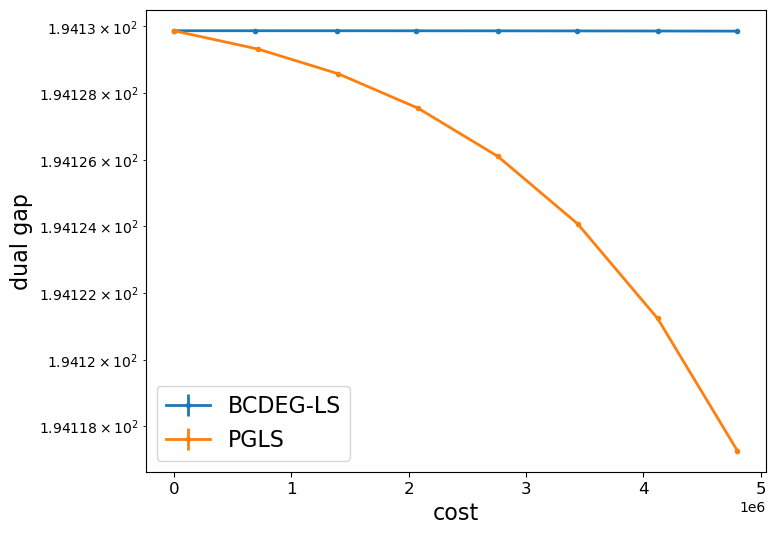}
        \includegraphics[width=0.22\textwidth]{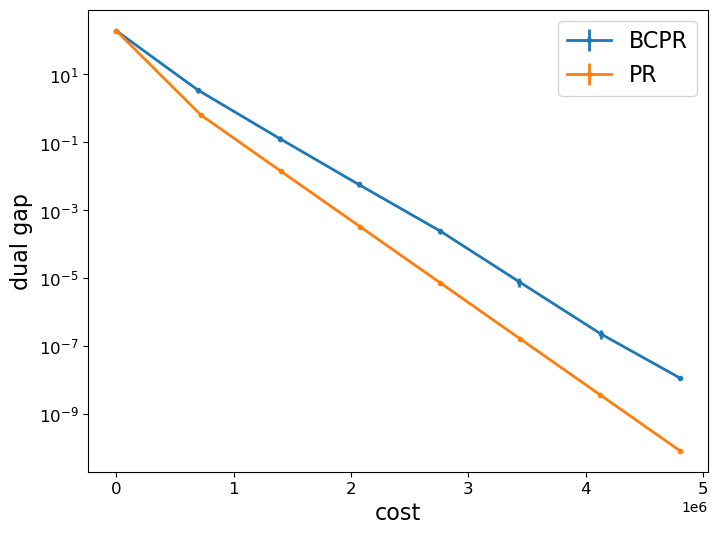}
        \includegraphics[width=0.24\textwidth]{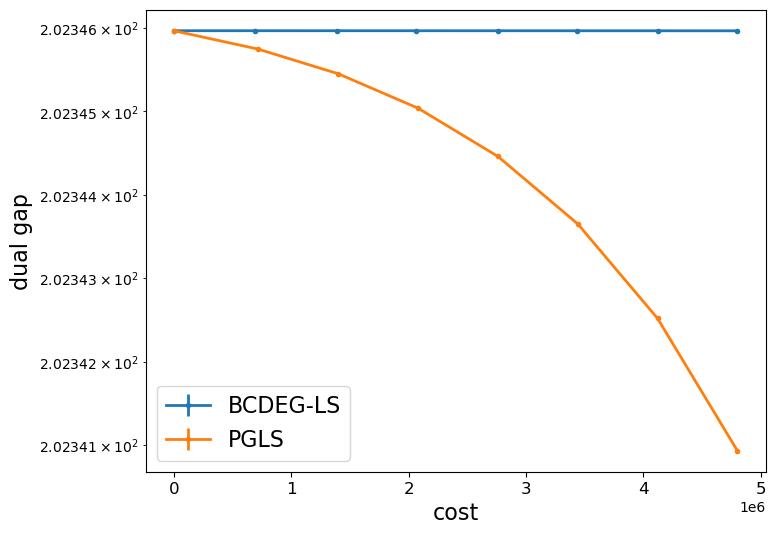}
        \includegraphics[width=0.22\textwidth]{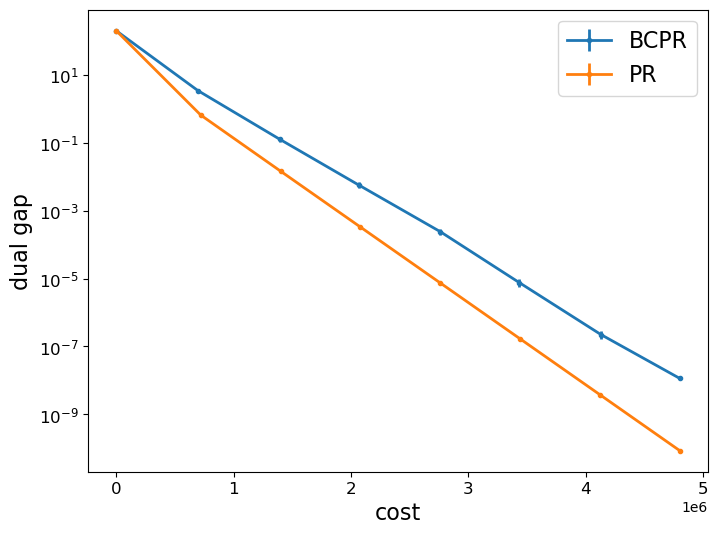}
        \vspace{-2mm}
    \caption{
        Performance on more simulated instances with CES utilities. ($\hat{v} = 0.4, 0.6, 0.8, 1.0$ from top to bottom). The plot setup is the same as in \cref{fig:compare-simulated-data}.
    }
    \label{fig:more-ces}
\end{figure}

\section{Discussion on wall-clock running times}

In this paper, we compare different algorithms based on the number of times they query for a buyer-item valuation (i.e. the number of accesses to cells of the valuation matrix). In this sense, if we count one query as one unit cost, then item-block algorithms (\bcdeg, \bcdegls) cost $n$ units in each iteration, buyer-block algorithms (\bcpr, \abcpr, \bcprls) cost $m$ units in each iteration, and full gradient algorithms (\pgls, \pr) cost $nm$ units in each iteration. Another way to compare these algorithms would be to compare wall-clock running times. 
In this section we justify our choice of measuring valuation accesses instead.


For the projected-gradient style methods (the full deterministic method and \bcdeg\ variants), they each perform one $n$-dimensional projection for every $n$ units of work, and thus all these methods are directly comparable in terms of our measure.
Similarly, all proportional-response-style algorithms perform the same amount of work for every $m$ queries made. Thus they are also directly comparable in terms of our measure.
When comparing projected-gradient-style methods to PR-style methods, there is a difference in that projection is required in the former. Yet in practice it is known that efficient implementations of the best projection-style algorithms tend to run in linear time as well
~\citep{condat2016fast}. 
Thus cross comparisons should also be fair.
The main reason why wall-clock might differ from our measure is due to efficiency of the implementation. 
All our experiments are implemented in python. For that reason, deterministic methods could have an advantage over block-coordinate methods in terms of wall clock, since NumPy and MKL enable highly-optimized matrix/vector operations. In turn, this means that deterministic methods can spend more time `in C' as opposed to `in python.'
If all algorithms are implemented without highly optimized matrix operations that exploit multi-core CPUs, experimental results on wall-clock running times will be consistent with our query-based complexity cost. 
Similarly, if they are all optimized fully to take advantage of hardware, parallelization, etc., we can still expect similar results. 




\end{document}